\newcommand\M{\mathsf{B}}
\newcommand{\nextMode}{\textsc{nextMode}}
\newcommand{\getRateFromEnv}{\textsc{senseCurrentRate}}
\newcommand{\updateProjection}{\textsc{updateProjection}}
\newcommand{\reduceComp}{\textsc{reduceComp}}
\tikzstyle{background}=[rectangle,fill=gray!10, inner sep=0.1cm, rounded corners=0mm]
\tikzstyle{background}=[rectangle,fill=gray!10, inner sep=0.1cm, rounded corners=0mm]
\tikzstyle{loc}=[draw,rectangle,minimum size=1.4em,inner sep=0em]
\tikzstyle{trans}=[-latex, rounded corners]
\tikzstyle{trans2}=[-latex, dashed, rounded corners]
\tikzstyle{player2}=[draw,dashed,minimum size=5mm]
\newif\if@restonecol
\definecolor{lightgray}{gray}{0.9}
\DeclareMathOperator{\interior}{int}
\newcommand{\norm}[1]{\|#1\|}
\mathchardef\breakingcomma\mathcode`\,
\newcommand{\point}[1]{{\overline{#1}}}
\newcommand{\px}{\point{x}}
\newcommand{\py}{\point{y}}
\newcommand{\vv}{\vec{v}}
\newcommand{\vb}{\vec{b}}
\newcommand{\vt}{\vec{t}}
\newcommand{\vr}{\vec{r}}
\newcommand{\Ext}{\text{\it Ext}}
\newcommand{\Vtx}[1]{\mathsf{vert}(#1)}
\newcommand{\ball}[2]{B_{#1}(#2)}
\newcommand{\vzero}{\vec{0}}
\newcommand{\set}[1]{\left\{ #1 \right\}}
\newcommand{\seq}[1]{\langle #1 \rangle}
\newcommand{\Rplus}{{\mathbb R}_{\geq 0}}
\newcommand{\Nat}{\mathbb N}
\newcommand{\Real}{\mathbb R}
\newcommand{\Aa}{\mathcal{A}}
\newcommand{\Hh}{\mathcal{H}}
\newcommand{\Ff}{\mathcal{F}}
\newcommand{\Rr}{\mathcal{R}}
\newcommand{\Tt}{\mathcal{T}}
\DeclareMathOperator*{\argmin}{arg\,min}
\newcommand{\zzero}{\mathbf{0}}
\newcommand{\sem}[1]{ [ \! [ {#1}  ]  \! ]} % semantic brackets
\newcommand{\Ashu}[1]{}%{\color{blue} (${\bf AT:}$ {#1})}}
\newcommand{\Umang}[1]{}%\colorbox{NavyBlue}{\textcolor{Goldenrod}{#1}}}
\newcommand{\Umangm}[1]{}%\colorbox{OliveGreen}{#1}}
\newcommand{\FRUNS}{\text{\it FRuns}}
\newcommand{\RUNS}{\text{\it Runs}}
\newcommand{\RUN}{\text{\it Run}}
\newtheorem{theorem}{Theorem}
\newtheorem{example}{Example}
\newtheorem{definition}{Definition}
\newtheorem{lemma}[theorem]{Lemma}
\newtheorem{proposition}[theorem]{Proposition}
\newtheorem{corollary}[theorem]{Corollary}
\newcommand{\RCH}{\mathcal{W}_\mathrm{Reach}}
\newcommand{\cmms}{\textsf{CMS}}
\newcommand{\bmms}{\textsf{BMS}}
\newcommand{\mmms}{\textsf{MMS}}
\begin{document}

\title{Bounded-Rate Multi-Mode Systems Based Motion Planning}

\numberofauthors{4} \author{ 
  \alignauthor \scalebox{0.8}{Devendra Bhave}\\
  \affaddr{IIT Bombay}
\email{\scalebox{0.8}{devendra@cse.iitb.ac.in}}
  \alignauthor \scalebox{0.8}{Sagar Jha}\\
  \affaddr{IIT Bombay}
  \email{\scalebox{0.8}{sagarjha@cse.iitb.ac.in}}
   \alignauthor \scalebox{0.8}{Shankara Narayanan Krishna}\\
  \affaddr{IIT Bombay}
  \email{\scalebox{0.8}{krishnas@cse.iitb.ac.in}}
  \and
     \alignauthor \scalebox{0.8}{Sven Schewe}\\
 \affaddr{University of Liverpool}
 \email{\scalebox{0.8}{sven.schewe@liverpool.ac.uk}}
     \alignauthor \scalebox{0.8}{Ashutosh Trivedi}\\
     \affaddr{IIT Bombay}
 \email{\scalebox{0.8}{trivedi@cse.iitb.ac.in}}
}

\date{\today}

\maketitle

\begin{abstract}
Bounded-rate multi-mode systems are hybrid systems that can switch among a
finite set of modes. Its dynamics is specified by a finite number of
real-valued variables with mode-dependent rates that can vary within given
bounded sets. Given an arbitrary piecewise linear trajectory, we study the
problem of following the trajectory with arbitrary precision, using motion
primitives given as bounded-rate multi-mode systems. We give an algorithm to
solve the problem and show that the problem is co-NP complete. We further prove
that the problem can be solved in polynomial time for multi-mode systems with
fixed dimension. We study the problem with dwell-time requirement and show the
decidability of the problem under certain positivity restriction on the rate
vectors. Finally, we show that introducing structure to the multi-mode systems leads to undecidability,
even when using only a single clock variable.
\end{abstract} 

\category{D.4.7}{Organization and Design}{Real-time systems and embedded
  systems} \category{B.5.2}{Design Aids}{Verification} 

\terms{Theory, Verification}

\keywords{Switched Systems, Motion Planning, Hybrid Automata}
%%%%%%%%%%%%%%%%%%%%%%%%%%%%%%%%%%%%%%%%%%%%%%%%%%%%%%%%%%%%%%%%%%%%%%%%

%%%%%%%%%%%%%%%%%%%%%%%%%%%%%%%%%%%%%%%%%%%%%%%%%%%%%%%%%%%%%%%%%%%%%%%%
% INTRODUCTION
%%%%%%%%%%%%%%%%%%%%%%%%%%%%%%%%%%%%%%%%%%%%%%%%%%%%%%%%%%%%%%%%%%%%%%%%
\section{Introduction}
\label{sec:introduction}
% Motivation
Hybrid automata~\cite{ACHH92} are a natural and expressive formalism to model systems that
exhibit both discrete and continuous behavior. 
Intuitively, hybrid automata extend the discrete system modeling framework of
extended finite state machines with continuous variables modeled along
continuous dynamical systems such that the flow of continuous variables in each
state is modeled as a system of first-order ordinary differential equations.
Discrete jumps in the values of the variables are  modeled via resets
on the transitions of the automata. 
However, the applications of hybrid automata in analyzing cyber-physical systems
have been rather limited due to undecidability~\cite{HKPV98} of simple
verification problems such as reachability. 
This drawback of hybrid automata has fueled the investigation of the
so-called compositional methodology~\cite{deAH01,LeNP12} to design complex
system by sequentially composing well-understood lower-level components.
This methodology has, for example, been used in the context of the \emph{motion
planning} problem for mobile robots, where the task is to move a robot along a
pre-specified trajectory with arbitrary precision by sequentially composing a
set of well-studied simple motion primitives, such as ``move left'', ``move right''
and  ``go straight''.  
In this paper, we investigate the motion planning problem for systems, whose motion
primitives are given as constant-rate vectors with uncertainties.

% contributions
We consider bounded-rate multi-mode systems~\cite{AFMT13} that can be considered
as \emph{constant-rate multi-mode systems~\cite{ATW12} with uncertainties}.
These systems consist of a finite set of continuous variables, whose dynamics is
given by mode-dependent constant-rates that can vary within given bounded sets. 
In such systems, the dynamics of the system can be viewed as a two-player
game between a scheduler and the environment.
In each step, the scheduler chooses a mode and time duration and the
environment chooses a rate vector for that mode from the given bounded set.
The system evolves with that rate for the chosen time. The game continues in
this fashion from the resulting state. 
Alur, Trivedi, and Wojtczak~\cite{ATW12} considered constant-rate multi-mode
systems and showed that the reachability problem---deciding the reachability of
a specified state while staying in a given safety set---and the schedulability
problem---deciding the existence of a non-Zeno control so that the system
always stays in a given bounded and convex safety set---for this class of systems
can be solved in polynomial time.  
Alur et al.~\cite{AFMT13} showed that the existence of robust control for the
schedulability problem for bounded-rate multi-mode systems is, although
intractable (co-NP-complete), decidable. 
However, they left the decidability of the robust reachability problem for this
class of systems open. 

The robust reachability problem for bounded-rate multi-mode system is defined as
follows: given a bounded-rate multi-mode system, a starting state, and a target
state, decide whether it is possible to reach the target state from the starting
state with arbitrary precision. 
The key result of this paper is the decidability of the robust reachability
problem for bounded-rate multi-mode systems. 
We show that the problem is co-NP complete. 
Moreover, we show that it is fixed parameter tractable, i.e., if the number of
dimensions is fixed, then the robust reachability problem can be solved in
polynomial time.

Our existence proofs are constructive: in case of a positive answer, we can also
give a dynamic schedule that, given a tolerance level $\varepsilon {>} 0$,
guarantees reachability of an open ball of $\varepsilon$ radius around the
target state in finitely many steps.  
It is then simple to extend these results to different path planning problems.
We discuss the extension of the robust reachability problem to motion planning, and
% , to \emph{stable reachability}, where we want to reach an $\varepsilon$ ball
% around a target and then stay there for ever, and to their combination. 
% We notice that the solution of the robust reachability problem can be extended
% to decide the robust motion control problem where the goal is to follow a given
% piecewise linear trajectory within a given precision. 
exploit our results to provide an alternative and simpler proof for the decidability of the
robust schedulability problem. We also show that this problem can be solved in
polynomial time for systems with fixed dimension, improving the
result~\cite{AFMT13} where authors only give a polynomial algorithm to decide
$2$-dimensional systems. 
We notice that these results can be combined to \emph{stable reachability},
where the goal is to first reach an $\varepsilon$ ball around a target, and then
stay in this ball for ever. 

\begin{figure}[t]
\begin{center}
\scalebox{1}{
\begin{tikzpicture}[->,>=stealth',shorten >=1pt,auto,node distance=0.5cm,
  semithick]
    \draw[-,dotted] (-3, 0) -- (3, 0);
   \draw[-,dotted] (0, -2) -- (0, 3);

   \draw node (r1) at (-1,-1){};
   \draw node (r2) at (1,-1){};
   \draw node (r3) at (-1.2,1.6){};
   \draw node (r4) at (1.2,1.6){};

   \draw[fill=black] (r1) circle(0.05);
   \draw[fill=black] (r2) circle(0.05);
   \draw[fill=black] (r3) circle(0.05);
   \draw[fill=black] (r4) circle(0.05);

   \node [below of=r1] {(-1, -1)};
   \node [below of=r2] {(1, -1)};
   \node [above of=r3] {$(\frac{-6}{5}, \frac{8}{5})$};
   \node [above of=r4] {$(\frac{6}{5}, \frac{8}{5})$};
   \node [above left of=r1, node distance=0.7cm] {$m_1 = \{\vr_1\}$};
   \node [above right of=r2, node distance=0.7cm] {$m_2 = \{\vr_2\}$};
   \node [below of=r3] {$\vr_3$};
   \node [below of=r4] {$\vr_4$};
   \node at (0, 1.8) {$m_3$};

   \draw[->, color=red] (0,0) -- (-1,-1);
   \draw[->, color=green] (0,0) -- (1,-1);
   \draw[->, color=blue] (0,0) -- (-1.2,1.6);
   \draw[->, color=blue] (0,0) -- (1.2,1.6);
   \draw[line width=2pt, -, color=blue] (-1.2,1.6) -- (1.2,1.6);
   
   \fill[blue!20,rounded corners, fill opacity=0.2] (-0.3, -0.3) rectangle (0.3,
   3.7);

   \draw[thick] node (starto) at (0, 0) {$$};
   \draw[fill=black!40] (starto) circle(0.05);
   \node [left of=starto] {$\px_0$};

   \draw[thick] node (end) at (0, 3) {$$};
   \draw[fill=black!40] (end) circle(0.05);
   \node [left of=end] {$\px_t$};
   \draw[dashed] (end) circle(0.25);
   
%    \draw node (r3) at (1.2,1.6){$(\frac{6}{5}, \frac{8}{5})$};
%    \draw node (r4) at (-1.2,1.6)
\end{tikzpicture}}
\caption{Bounded-rate multi-mode system with three modes and two variables.} 
\label{fig:intro-example}
\end{center}
\end{figure}
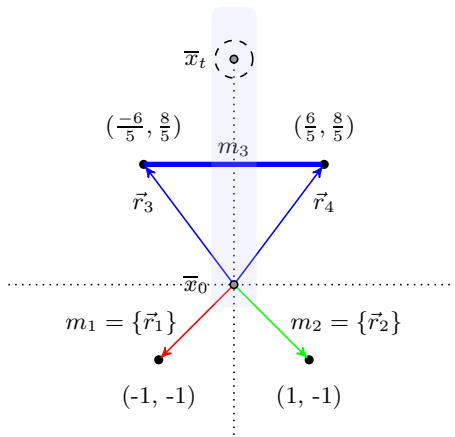

\begin{example}
An example of a bounded-rate multi-mode system with two variables, say $x$ and
$y$, and three modes $m_1$, $m_2$, and $m_3$, is given in the
Figure~\ref{fig:intro-example}.         
Modes $m_1$ and $m_2$ are precise, while mode $m_3$ is uncertain, and
environment can give any rate vector that is a convex combination of rate
vectors $\vr_3$ and $\vr_4$.
The safety set is given as the blue rectangle. 
The reachability problem here is to decide whether, for every
$\varepsilon > 0$,  scheduler has a sequence of time delays and choice of modes
such that no matter what rate is given by the environment the system reaches a
state in $\varepsilon$-neighborhood of $\px_t$. 
The schedulability problem asks whether the scheduler has an infinite
non-Zeno sequence of choices of modes and time delays such that the system
always stays within the safety set, while stable reachability problem
asks for  a strategy to first reach an $\varepsilon$-neighborhood of $\px_t$ and
then to stay in that neighborhood using a non-Zeno strategy.
\end{example}

We also consider the reachability problem with minimum dwell-time requirement
and show that in the absence of the safety set the problem is undecidable for
arbitrary bounded-rate multi-mode systems, but turns out to be decidable for
systems with non-negative rates. 
We also study the problem of the existence of discrete control where scheduler
is required to choose modes at times multiple of a given sampling rate. 
We show that the reachability problem is EXPTIME-complete for this class of
schedulers.  
Finally, we show that adding very simple structure to bounded-rate multi-mode
systems  by introducing clock variables (variables with precise uniform rates in
each mode)---that appear as guards on the transitions and can be reset on
the discrete transitions---leads to undecidability of the robust reachability
problem. 

% Applications
Our algorithm can be combined with algorithms to explore non-convex
high-dimensional spaces, such as rapidly exploring random tree (RRT)
algorithm~\cite{rrt-plan}, to yield robust control for such systems.  
Intuitively, RRT algorithm can return a path from the source to the destination
by random exploration of the state space, which can be robustly followed by
repeated applications of our algorithm in context of systems modeled as
bounded-rate multi-mode systems.

% Related Work
For a review of related work on constant-rate multi-mode systems we refer the
reader to~\cite{AFMT13,ATW12}.
Le Ny and Pappas~\cite{LeNP12} initiated work on the sequential composition of
robust controller specifications. 
In this light, our results can be understood as an effort to analyze complexity of this problem
for the system of relatively simple dynamics. 
There is a huge body of work on path-following and trajectory tracking of
autonomous robots under uncertainty. 
For a detailed survey we refer the reader to~\cite{AH07}.
There is a vast literature on decidable subclasses of hybrid
automata~\cite{ACHH92,BBM98}. 
Most notable among these classes are initialized rectangular hybrid
automata~\cite{HKPV98}, two-dimensional piecewise-constant derivative
systems~\cite{AMP95}, and 
%discrete-time control for hybrid automata~\cite{HK99}, and
timed automata~\cite{alurDill94}.

% Outline
The paper is organized as follows. 
We begin by formal definition of the problem in the next section, followed by
the proof of our key result in Section~\ref{sec:key_result}.
We present some applications of our main algorithm to solve schedulability,
stable reachability, and path following problems in Section~\ref{apps}. 
In Section~\ref{sec:dwell} we present results regarding bounded-rate multi-mode
systems with discrete scheduler and dwell-time requirements.
We conclude the paper by discussing results on generalized model in
Section~\ref{sec:generalized}.
%, before presenting a brief conclusion in
%Section~\ref{sec:conclusion} discussing potential future directions for the work.  

\section{Robust Reachability Problem}
Prior to formally introducing the robust reachability problem for multi-mode
systems, we set the notation used in the rest of the paper and recall some
standard results. 

\subsection{Preliminaries}
{\bf Points and Vectors.} 
Let $\Real$ be the set of real numbers.
We represent the states in our system as points in $\Real^n$ that is equipped
with the standard \emph{Euclidean norm} $\norm{\cdot}$.
We denote points in this state space by $\px, \py$,  vectors by $\vr, \vv$, and 
the $i$-th coordinate of point $\px$ and vector $\vr$ by $\px(i)$ and $\vr(i)$,
respectively. 
We write $\vzero$ for a vector with all its coordinates equal to $0$; 
its dimension is often clear from the context.
The distance $\norm{\px, \py}$ between points $\px$ and $\py$ is defined as
$\norm{\px - \py}$. 
%For two vectors ${\vv_1, \vv_2 \in \Real^n}$, we write $\vv_1 \dotprod \vv_2$
%to denote their dot product, defined as $\sum_{i=1}^n \vv_1(i)\cdot\vv_2(i)$.

{\bf Boundedness and Interior.} 
We denote an {\em open ball} of radius $d \in \Rplus$ centered at $\px$ as
$\ball{d}{\px} {=} \set{\py {\in} \Real^n \::\: \norm{\px,\py} < d}$.
We denote a closed ball of radius $d \in \Rplus$ centered at $\px$ as
$\overline{\ball{d}{\px}}$. 
We say that a set $S \subseteq \Real^n$ is {\em bounded} if there exists
$d \in \Rplus$ such that, for all $\px, \py \in S$, we have
$\norm{\px,\py} \leq d$.
The {\em interior} of a set $S$, $\interior(S)$, is the set of all points
$\px \in S$, for which there exists $d > 0$ s.t. $\ball{d}{\px} \subseteq S$.

{\bf Convexity.} A point $\px$ is a \emph{convex
  combination} of a finite set of points $X = \set{\px_1, \px_2, \ldots, \px_k}$ if
there are $\lambda_1, \lambda_2, \ldots, \lambda_k \in [0, 1]$ such that
$\sum_{i=1}^{k} \lambda_i = 1$ and $\px = \sum_{i=1}^k \lambda_i \cdot \px_i$.
The \emph{convex hull} of $X$ is the set of all points
that are convex combinations of points in $X$.
We say that $S \subseteq \Real^n$ is {\em convex} iff, for all
$\px, \py \in S$ and all $\lambda \in [0,1]$, we have
$\lambda \px + (1-\lambda) \py \in S$ and moreover,
$S$ is a {\em convex polytope} if it is bounded and there exists $k \in \Nat$, a matrix $A$ of 
size $k \times n$ and a vector $\vb \in \Real^k$ such that $\px \in S$ iff
$A\px \leq \vb$. 
%We write $\rows{M}$ for the number of rows in a matrix $M$, here $\rows{A}=k$. 

%
A point $\px$ is a \emph{vertex} of a convex polytope $P$ if it is
not a convex combination of two distinct (other than $\px$) points in $P$. 
For a convex polytope $P$ we write $\Vtx{P}$ for the finite set of points that
correspond to the vertices of $P$.  
Each point in $P$ can be written as a convex combination of the points in
$\Vtx{P}$. In other words, $P$ is the \emph{convex hull} of $\Vtx{P}$.
%% From standard properties of polytopes, it follows that for every convex polytope
%% $P$ and every vertex $\pc$ of $P$, there exists a vector $\vv$ such that
%% $\vec{v}\cdot \pc=d$ and $\vec{v}\cdot \px>d$  for all $\px\in P \setminus
%% \set{\pc}$ for some $d$.
%% We call such a vector $\vv$ a \emph{supporting hyperplane} of the polytope
%% $P$ at $\pc$. 

\subsection{Multi-Mode Systems}
A multi-mode system is a hybrid system, or rather a switched system, equipped
with finitely many \emph{modes} and finitely many real-valued \emph{variables}. 
A configuration is described by the values of the variables. These values change as
time elapses at the rates determined by the modes being used. 
The choice of the rates is nondeterministic, which introduces
a notion of adversarial behavior.
% Formally,
% \vspace{-1em} 
\begin{definition}[Multi-Mode Systems]
  \label{def:BMMS}
  A multi-mode system is a tuple $\Hh = (M, n, \Rr)$ where:
  %\begin{itemize}
  %\item 
    $M$ is the finite nonempty set of \emph{modes}, 
  %\item
    $n$ is the number of continuous variables, and 
  %\item
    $\Rr : M \to 2^{\Real^n}$ is the rate-set function that, for each mode $m \in 
    M$, gives a set of vectors.
  %\end{itemize}
We often write  $\vr \in m$ for $\vr\in \Rr(m)$ when $\Rr$ is clear
from the context. 
\end{definition}

A finite \emph{run} of a multi-mode system $\Hh$ is a finite sequence of states,
timed moves, and rate vector choices  
$\varrho = \seq{\px_0, (m_1, t_1), \vr_1, \px_1,  \ldots, (m_k, t_k), \vr_k,
  \px_k}$ 
s.t., for all $1 \leq i \leq k$, we have $\vr_i \in \Rr(m_i)$ and   
$\px_i = \px_{i-1} + t_i \cdot \vr_i$.
For such a run $\varrho$ we say that $\px_0$ is the \emph{starting state}, while
$\px_k$ is its \emph{last state}.
An infinite run is defined in a similar manner. 
We write $\RUNS$ and $\FRUNS$ for the set of infinite and finite runs of $\Hh$,
and $\RUNS(\px)$ and $\FRUNS(\px)$  for the set of infinite and finite runs of $\Hh$
that start from $\px$.  

An infinite run $\seq{\px_0, (m_1, t_1), \vr_1, \px_1, (m_2, t_2),
  \vr_2,  \ldots}$ is \emph{Zeno} if $\sum_{i=1}^{\infty} t_i < \infty$.
Given a set $S \subseteq \Real^n$ of safe states, we say that a  run 
$\seq{\px_0, (m_1, t_1), \vr_1, \px_1, (m_2, t_2), \vr_2,  \ldots, (m_k,
t_k), \vr_k, \px_k}$ is $S$-safe if $\px_i \in S$ for all $0 {\leq} i {\leq} k$;
and for all $0 {\leq} i {<} k$ we have that $\px_i +
t \cdot \vr_{i+1} \in S$ for all $t \in [0, t_{i+1}]$, assuming $t_0 = 0$. 
Notice that, if $S$ is a convex set and $\px_i \in S$ for all $i\ge 0$, then this holds iff $\px_i \in S$ for
all $0 {\leq} i {\leq} k$.
% $i \geq 0$ and for all $t \in [0, t_{i+1}]$ we have that $\px_i +
% t \cdot \vr_{i+1} \in S$. 
% %The concept of $S$-safety for finite runs is defined in a similar manner.
Sometimes we simply call a run safe when the safety set is clear from the context. 

We formally give the semantics of a multi-mode system $\Hh$ as a
turn-based two-player game between two players, \emph{scheduler} and
\emph{environment}, who choose their moves to construct a run of the system.   
The system starts in a given starting state $\px_0 \in \Real^n$. At each turn, the
scheduler chooses a timed move, a pair $(m, t) \in M \times \Real_{>0}$ consisting
of a mode and a time duration, and the environment chooses a rate vector 
$\vr \in m$ and as a result the system changes its state from $\px_0$ to
the state $\px_1 = \px_0 + t \cdot \vr$ in $t$ time units following the linear
trajectory according to the rate vector $\vr$.
From the next state, $\px_1$, the scheduler again chooses a timed move and the
environment an allowable rate vector, and the game continues forever in this
fashion.
The focus of this paper is on robust \emph{reachability} problem where, given a
starting state $\px_0$, a target vertex $\px_t$, a bounded and convex safety set
$S$ and tolerance $\varepsilon > 0$, the goal of the scheduler is to visit a
state in an open ball of radius $\varepsilon$ centered at $\px_t$ via an $S$-safe run.   
The goal of the environment is the opposite.

Given a bounded and convex safety set $S$ and tolerance $\varepsilon {>} 0$,  we
define the \emph{robust reachability objective} $\RCH^S(\px_t, \varepsilon)$ as the set of
infinite runs of $\Hh$ that visit a state in $\ball{\varepsilon}{\px_t}$.  
In a reachability game the winning objective of the scheduler is to make sure
that the constructed run of a system belongs to $\RCH^S(\px_t, \varepsilon)$, while the
goal of the environment is the opposite. 
The choice selection mechanism of the players is typically defined as
strategies. 
A \emph{strategy} $\sigma$ of the scheduler is function  $\sigma{:} \FRUNS {\to} M
{\times} \Rplus$ that gives a timed move for every history of the game.
A strategy $\pi$ of the environment is a function $\pi: \FRUNS \times (M \times
\Rplus) \to \Real^n$ that chooses an allowable rate for a given history of the
game and choice of the scheduler.
%% We say that a strategy is \emph{positional} if it suggests the same action for
%% all runs with common last state.
 We write $\Sigma$ and $\Pi$ for the set of strategies of the scheduler and the
 environment, respectively.

Given a starting state $\px_0$ and a strategy pair $(\sigma, \pi) \in \Sigma
\times \Pi$ we define the unique run $\RUN(\px_0, \sigma, \pi)$ 
starting from $\px_0$ as 
\[
\RUN(\px_0, \sigma, \pi)=\seq{\px_0, (m_1, t_1), \vr_1, \px_1, (m_2, t_2),
  \vr_2,  \ldots}
\]
where, for all $i{\ge} 1$, $(m_i, t_i) = \sigma(\seq{\px_0, (m_1, t_1), \vr_1, \px_1, \ldots,\px_{i-1}})$ and 
$\vr_i=\pi(\seq{\px_0, (m_1, t_1), \vr_1, \px_1, \ldots,\px_{i-1},m_i, t_i})$ and
$\px_i= \px_{i-1} + t_i\cdot \vr_i$.
The scheduler wins the game if there is a $\sigma \in \Sigma$ such that, for all $\pi \in \Pi$,
we get $\RUN(\px_0, \sigma, \pi) \in \RCH^S(\px_t, \varepsilon)$.   
Such a strategy $\sigma$ is \emph{winning}.  
Similarly, the environment wins the game if there is $\pi \in \Pi$ such that
for all $\sigma \in \Sigma$ we have $\RUN(\px_0, \sigma, \pi) \not \in
\RCH^S(\px_t, \varepsilon)$.
Again, $\pi$ is called \emph{winning} in this case.
If a winning strategy for scheduler exists, we say that the state $\px_t$ is
$\varepsilon$-reachable from the state $\px_0$ for given safety set $S$
and tolerance $\varepsilon$. 
We also say that the state $\px_t$ is {\em robustly reachable} from $\px_0$ if
it is $\varepsilon$-reachable for all $\varepsilon > 0$. 
The following is the main algorithmic problem studied in this paper.
\begin{definition}[Robust Reachability]
  \label{def:schedulability}
  Given a multi-mode system $\Hh$, a convex safety set $S$,  a starting
  state $\px_0 \in \interior(S)$, and a target state $\px_t \in \interior(S)$,
  decide whether $\px_t$ is robustly reachable from $\px_0$. 
\end{definition}

To algorithmically decide the robust reachability problem, we need to restrict
the range of $\Rr$ and the domain of the safety set $S$ in a robust reachability
game on a multi-mode system. 
The most general model that we consider is the bounded-rate multi-mode systems
(\bmms{}).

\begin{definition}[Bounded-Rate Systems] 
A bounded-rate multi-mode system (\bmms{})is multi-mode system $\Hh = (M, n,
\Rr)$ such that $\Rr(m)$ is a convex polytope for every $m \in M$. 
We also assume that the safety set $S$ is specified as a convex polytope.
\end{definition}

For every mode $m_i \in M$ of a \bmms{} we assume an arbitrary but fixed
ordering on the vertices of $\Rr(m)$. 
By exploiting the notations slightly, it allows us to write $\Rr(m_i)(j)$ for
the rate vector corresponding to $j$-th vertex of mode $m_i$. 
When there is no confusion, we also write $\Rr(i)(j)$ for $\Rr(m_i)(j)$.

In our proofs we often refer to another variant of multi-mode systems, in which
there are only a fixed number of different rates in each mode (i.e., $\Rr(m)$ is
finite for all $m\in M$).   
We call such a multi-mode system \emph{multi-rate multi-mode systems}
(\mmms{}). 
Finally, a special form of \mmms{} are  \emph{constant-rate multi-mode systems}
(\cmms{})~\cite{ATW12}, in which $\Rr(m)$ is a singleton for all $m\in M$.
We sometimes use $\Rr(m)$ to refer to the unique element of the set $\Rr(m)$ in
a \cmms{}.
The concepts related to the robust reachability games for \bmms{} and \mmms{} are
already defined for multi-mode systems. 
Similar concepts also hold for \cmms{} but with no real choice for the
environment. 
Examples of \cmms{}, \bmms{}, and \mmms{} are shown in
Figure~\ref{fig:example1}. 

We say that a \cmms{} $H = (M, n, R)$ is an instance of a multi-mode system 
$\Hh = (M, n, \Rr)$ if for every $m \in M$ we have that $R(m) \in \Rr(m)$.
For example, the \cmms{} shown in Figure~\ref{fig:example1}.(a) is an instance
of \bmms{} in Figure~\ref{fig:example1}.(b).
We denote the set of instances of a multi-mode system $\Hh$ by $\sem{\Hh}$. 
Notice that for a \bmms{} $\Hh$, the set $\sem{\Hh}$ of its instances is
uncountable (unless the \bmms {} is a \cmms {}), while for an \mmms{} $\Hh$ the set $\sem{\Hh}$ is finite,
and exponential in the size of $\Hh$. 
We say that an \mmms{} $(M, n, \Rr')$ is the \emph{extreme-rate} \mmms{} of a
\bmms{} $(M, n, \Rr)$ if $\Rr'(m) = \Vtx{\Rr(m)}$. 
The \mmms{} in Figure~\ref{fig:example1}.(c) is the extreme-rate \mmms{} for the
\bmms{} in Figure~\ref{fig:example1}.(b) 
We write $\Ext(\Hh)$ for the extreme-rate \mmms{} of the \bmms{} $\Hh$.
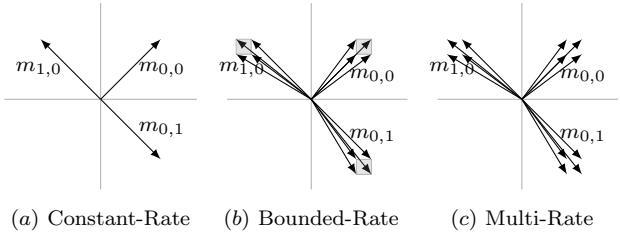
\begin{figure}[t]
  \centering
  {\small
  \begin{tikzpicture}[node distance=2cm,scale=0.4]
    \tikzstyle{lines}=[draw=black!30,rounded corners]
    \tikzstyle{vectors}=[-latex, rounded corners]
    \tikzstyle{rvectors}=[-latex,very thick, rounded corners]

    \draw[lines] (-3.2,0)--(3.2,0);
    \draw[lines] (0, 3.2)--(0,-3);
    \draw[vectors] (0, 0) --node[right]{$m_{0,0}$} (2, 2) node[left]{$$};
    \draw[vectors] (0, 0) --node[right]{$m_{0,1}$} (2, -2)node[right]{$$};
    \draw[vectors] (0, 0) --node[left]{$m_{1,0}$} (-2, 2)node[right]{$$};

    \draw (0, -4) node {$(a)$ Constant-Rate};

    \draw[lines] (4.2,0)--(10.2,0);
    \draw[lines] (7, 3.2)--(7,-3);
    
    \draw[vectors] (7, 0) --node[right]{$$} (8.5, 1.5) node[left]{$$};
    \draw[vectors] (7, 0) --node[right]{$$} (8.5, 2) node[left]{$$};
    \draw[vectors] (7, 0) --node[right]{$$} (9, 2) node[left]{$$};
    \draw[vectors] (7, 0) --node[right]{$m_{0,0}$} (9, 1.5) node[left]{$$};
    \draw[fill=black!50,opacity=0.2] (8.5, 1.5) -- (8.5, 2) -- (9, 2) -- (9,
    1.5) -- (8.5, 1.5); 
    
    \draw[vectors] (7, 0) --node[right]{$$} (8.5, -2.5)node[right]{$$};
    \draw[vectors] (7, 0) --node[right]{$$} (8.5, -2)node[right]{$$};
    \draw[vectors] (7, 0) --node[right]{$$} (9, -2)node[right]{$$};
    \draw[vectors] (7, 0) --node[right]{$m_{0,1}$} (9, -2.5)node[right]{$$};
    \draw[fill=black!50,opacity=0.2] (8.5, -2.5) -- (8.5, -2) -- (9, -2) -- (9,
    -2.5) -- (8.5, -2.5); 
    
    \draw[vectors] (7, 0) --node[left]{$$} (4.5, 1.5)node[right]{$$};
    \draw[vectors] (7, 0) --node[left]{$m_{1,0}$} (4.5, 2)node[right]{$$};
    \draw[vectors] (7, 0) --node[left]{$$} (5, 2) node[right]{$$};
    \draw[vectors] (7, 0) --node[left]{$$} (5, 1.5) node[right]{$$};
    \draw[fill=black!50,opacity=0.2] (4.5, 1.5) -- (4.5, 2) -- (5, 2) -- (5,
    1.5) -- (4.5, 1.5); 
    
    \draw (7, -4) node {$(b)$ Bounded-Rate};

   \draw[lines] (11.2,0)--(17.2,0);
    \draw[lines] (14, 3.2)--(14,-3);

    \draw[vectors] (14, 0) --node[right]{$$} (15.5, 1.5) node[left]{$$};
    \draw[vectors] (14, 0) --node[right]{$$} (15.5, 2) node[left]{$$};
    \draw[vectors] (14, 0) --node[right]{$$} (16, 2) node[left]{$$};
    \draw[vectors] (14, 0) --node[right]{$m_{0,0}$} (16, 1.5) node[left]{$$};
    
    \draw[vectors] (14, 0) --node[right]{$$} (15.5, -2.5)node[right]{$$};
    \draw[vectors] (14, 0) --node[right]{$$} (15.5, -2)node[right]{$$};
    \draw[vectors] (14, 0) --node[right]{$$} (16, -2)node[right]{$$};
    \draw[vectors] (14, 0) --node[right]{$m_{0,1}$} (16, -2.5)node[right]{$$};
    
    \draw[vectors] (14, 0) --node[left]{$$} (11.5, 1.5)node[right]{$$};
    \draw[vectors] (14, 0) --node[left]{$m_{1,0}$} (11.5, 2)node[right]{$$};
    \draw[vectors] (14, 0) --node[left]{$$} (12, 2)node[right]{$$};
    \draw[vectors] (14, 0) --node[left]{$$} (12, 1.5)node[right]{$$};

    \draw (14, -4) node {$(c)$ Multi-Rate };

  \end{tikzpicture}
}
  \vspace{-2em}
  \caption{ \label{fig:example1} Restricted Multi-mode Systems}
  \vspace{-1em}
\end{figure}

The following theorem is the key observation of the paper. 
\begin{theorem}
  \label{thmbms}
  Given a \bmms{} $\Hh = (M, n, \Rr)$, convex safety set $S$, starting state 
  $\px_0 \in \interior(S)$ and target state $\px_t \in \interior(S)$, the target
  state $\px_t$ is robustly reachable if and only if for every $\cmms{}$ in
  $\sem{\Ext(\Hh)}$ the state $\px_t$ is reachable from $\px_0$.
\end{theorem}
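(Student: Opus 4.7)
The plan is to prove the two implications separately, with the backward direction being substantially more involved. Throughout I will use that \cmms{} reachability amounts to feasibility of the linear program $\sum_m t_m \Rr(m) = \px_t - \px_0$ with $t_m \geq 0$, and that when $\px_0, \px_t \in \interior(S)$ any LP-feasible schedule can be interleaved finely enough to keep the trajectory inside the convex polytope $S$ (as in Alur, Trivedi, and Wojtczak).

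For the forward direction, I would fix any instance $I \in \sem{\Ext(\Hh)}$ and consider the positional environment strategy $\pi^I$ in $\Hh$ that always responds to the scheduler's choice of mode $m$ with the vertex rate $\Rr(m)(I(m))$. Robust reachability gives, for every $\varepsilon > 0$, an $S$-safe finite run against $\pi^I$ ending in $\ball{\varepsilon}{\px_t}$; its sequence of mode-time pairs is an $S$-safe schedule in the \cmms{} indexed by $I$ ending within $\varepsilon$ of $\px_t$. Since the set of states exactly reachable from $\px_0$ in a \cmms{} equals the intersection of $S$ with the closed affine cone $\px_0 + \mathrm{cone}\{\Rr(m)(I(m)) : m \in M\}$, letting $\varepsilon \to 0$ places $\px_t$ itself in that set and yields exact reachability in $I$.

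For the backward direction, I would observe first that the hypothesis is equivalent to $\px_t - \px_0 \in C^I := \mathrm{cone}\{\Rr(m)(I(m)) : m \in M\}$ for every vertex assignment $I$, hence $\px_t - \px_0 \in \bigcap_I C^I$, which is a convex cone. The scheduler's strategy is adaptive and maintains the invariant $\px_t - \py \in \bigcap_I C^I$ at every visited $\py$. At such a $\py$, the scheduler chooses a mode $m$ and a small duration $\delta$ such that, for every environment rate $\vr \in \Rr(m)$, the new residual $\px_t - \py - \delta \vr$ stays in $\bigcap_I C^I$; existence of such a move should follow from the fact that each $\Rr(m)$ sits inside every $C^I$ (its vertices certainly do) together with small-$\delta$ continuity. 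A potential argument on the norm of the residual then yields convergence into $\ball{\varepsilon}{\px_t}$ after finitely many steps, and safety in $S$ follows by keeping $\delta$ small enough that each straight-line segment stays inside the convex polytope $S$.

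The main obstacle is guaranteeing the existence of an invariant-preserving move at every $\py$, and extracting a uniform lower bound on the per-step potential decrease. The delicate case is when $\px_t - \py$ lies on the boundary of $\bigcap_I C^I$: the environment's perturbation by some rate in $\Rr(m)$ may push the residual outside $C^I$ for some $I$. To control this, I would combine a Farkas / separating-hyperplane argument with the strict interior assumption $\px_0, \px_t \in \interior(S)$: if no invariant-preserving move existed at some $\py$, a separating hyperplane would exhibit a vertex assignment $I$ for which the chosen rates lie on the wrong side of $C^I$ for every mode, contradicting $\px_t - \px_0 \in C^I$. Quantifying this separation uniformly along the trajectory yields both invariant preservation and a uniform potential decrease, ruling out Zeno behavior as a byproduct.
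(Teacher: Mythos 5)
Your forward direction is sound and matches the paper's Lemma~\ref{lem:CMS} in contrapositive form: fix the positional environment strategy that always answers mode $m$ with the vertex $\Rr(m)(I(m))$, observe that the set of points reachable in the induced \cmms{} is the closed cone $\px_0+\mathrm{cone}\{\Rr(m)(I(m)):m\in M\}$, and let $\varepsilon\to 0$.

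The backward direction, however, has a genuine gap. Your invariant $\px_t-\py\in\bigcap_I C^I$ rests on the claim that each $\Rr(m)$ sits inside every $C^I$ because ``its vertices certainly do''; this is false. For a fixed assignment $I$, only the selected vertex $\Rr(m)(I(m))$ is a generator of $C^I$; the other vertices of $\Rr(m)$ --- exactly the rates the adversary will play against you --- need not lie in $C^I$ at all, so subtracting $\delta\vr$ can exit $C^I$ immediately when the residual sits on its boundary. The Farkas sketch offered to repair this is not carried out, and it is not clear it can be: the ``bad'' assignment $I$ witnessing the failure of a move may depend on both the mode and the chosen rate, so a single separating hyperplane does not obviously contradict the hypothesis. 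Two further steps would also fail as stated: the norm of the residual is not a valid potential (in Figure~\ref{fig:intro-example}, correcting lateral drift with $m_1$ or $m_2$ strictly increases the distance to $\px_t$), and keeping each step short does not by itself keep the trajectory in $S$, since many small safe steps can accumulate into a large excursion from the segment $[\px_0,\px_t]$. The paper's proof avoids all of this with a different invariant: the current point is written as $\lambda(\px_t-\px_0)$ plus a nonnegative combination of vertex rates with every coefficient at most $\tau$; the fixed \cmms{} solutions are used to rewrite this representation until some mode's contribution is nullified (Algorithm~\ref{reduceComp}), that mode is then scheduled, progress is measured by the monotone quantity $\lambda$, which increases by a fixed $\delta>0$ every $|M|+1$ steps, and safety follows because the point provably stays within $\min(\epsilon/2,\gamma_1,\gamma_2)$ of $\lambda\vv$. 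You would need an analogue of that bookkeeping, or a genuinely new argument for why your cone intersection is preserved and yields uniform progress, to close the proof.
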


%% That is to say, in order for the player to reach an open ball of arbitrary
%% $\epsilon$ radius around $\px_t$, every \cmms{} formed by picking a corner rate
%%  from each of the modes reaches $\px_t$.

Alur et al.~\cite{ATW12} presented a polynomial-time algorithm to decide if a
state $\px_t$ is reachable from a starting state $\px_0$ for \cmms{}.
In particular, for starting and target states in the interior of the safety set,
they characterized a necessary and sufficient condition. 
\begin{theorem} [\cite{ATW12}]
  \label{thmcms}
  The scheduler has a winning strategy in a \cmms{} $(M, n, R)$, with convex
  safety set $S$ and starting state $\px_0 \in \interior(S)$ and target state
  $\px_t \in \interior(S)$, if and only if there is  $\vec{t} \in \Rplus^{|M|}$
  satisfying: 
  \begin{equation}\label{eq:cmms}
    \px_0(j) + \sum_{i=1}^{|M|} R(i)(j) \cdot \vec{t}(i) = \px_t(j)  \text{ for $1\leq
      j \leq n$}. 
  \end{equation}
\end{theorem}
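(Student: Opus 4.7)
The proof naturally splits into two implications; the forward direction ($\Rightarrow$) is a short argument via a ``fixed-rate'' environment strategy, while the backward direction ($\Leftarrow$), which is the heart of the theorem, requires constructing an adaptive scheduler that compensates for any rate choices the environment makes.

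For ($\Rightarrow$), fix any vertex selection $\vec j$ assigning to each mode $m$ a vertex of $\Rr(m)$, yielding some $H \in \sem{\Ext(\Hh)}$. Define an environment strategy $\pi_{\vec j}$ that always returns $\Rr(m)(\vec j(m))$ whenever the scheduler plays $m$; any run of $\Hh$ against $\pi_{\vec j}$ is syntactically a run of $H$. Robust reachability provides, for every $\varepsilon > 0$, a scheduler strategy whose unique run against $\pi_{\vec j}$ is $S$-safe and ends at distance at most $\varepsilon$ from $\px_t$. Reading off the total dwell-times $\vec t^\varepsilon \in \Rplus^{|M|}$ of this run in $H$ gives $\norm{\sum_i \vec t^\varepsilon(i) R^H(i) - (\px_t - \px_0)} < \varepsilon$. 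Since the polyhedral cone generated by $\{R^H(i)\}_i$ is closed, letting $\varepsilon \to 0$ places $\px_t - \px_0$ in this cone, and Theorem~\ref{thmcms} (together with $\px_0,\px_t \in \interior(S)$) then gives reachability of $\px_t$ from $\px_0$ in $H$.

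For ($\Leftarrow$), apply Theorem~\ref{thmcms} to each $H \in \sem{\Ext(\Hh)}$ to obtain a nonnegative time vector $\vec t^H \in \Rplus^{|M|}$ with $\sum_i \vec t^H(i) R^H(i) = \px_t - \px_0$. The plan is to assemble, from this finite (though exponential-sized) family of schedules, an adaptive scheduler strategy that, for every $\varepsilon > 0$, reaches within distance $\varepsilon$ of $\px_t$ via an $S$-safe run. The scheduler proceeds in very small time increments $\delta > 0$; at each tick it inspects the current position $\px$, computes the residual $\vd = \px_t - \px$, and plays some mode $m$ for time $\delta$, with the choice of $m$ driven by the family $\{\vec t^H\}$ and by the environment's past rate choices. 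The conceptual reason this works is that the time-averaged rate of any mode $m$ over the run lies in $\Rr(m)$ and is therefore a convex combination of the vertex rates, so the environment is effectively playing some instance \cmms{} of $\Hh$. Feasibility of a schedule for every \emph{extreme} \cmms{} extends, by taking a matching convex combination of the $\vec t^H$'s, to feasibility against any such instance.

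The main obstacle is converting this conceptual picture into a finite-time convergence guarantee in the face of an adaptive environment. I expect to prove a contraction-style lemma: with $\delta$ chosen small enough relative to $\varepsilon$ and to the safety buffer between the segment from $\px_0$ to $\px_t$ and the boundary of $S$, each round of the adaptive scheduler either reduces $\norm{\vd}$ by a multiplicative factor bounded below $1$, or advances along a provisional schedule read off the $\vec t^H$'s. Iterating for $O(\log(1/\varepsilon))$ rounds brings $\norm{\vd}$ below $\varepsilon$. Safety of intermediate states follows from convexity of $S$, the interior assumption on $\px_0$ and $\px_t$, and the fact that the trajectory stays within a small tube around $[\px_0,\px_t]$, an invariant that in turn fixes the maximum admissible value of $\delta$.
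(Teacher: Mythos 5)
Your proposal does not prove the statement it was asked to prove. Theorem~\ref{thmcms} concerns a single \cmms{}, in which every mode has exactly one rate vector and the environment has no real choice; its content is the purely algebraic characterization that $\px_t$ is reachable from $\px_0$ if and only if $\px_t-\px_0$ lies in the conical hull of the finitely many rates $R(1),\ldots,R(|M|)$, i.e., iff a nonnegative $\vec{t}$ solving Equation~(\ref{eq:cmms}) exists. What you have written is instead a proof sketch of Theorem~\ref{thmbms} --- the reduction of robust reachability for a \bmms{} to reachability in every extreme-rate instance in $\sem{\Ext(\Hh)}$, with a fixed-vertex environment strategy in one direction and an adaptive scheduler in the other. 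Your argument even invokes Theorem~\ref{thmcms} twice as a black box (``Theorem~\ref{thmcms} \dots then gives reachability'', ``apply Theorem~\ref{thmcms} to each $H$''), so read as a proof of Theorem~\ref{thmcms} it is circular. Note also that the paper itself gives no proof of this statement: it is quoted from Alur, Trivedi and Wojtczak~\cite{ATW12}.

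If you do want to prove Theorem~\ref{thmcms} directly, the argument is of a different kind from what you wrote. For necessity, observe that the displacement of any finite run is exactly $\sum_{i} \vec{t}(i)\cdot R(i)$ with $\vec{t}(i)\geq 0$ the total time spent in mode $m_i$; if $\px_t$ is $\varepsilon$-reachable for every $\varepsilon>0$, then $\px_t-\px_0$ lies in the closure of the finitely generated --- hence closed --- cone of the rates, so a solution of Equation~(\ref{eq:cmms}) exists. Sufficiency is where the hypotheses $\px_0,\px_t\in\interior(S)$ and the convexity of $S$ enter: cut the schedule $\vec{t}$ into $N$ equal slices and interleave the modes round-robin, so that after each full round the state lies on the segment from $\px_0$ to $\px_t$ and within a round it deviates from that segment by at most a constant times $1/N$; since every point of the segment between two interior points of a convex set has a uniformly positive safety margin, the run is $S$-safe for $N$ large enough and reaches $\px_t$ exactly. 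Your ``contraction-style lemma'' against an adaptive environment addresses a harder problem than the one posed here, and in any case is only announced, not carried out.
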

Notice that in such a case scheduler has a strategy to reach the target state
precisely. 
%We call  a \cmms{} \emph{reachable} if it satisfies (\ref{eq:cmms}) and we call $H$
%unreachable otherwise. 
The intuition behind Theorem~\ref{thmcms} is that the scheduler has a winning
strategy if and only if it is possible to reach the target state from the
starting state in using a combination of the rate vectors. 

Using Theorems~\ref{thmbms} and~\ref{thmcms} it follows that the robust
reachability problem is in co-NP. 
By reducing the validity checking problem of propositional logic formulas in
DNF, we show that the robust reachability problem for \bmms{} is indeed complete
the class co-NP. 
On a positive side, we show that the robust reachability problem for $\bmms{}$
and $\cmms{}$ is fixed parameter tractable, i.e. it is polynomial for fixed
number of variables. 
It brings us to our next key result. 
\begin{theorem}[Complexity]
  \label{thm:key-complexity}
  The robust reachability problems for \bmms{} and \cmms{} are co-NP complete. 
  However, it is fixed parameter tractable with fixed number 
  of variables. 
\end{theorem}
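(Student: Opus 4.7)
The plan is to prove the three claims separately, exploiting Theorems~\ref{thmbms} and~\ref{thmcms}.

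For \emph{membership in co-NP}: by Theorem~\ref{thmbms}, robust reachability fails iff some \cmms{} in $\sem{\Ext(\Hh)}$ does not reach $\px_t$ from $\px_0$, so a polynomial-size ``no''-witness is a choice of one vertex per mode. Non-reachability of the resulting \cmms{} is, by Theorem~\ref{thmcms}, equivalent to infeasibility of the linear system~(\ref{eq:cmms}) over $\Rplus^{|M|}$, which is decidable in polynomial time by linear programming. Hence a non-deterministic polynomial-time procedure decides the complement, placing \bmms{} robust reachability in co-NP; \cmms{} reachability is already in \ptime{} by Theorem~\ref{thmcms} and is \emph{a fortiori} in co-NP.

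For \emph{co-NP-hardness of \bmms{}}, I reduce from DNF-\textsc{Validity}. Given $\psi=\bigvee_{k=1}^{K} T_k$ over propositional variables $x_1,\dots,x_p$, I build a \bmms{} whose mode $m_j$ has a two-vertex rate polytope $\mathrm{conv}\{\vec{r}_j^{+},\vec{r}_j^{-}\}$ encoding a truth assignment for $x_j$, so an extreme-rate selection encodes a full assignment. I split coordinates into variable coordinates that force each mode to be used with positive weight (so a vertex selection is applied uniformly) and clause coordinates that, via suitable numerical offsets, let the linear system~(\ref{eq:cmms}) admit a non-negative solution iff the selected assignment satisfies at least one term $T_k$. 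Theorem~\ref{thmbms} then identifies robust reachability of $\px_t$ with validity of $\psi$, yielding co-NP-hardness.

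For \emph{fixed-parameter tractability} at fixed $n$: \cmms{} reachability is a feasibility LP with only $n$ equality constraints, solvable in linear time by Megiddo's algorithm for linear programming in fixed dimension. For \bmms{}, Farkas' lemma recasts non-reachability of an extreme \cmms{} as the existence of $\vec{y}\in\Real^n$ with $\vec{y}\cdot(\px_t-\px_0)>0$ and, per mode $m$, some $v\in\Vtx{\Rr(m)}$ satisfying $\vec{y}\cdot v\leq 0$. Letting $K=\sum_m|\Vtx{\Rr(m)}|$, the $K$ hyperplanes $\{\vec{y}:\vec{y}\cdot v=0\}$ form an arrangement in $\Real^n$ with $O(K^n)$ cells, within each of which the sign pattern of $\vec{y}\cdot v$ and of $\vec{y}\cdot(\px_t-\px_0)$ is constant. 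Enumerating cell representatives by standard arrangement-construction algorithms and checking the condition in each is polynomial in $K$ for fixed $n$.

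The principal obstacle is the hardness reduction: the geometry of conic combinations of rate vectors must faithfully emulate DNF clause satisfaction under \emph{every} vertex selection, while keeping $\px_0$ and $\px_t$ in the interior of a convex polytopal safety set $S$ and preventing spurious non-negative combinations from reaching $\px_t$ when no clause is satisfied.
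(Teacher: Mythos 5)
Your co-NP membership argument is exactly the paper's: guess one vertex per mode and refute reachability of the resulting \cmms{} via the linear system of Theorem~\ref{thmcms}. Your fixed-parameter-tractability argument is correct but takes a genuinely different route. The paper also works with separating hyperplanes through the origin, but then argues, via a rather informal footnote, that such a hyperplane can be taken to be spanned by fewer than $d$ extreme rate vectors, and enumerates those spanning sets. You instead swap the quantifiers via Farkas' lemma --- robust reachability fails iff some $\vec{y}$ satisfies $\vec{y}\cdot(\px_t-\px_0)>0$ while every mode owns a vertex $v$ with $\vec{y}\cdot v\le 0$ --- and enumerate the $O(K^n)$ faces of the arrangement of the hyperplanes $\vec{y}\cdot v=0$. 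This handles the universal quantifier over vertex selections cleanly and avoids the paper's delicate hyperplane-tilting step; just make explicit that $K=\sum_m|\Vtx{\Rr(m)}|$ is itself polynomial for fixed $n$, and that you must visit all faces of the arrangement, not only full-dimensional cells, since the condition $\vec{y}\cdot v\le 0$ is non-strict.

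The genuine gap is the co-NP-hardness reduction, which you correctly identify as the crux but do not carry out, and whose sketched shape is unlikely to work as stated. You propose one two-vertex mode per propositional variable (so a vertex selection is an assignment), ``variable coordinates'' forcing each mode to be used for a fixed positive time, and ``clause coordinates'' whose offsets make system~(\ref{eq:cmms}) feasible iff some term is satisfied. But once the per-mode usage times are pinned down and the vertex selection is fixed, the conic combination is a single point: feasibility degenerates to one specific equality, leaving no residual freedom with which to express the disjunction ``some term $T_k$ is satisfied.'' Encoding that disjunction requires additional modes whose usability depends on the assignment, which is precisely the machinery the paper builds: a single environment mode $m_e$ with a box-shaped rate polytope in which the adversary chooses the whole assignment, clause modes (one per subclause, to absorb non-extremal adversary choices) that cancel the literals of a satisfied term, correction modes for the remaining variables, and bookkeeping coordinates $y_1,y_2,y_3$ forcing each stage to last a prescribed total time. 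Without clause and correction modes your reduction does not go through. Finally, the statement as printed mentions \cmms{}, but \cmms{} reachability is in \ptime{} by Theorem~\ref{thmcms} and so cannot be co-NP-hard; the paper's proof (and the sensible reading) establishes hardness for \bmms{} and \mmms{}, and your proposal addresses neither the \mmms{} case nor this discrepancy.
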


\section{Decidability and Complexity}
\label{sec:key_result}
This section is dedicated to the proofs of Theorem~\ref{thmbms} and Theorem~\ref{thm:key-complexity}. 
\subsection{Proof of Theorem~\ref{thmbms}}
We prove Theorem~\ref{thmbms} by showing that the condition is necessary 
and sufficient in the following two lemmas.

\begin{lemma}
\label{lem:CMS}
Given a \bmms{} $\Hh$, safety set $S$, starting state $\px_0 \in \interior(S)$ and target state
$\px_t \in \interior(S)$, the target state is not robustly reachable if there
exists a  \cmms{} $(M, n, R)$ in  $\sem{\Ext(\Hh)}$ for which $\px_t$ is not
reachable from $\px_0$.  
\end{lemma}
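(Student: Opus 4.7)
The plan is to exhibit an explicit environment strategy that defeats every scheduler for some positive tolerance. Fix the CMMS $H = (M, n, R) \in \sem{\Ext(\Hh)}$ for which $\px_t$ is not reachable from $\px_0$. Since $R(m) \in \Vtx{\Rr(m)} \subseteq \Rr(m)$ for every $m \in M$, the memoryless environment strategy $\pi$ that responds to any scheduler choice of mode $m$ with the rate vector $R(m)$ is a legal strategy in the $\bmms{}$ game. Under $\pi$, the run $\RUN(\px_0, \sigma, \pi)$ produced against any scheduler strategy $\sigma$ is literally a run of the $\cmms{}$ $H$ using the same sequence of timed moves.

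Next, I would invoke Theorem~\ref{thmcms} contrapositively on $H$: the non-reachability of $\px_t$ from $\px_0$ in $H$ implies that the linear system~(\ref{eq:cmms}) has no solution $\vec{t} \in \Rplus^{|M|}$. Equivalently, the vector $\px_t - \px_0$ does not lie in the finitely-generated convex cone
\[
C \rmdef \Bigl\{ \sum_{i=1}^{|M|} t_i \cdot R(i) \::\: t_i \in \Rplus \Bigr\}.
\]
Since $C$ is the positive hull of finitely many vectors it is a closed subset of $\Real^n$, and therefore so is the translate $\px_0 + C$. As $\px_t \notin \px_0 + C$, we obtain a strictly positive separation distance $\delta = d(\px_t, \px_0 + C) > 0$.

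Finally, every state $\px_k$ visited along any run of $H$ starting at $\px_0$ is of the form $\px_0 + \sum_{j=1}^{k} t_j \cdot R(m_j)$ with $t_j \in \Rplus$, and hence lies in $\px_0 + C$. In particular, every state visited by $\RUN(\px_0, \sigma, \pi)$ is at distance at least $\delta$ from $\px_t$, regardless of the scheduler's strategy $\sigma$. Choosing any $\varepsilon$ with $0 < \varepsilon < \delta$, we conclude that no state on the run enters $\ball{\varepsilon}{\px_t}$, so $\pi$ is a winning strategy for the environment in the $\varepsilon$-reachability game and $\px_t$ is therefore not robustly reachable. The only non-routine ingredient is the closedness of the positive hull $C$, a standard polyhedral fact; once this is in hand the positive separation $\delta$ follows and the rest of the argument is purely mechanical. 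Note that safety never enters the argument, because the environment's strategy forces the scheduler into a region that stays bounded away from the target even when the safety set is ignored.
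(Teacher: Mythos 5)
Your proof is correct and follows essentially the same route as the paper's: invoke Theorem~\ref{thmcms} to identify the reachable set under the fixed environment strategy with the (translated) conical hull of the extreme rates, observe that a finitely generated cone is closed so that $\px_t$ has positive distance $\delta$ from it, and conclude that the environment wins for any $\varepsilon < \delta$. Your version is slightly more careful in making the translation by $\px_0$ explicit and in noting that the safety constraint can only help the environment, but the key lemma, the closedness argument, and the winning strategy are identical to the paper's.
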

\begin {proof}
%%   From the results of~\cite{ATW12}, we have that  all the points that are
%%   reachable from $\px_0=\vzero$ are given by the solution of $x$ for the given LP: 
%%   \begin{eqnarray*}
%%     x &=& \px_0 + \sum_{i=1}^{|M|} R(i)t_i \\
%%     t_i &\ge& 0, \text{ for all $i\in \set{1,2,\ldots, n}$} \\
%%     x &\in& S.
%%   \end{eqnarray*}
%%   This is, the intersection of conical hull of rates in the \cmms{} with $S$.
%%   Let us call this set $\mathcal {K}$. Note that $\mathcal K$ is closed.
  
%%   By our assumption, $\px_t \notin \mathcal K$.
%%   As $\mathcal K$ is closed, this implies that the distance $\varepsilon = \inf_{\px \in \mathcal K}\|\px_t,\px\|$ between $\px_t$ and $\mathcal K$ is positive.
%%   Consequently, $B_{\epsilon}(\px_t)$ and $\mathcal K$ are disjoint.
  
%%   When the environment follows the strategy to choose the rate $R(m)$ when presented with a mode $m$, then the scheduler cannot reach an $\epsilon$ ball around $\px_t$.
%%   \qed
%% \end{proof}

%% Remark: I am not sure about the proof --- it shows something for $x \in S$. But we want $x$ to be an inner point of $S$, or so I thought. Assuming this, I'd suggest to change it further:

%% \begin {proof}
  From Theorem \ref{thmcms}, we have that an interior point $\px_t$ of $S$ is
  reachable iff it is in the conical hull of rates in the $\cmms{}$. 
  Let $\mathcal K$ denote this connical hull.
  Note that $\mathcal K$ is closed and that, by our assumption, $\px_t \notin
  \mathcal K$. 
  This implies that the distance $\varepsilon = \inf_{\px \in \mathcal
    K}\norm{\px_t,\px}$ between $\px_t$ and $\mathcal K$ is positive. 
  Consequently, $B_{\epsilon}(\px_t)$ and $\mathcal K$ are disjoint.
  It follows that when the environment follows the strategy to choose the rate
  $R(m)$ when presented with a mode $m$, then the scheduler cannot reach an
  $\epsilon$ ball  around $\px_t$. 
  \qed
\end{proof}

% \begin {proof}
%   From the results of~\cite{ATW12}, we have that  all the points that are
%   reachable from $\px_0=\vzero$ are given by the solution of $x$ for the given LP: 
%   \begin{eqnarray*}
%     x &=& \px_0 + \sum_{i=1}^{|M|} R(i)t_i \\
%     t_i &\ge& 0, \text{ for all $i\in \set{1,2,\ldots, n}$} \\
%     x &\in& S.
%   \end{eqnarray*}
%   This is in fact, the intersection of conical hull of rates in the \cmms{} with $S$.
%   Let us call this set as $\mathcal {K}$.  
%   Since, we have inequalities of the form $\ge, \le$ in the above LP,
%   $\mathcal{K}$ is closed. So, the complement $\overline {\mathcal{K}}$ is
%   open. 
%   If, $\px_t\in \overline {\mathcal{K}},\ \exists \epsilon_0 > 0$
%   s.t. $B_{\epsilon_0}(\px_t) \subseteq \overline{\mathcal{K}}$.
%   The strategy of the environment is, given a mode $m$ selected by
%   the player, choose rate $R(m)$.
%   Then, for $\epsilon = \epsilon_0$, the player cannot reach an $\epsilon$ ball around $\px_t$.
%   \qed
% \end{proof}

%For the sake of ease of presentation we represent starting state as
%$\vzero$ 
\begin{lemma} 
\label{lem:algsAreFab}
Given a \bmms{} $\Hh$, safety set $S$, starting state $\px_0 \in \interior(S)$ and target state
$\px_t \in \interior(S)$, the target state is robustly reachable if for  all \cmms{} $(M, n, R)$ in
$\sem{\Ext(\Hh)}$ the state $\px_t$ is reachable from $\px_0$. 
\end {lemma}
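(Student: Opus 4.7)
The plan is to construct an explicit $\varepsilon$-winning strategy for the scheduler under the hypothesis. First, I would invoke Theorem~\ref{thmcms} on each of the finitely many extreme instances $R \in \sem{\Ext(\Hh)}$ to obtain non-negative time vectors $\vt_R \in \Rplus^{|M|}$ satisfying $\sum_{i=1}^{|M|} R(i) \cdot \vt_R(i) = \px_t - \px_0$. Geometrically, this certifies that $\px_t - \px_0$ lies in the conic hull generated by every choice of one vertex per mode of $\Hh$.

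The key geometric consequence I would extract next is a Farkas-type statement: for every direction $\vn \in \Real^n$ with $\vn \cdot (\px_t - \px_0) > 0$, there exists a mode $m_i$ such that $\vn \cdot \vr > 0$ for every rate $\vr \in \Rr(m_i)$. The proof is by contradiction: if in every mode some vertex $v_{i,\phi(i)}$ minimizing $\vn \cdot \vr$ satisfied $\vn \cdot v_{i,\phi(i)} \leq 0$, then the extreme instance corresponding to that selector $\phi$ would have $\mathrm{cone}(\{v_{i,\phi(i)}\})$ entirely contained in $\{\vec{y} : \vn \cdot \vec{y} \leq 0\}$, violating membership of $\px_t - \px_0$ in that cone. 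Intuitively, no matter which vertices the environment favours, some mode has every possible response pushing in the target direction.

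Building on this, I would describe an adaptive scheduler strategy. At each visited state $\px$, the scheduler picks a direction $\vn$ of desired progress satisfying $\vn \cdot (\px_t - \px_0) > 0$; identifies a mode $m_i$ in which every environment response contributes positively in direction $\vn$; and plays that mode for a small time $\delta$ chosen small enough to keep the trajectory inside the convex set $S$. Since $\Rr(m_i)$ is compact and misses the hyperplane $\vn^\perp$, the per-unit-time progress in direction $\vn$ is bounded below by some $c > 0$. By iterating this process while rotating through a finite collection of directions keyed to the faces of the extreme-instance cones, the state can be driven arbitrarily close to $\px_t$. Safety in $S$ is preserved because $\px_0, \px_t \in \interior(S)$ and $S$ is convex, so sufficiently small steps remain $S$-safe.

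The main obstacle I foresee is converting ``guaranteed progress along some chosen direction'' into convergence to the specific point $\px_t$. This requires a careful scheduling of directions together with a potential-function argument showing that the distance $\|\px - \px_t\|$ decreases by a bounded amount over each phase, ultimately entering $\ball{\varepsilon}{\px_t}$ in finitely many steps. One natural way to structure this is to maintain the invariant that the hypothesis (or a suitably perturbed variant accommodating the intermediate position) continues to hold at each visited state, so that the Farkas-type consequence can be re-applied consistently along the execution.
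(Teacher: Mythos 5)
Your Farkas-type observation is correct and is a genuine extraction from the hypothesis: if for every selector of one vertex per mode the vector $\vv=\px_t-\px_0$ lies in the resulting cone, then for any $\vn$ with $\vn\cdot\vv>0$ some mode has $\vn\cdot\vr>0$ uniformly over its rate polytope. However, the step from that observation to a working scheduler strategy is exactly where the hard content of the lemma lives, and the proposal leaves it unfilled.

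Two concrete problems. First, \emph{safety}: the lemma must produce an $S$-safe run for an \emph{arbitrary} convex safety set with $\px_0,\px_t$ in its interior, including a narrow tube around the segment $[\px_0,\px_t]$. A greedy strategy that at each step maximizes worst-case progress along $\vn=\px_t-\px$ can overshoot by a \emph{constant} amount independent of the step size. For instance, with $\vv=(1,0)$, $\Rr(m_1)=\mathrm{conv}\{(1,1),(1,-1)\}$, $\Rr(m_2)=\{(0,1)\}$, $\Rr(m_3)=\{(0,-1)\}$, the hypothesis holds, yet if the environment always answers $(1,1)$ in $m_1$, your greedy rule keeps selecting $m_1$ (it has the largest margin) until the second coordinate reaches $\approx 1/3$, a fixed deviation that leaves a thin safety tube no matter how small $\delta$ is. Second, \emph{convergence}: the margin $c$ extracted from the Farkas lemma is not uniformly bounded below — it degrades as $\hat{\vn}$ approaches the hyperplane $\vv^\perp$, which can happen near the target, and no potential argument is actually supplied. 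Your proposed fix, maintaining that the hypothesis ``continues to hold at each visited state'' for the shifted target $\px_t-\px$, is neither established nor used to bound the deviation; even where that invariant does persist, it does not by itself prevent the constant overshoot above.

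The paper's proof is built precisely to kill these two issues. Instead of reasoning direction-by-direction, it maintains an explicit \emph{projection} decomposition $\px-\px_0=\lambda\vv+\sum_{i,j}\pi(i,j)\Rr(i)(j)$ with $\lambda,\pi\ge 0$, always plays a mode whose $\pi$-contribution is currently zero for a tiny fixed time $\tau$, and, when all modes contribute, uses the reachability solutions $\sigma(\Ff)$ of the extreme instances (one per choice of one contributing vertex per mode) to \emph{rewrite} the residual into additional $\lambda$ progress, zeroing out one vertex's contribution at a time (Algorithm~\ref{reduceComp}). Because every mode carries at most $\tau$ worth of residual, the state stays within $|M|\tau\M$ of the segment, so safety is under the scheduler's control; and because each rewrite increases $\lambda$ by a fixed $\delta>0$ every $|M|+1$ steps, termination follows. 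This bookkeeping, not the directional Farkas statement, is the mechanism that converts the hypothesis for \emph{every} extreme instance into a strategy, and it is the part your proposal does not yet supply.
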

We give a constructive proof of this lemma by constructing an algorithm
(Algorithm~\ref{ReachabilityAlg}) giving a strategy of the player to
reach $B_\epsilon(\px_t)$  for a given $\epsilon > 0$. 

Before we elaborate on the working of the algorithm, we need to explain the idea
of projections.   
In our algorithms we sometimes represent a point $\px \in 
\Real^n$ by explicitly defining its projection towards the direction vector 
$\vv = \px_t - \px_0$ and (small) projections towards extreme rate vectors of various modes.
\begin{definition}[Projection] 
Given a \bmms{} $(M, n, \Rr)$ we say that a tuple
$(\lambda, \pi)$ is a \emph{projection}  of a point $\px$, where 
$\lambda \in \Rplus$ is the projection towards $\vv$ and 
$\pi: \Nat \times \Nat \to \Rplus$ is the projection towards extreme
rate-vectors of various modes, such that $\pi(i, j)$ is the projection towards
$j$-th vertex of the rate polytope $\Rr(m_i)$, if: 
\[
\px = \lambda \cdot \vv + \sum_{i=1}^{|M|}
\sum_{j=1}^{|\Vtx{\Rr(m_i)}|} \pi(i, j) \cdot  \Rr(i)(j).
\]
Notice that such projections are often not unique. 
We write the $(\lambda_0, \pi_0)$ for the projection such that $\lambda_0 = 0$
and $\pi_0(i, j) = 0$ for all $i, j$.
Given a projection $P = (\lambda, \pi)$ of a state $\px$ we say that $\pi(i, j)$ is
the contribution of the $j^{th}$ vertex of the rate polytope of mode $m_i$.
We also say that a vertex $\Rr(i)(j)$ does not contribute in a projection $P$
if $\pi(i, j) = 0$, while we say that a mode does not contribute in a
projection $P$  if $\pi(i, j) = 0$ holds for all corners $j$ of $\Rr(m_i)$. 
\end{definition}

Given a tolerance level of $\epsilon$, the strategy for the player to reach $B_\epsilon(\px_t)$ 
is given by Algorithm~\ref{ReachabilityAlg}. A feature of the algorithm is that the 
player selects time $\tau$ at every step. It calls function $\nextMode$ described in 
Algorithm~\ref{nextMode} to get the mode that the player chooses. Depending on 
the choice of the environment, the current point is updated. This process goes on 
until an $\epsilon$ ball around $\px_t$ is reached.
\begin{algorithm}[h]
  \KwIn{BMMS $\Hh$, starting state $\px_0$, tolerance level $\epsilon$}
  \KwOut{Reachability Algorithm}
  $\M := \displaystyle\max_{m\in M}\displaystyle\max_{\vr\in \Rr(m)}\norm{\vr}$\;
  $\px := \px_0$, the current point\;
  $\vv := \px_t-\px_0$, the reachability direction\;
  $\gamma_1 := $ shortest distance of $\px_0$ from the boundary of $S$\;
  $\gamma_2 := $ shortest distance of $\px_t$ from the boundary of $S$\;
  $\tau := \min (\epsilon/2, \gamma_1, \gamma_2)/(\M|M|)$\;
  $\sigma := $ an array, one element for each \cmms{} $\Ff$ in $\sem{\Ext(\Hh)}$ s.t $\sigma(\Ff) = \vt$ s.t. $\vt$ is a solution for the \cmms{} $\Ff$ for reachability to $\vv$\;
  Projection $P$ = ($\lambda_0, \pi_0)$ \;%Projection($\vv$, $\vzero$)\;
  \While{$\norm{\px-\px_t} > \epsilon$}{

    $P$, $m =$ \nextMode ($P$, $\vv$, $\sigma$)\;
    $\vr = $ \getRateFromEnv($\px$, $m$, $\tau$)\;
    $\px = \px + \tau\vr$\;
    $P$ = \updateProjection ($P$, $m$, $\vr$)\;
  }
  \caption{Dynamic reachability algorithm}
  \label{ReachabilityAlg}
\end{algorithm}

\begin{algorithm}[h]
  \KwIn{Projection $P$, reachability direction $\vv$, reachability solution array $\sigma$}
  \KwOut{Projection $P$, mode $m$} 
  \While{true}{
    \If {there exists mode $m_i$ having zero contribution to Projection $P$} {
      return $P$, $m_i$\;
    }
    \Else {
      $R(m_i) := \Vtx{\Rr(m_i)}(j)$ such that $\Vtx{\Rr(m_i)}(j)$ has non-zero contribution to $P$ for each $i$\;
      \cmms {} $\Ff := (M,n,R)$ is the corresponding instance of $\Hh$\;
      $P = \reduceComp(\Ff, \sigma(\Ff), P)$\;
    }
  }
  \caption{\nextMode($P$, $\vv$, $\sigma$)}
  \label{nextMode}
\end{algorithm}
\begin{algorithm}[h]
  \KwIn{\cmms {} $\Ff = (M,n,R)$, solution time vector for the \cmms {}
    $\sigma(\Ff)$, current Projection $P = (\lambda, \pi)$}
  \KwOut{Projection $P' = (\lambda', \pi')$ s.t. contribution of one of the
    rates in $\Ff$ has been nullified} 
  $(\lambda', \pi') = (\lambda, \pi) $\;
  $k := \displaystyle\argmin_{i, \sigma(\Ff)(i) > 0} (\pi(i, R(i))/\sigma(\Ff)(i))$\;
  $\lambda' = \lambda + \pi(k, R(m_k))/\sigma(\Ff)(k)$\;
  \For {$i$ in $\set{1, 2,\ldots, |M|}$} {
    $\pi'(i, R(i)) = \pi(i, R(i)) - (\pi(k, R(k))*\sigma(\Ff)(i))/\sigma(\Ff)(k)$\;
  }
  return $(\lambda', \pi')$\;
  \caption{$\reduceComp(\Ff, \sigma(\Ff), P)$}
  \label{reduceComp}
\end{algorithm}

\begin{algorithm}[h]
  \KwIn{current Projection $P = (\lambda, \pi)$, mode $m_i$, rate $\vr$}
  \KwOut{Projection $P = (\lambda', \pi')$ with rate $r$ taken for time $\tau$}
  $(\lambda', \pi') = (\lambda, \pi)$\;
  $\vr = \sum_{j=1}^{|vert(\Rr(m_i))|}\theta_j vert(\Rr(m_i))(j)$, the convex combination of vertex of the rate polytope of mode $m_i$\;
  \For {$j$ in $\set{1,2,\ldots, |vert(\Rr(m_i))|}$} {
      $\pi'(i, j) = \theta_j*\tau$\;
    }
    return $(\lambda', \pi')$\;
    \caption{$\updateProjection (P, m, \vr)$}

    \label{updateProjection}
\end{algorithm}

%% \begin{algorithm}[t]
%%   \KwIn{\cmms {} $\Ff = (M,n,R)$, solution time vector for the \cmms {}
%%     $\sigma[\Ff]$, current Projection $P = (\lambda, \pi)$}
%%   \KwOut{Projection $P' = (\lambda', \pi')$ s.t. contribution of one of the
%%     rates in $\Ff$ has been nullified} 
%%   $k := \displaystyle\argmin_{i, \sigma[\Ff][i] > 0} (\pi(i, [R(i)]/\sigma[\Ff][i])$\;
%%   $P.\lambda = P.\lambda + P.comp[k][R(m_k)]/\sigma[\Ff][k]$\;
%%   \For {$i$ in $\set{1, 2,\ldots, |M|}$} {
%%     $P.comp[i][R(i)] = P.comp[i][R(i)] - (P.comp[i][R(i)]*\sigma[\Ff][i])/\sigma[\Ff][k]$\;
%%   }
%%   return $P$\;
%%   \caption{reduceComp Function}
%%   \label{reduceComp}
%% \end{algorithm}
%% \begin{algorithm}[t]
%%   \KwIn{current Projection $P$, mode $m_i$, rate $\vr$}
%%   \KwOut{Projection $P$ with rate $r$ taken for time $\tau$}
%%   $\vr = \sum_{j=1}^{|vert(\Rr(m_i))|}\lambda_jvert(\Rr(m_i))(j)$, the convex combination of vertex of the rate polytope of mode $m_i$\;
%%   \For {$j$ in $\set{1,2,\ldots, |vert(\Rr(m_i))|}$} {
%%       $P.comp[i][j] = \lambda_j*\tau$\;
%%     }
%%     return $P$\;
%%     \caption{updateProjection Function}
%%     \label{updateProjection}
%%   \end{algorithm}

%
The job of $\nextMode$ function is to nullify the contribution of a mode $m$ by expressing
the point in a different way. It calls upon $\reduceComp$ function described in 
Algorithm~\ref{reduceComp} to achieve this.
The correctness of the Algorithm~\ref{reduceComp} follows from the following
proposition.
\begin{proposition}
  Every non-negative linear combination of rates of a \cmms {} $\Ff = (M, n, R)
  \in \sem{\Ext(\Hh)}$  that reaches $\vv$ can be written as  the sum of
  a non-negative component along $\vv$  and a non-negative linear combination of
  the rates where contribution of one of the rates is $0$.   
\end{proposition}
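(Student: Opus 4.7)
The plan is to prove the proposition by a simple change-of-basis argument that exploits the reachability witness $\vec{t} = \sigma(\Ff) \in \Rplus^{|M|}$, which by hypothesis satisfies $\vv = \sum_{i=1}^{|M|} \vec{t}(i) \cdot R(i)$. Given an arbitrary non-negative combination $\sum_i s_i R(i)$, the idea is to isolate one rate $R(k)$, rewrite it in terms of $\vv$ and the remaining rates using the witness, and substitute back into the combination. Since $\px_t \neq \px_0$ implies $\vv \neq \vzero$, at least one index satisfies $\vec{t}(i) > 0$, so a suitable $k$ is available; the choice of $k$ is precisely what will preserve non-negativity in the residual.

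Concretely, for any $k$ with $\vec{t}(k) > 0$, rearranging the witness equation expresses $R(k)$ as a signed combination of $\vv$ and $\{R(i)\}_{i \neq k}$. Substituting this expression for $R(k)$ into the single term $s_k R(k)$ and collecting coefficients yields a decomposition of the form
\[
\sum_{i=1}^{|M|} s_i \cdot R(i) \;=\; \alpha \cdot \vv \;+\; \sum_{i \neq k} \beta_i \cdot R(i),
\]
where $\alpha = s_k/\vec{t}(k)$ and $\beta_i = s_i - \alpha \cdot \vec{t}(i)$. This already has the required shape: a non-negative component along $\vv$ plus a combination of the rates in which the contribution of the $k$-th rate is exactly $0$.

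The main obstacle, and the step that pins down the particular rule used in Algorithm~\ref{reduceComp}, is ensuring that every residual coefficient $\beta_i$ stays non-negative. I plan to take $k := \argmin_{i \,:\, \vec{t}(i) > 0} s_i / \vec{t}(i)$, so that $\alpha$ is the smallest such ratio: then for each $i$ with $\vec{t}(i) > 0$ the defining inequality $s_i/\vec{t}(i) \geq \alpha$ gives $\beta_i \geq 0$, while for each $i$ with $\vec{t}(i) = 0$ the term $\alpha \cdot \vec{t}(i)$ vanishes and $\beta_i = s_i \geq 0$ by hypothesis. The only edge cases, namely $s_i = 0$ for all $i$ or $s_k = 0$ at the chosen $k$, yield $\alpha = 0$ and a decomposition that holds trivially with the $k$-th rate already contributing nothing, completing the argument.
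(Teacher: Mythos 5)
Your proposal is correct and follows essentially the same route as the paper's proof: the same $\argmin$ choice of $k$ over indices with positive witness coordinate, the same coefficient $\alpha = s_k/\vec{t}(k)$ along $\vv$, and the same residual coefficients $\beta_i = s_i - \alpha\,\vec{t}(i)$, with non-negativity following from the minimality of the ratio. Your explicit treatment of the existence of an index with $\vec{t}(i)>0$ and of the degenerate cases is a small point of extra care the paper leaves implicit.
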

\begin{proof}
  Given a \cmms {} $\Ff = (M, n, R)$ as an instance of $\sem{\Ext(\Hh)}$, we have
  \begin{equation} \label {soln}
    \vv = \sum_{i=1}^{|M|}\sigma(\Ff)(i).R(m_i)
  \end{equation}
  Given any non-negative linear combination of vectors in $R$,
  $\sum_{i=1}^{|M|}c_iR(m_i)$, $c_i\ge 0$, let 
  $k = \argmin_{i, \sigma(\Ff)(i) >
    0} (c_i/\sigma(\Ff)(i))$.
  The following calculations show that every non-negative linear combination of rates 
  of a \cmms {} $\Ff = (M, n, R)$  that reaches $\vv$ can be written as
  the sum of a non-negative component along $\vv$  and a non-negative linear
combination of the rates where contribution of one of the rates is $0$.  
For clarity, $\sigma(\Ff)(i)$ has been written as $\sigma_i$ below. 
\begin{eqnarray}
  \sum_{i=1}^{|M|}c_iR(m_i) &=& \sum_{i=1}^{|M|}(c_i-\frac{c_k\sigma_i}{\sigma_k} + \frac{c_k\sigma_i}{\sigma_k})R(m_i)\nonumber\\
  &=& \sum_{i=1}^{|M|}(c_i-\frac{c_k\sigma_i}{\sigma_k})R(m_i) + \sum_{i=1}^{|M|}\frac{c_k\sigma_i}{\sigma_k}R(m_i)\nonumber\\
  &=& \sum_{i=1}^{|M|}(c_i-\frac{c_k\sigma_i}{\sigma_k})R(m_i) + \frac{c_k}{\sigma_k}\sum_{i=1}^{|M|}\sigma_iR(m_i)\nonumber\\
  &=& \sum_{i=1}^{|M|}(c_i-\frac{c_k\sigma_i}{\sigma_k})R(m_i) + \frac{c_k}{\sigma_k}\vv\nonumber
\end{eqnarray}

The last step follows from Equation \ref{soln}. Note that, since 
$k = \argmin_{i, \sigma_i > 0} (c_i/\sigma_i)$, we have that 
$c_i-\frac{c_k\sigma_i}{\sigma_k} \ge 0$ holds for each $i$ and $=0$ holds for
$i=k$. 
\end{proof}
This explains the working of Algorithm~\ref{reduceComp}. We say that the
contribution of $R(m_k)$ is consumed in this process. 
Every invocation of Algorithm~\ref{reduceComp} consumes at least one corner of
one of the modes in $M$. 
Hence, it guarantees that after some finite iterations, some mode will be
consumed in the process. 
This proves the termination of Algorithm~\ref{nextMode}.

\begin{proposition}[Safety]
  All the states visited during an execution of Algorithm~\ref{ReachabilityAlg}
  are strictly inside the safety set. 
\end{proposition}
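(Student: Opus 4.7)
The plan is to maintain the invariant that at every iteration of Algorithm~\ref{ReachabilityAlg} the current point admits a decomposition $\px = (1-\lambda)\px_0 + \lambda\px_t + \vec{e}$ with $\lambda \in [0,1]$ and $\|\vec{e}\| < \min(\gamma_1,\gamma_2)$, so that convexity of $S$ together with $\ball{\gamma_1}{\px_0},\ball{\gamma_2}{\px_t} \subseteq S$ forces
\[
\px = (1-\lambda)(\px_0+\vec{e}) + \lambda(\px_t+\vec{e}) \in S.
\]

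First, I would verify by induction that the projection $(\lambda,\pi)$ maintained by the algorithm faithfully represents $\px$ via $\px = \px_0 + \lambda\vv + \sum_{i,j} \pi(i,j)\,\Rr(i)(j)$. For $\updateProjection$ this is by construction, since $\vr$ is the stated convex combination of the vertices of $\Rr(m_i)$ and the mode it is applied to has zero contribution at that moment (as guaranteed by $\nextMode$); for $\reduceComp$ it follows from the identity proved in the preceding proposition, which merely transfers mass from the $\pi$-part into the $\lambda$-part without altering $\px$.

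The crux is a per-mode mass bound: at every moment and for every mode $m_i$, $\sum_j \pi(i,j) \le \tau$. Indeed, $\nextMode$ returns $m_i$ only when $\sum_j \pi(i,j)=0$; the immediately subsequent $\updateProjection$ sets $\sum_j \pi(i,j) = \tau$ since $\sum_j \theta_j = 1$; and between two returns of $m_i$ by $\nextMode$, only $\reduceComp$ can touch the entries $\pi(i,\cdot)$, and it only decreases them. Summing over $m_i \in M$ gives $\sum_{i,j} \pi(i,j) \le |M|\tau$, hence $\|\vec{e}\| \le \M \cdot |M| \cdot \tau \le \min(\gamma_1,\gamma_2)$ by the algorithm's choice of $\tau$.

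The main obstacle is controlling $\lambda$: $\reduceComp$ is monotone in $\lambda$, and nothing in the code syntactically prevents $\lambda$ from crossing $1$. I would close this gap by observing that once $\lambda$ enters a small neighbourhood of $1$ the while-loop exits, because $\|\px - \px_t\| \le |\lambda-1|\|\vv\| + \|\vec{e}\|$ and $\|\vec{e}\| \le \epsilon/2$ (again by the choice of $\tau$), so the test $\|\px-\px_t\|<\epsilon$ fires as soon as $|\lambda-1|\|\vv\| < \epsilon/2$; thus $\lambda$ cannot drift materially beyond $1$ during the loop. Any residual overshoot is absorbed by the slack in the strict inequality $\|\vec{e}\|<\min(\gamma_1,\gamma_2)$, which we obtain by working with $\tau$ strictly below the stated bound; this is harmless because $\px_0,\px_t$ already lie in $\interior(S)$. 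Applying the same convexity argument to each $\tau$-long line segment within a single iteration then shows that the entire continuous trajectory, and not merely the sampled endpoints, stays strictly inside $S$.
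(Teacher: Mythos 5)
Your overall framework mirrors the paper's exactly: write the current state as the projection $\lambda\vv$ plus a residual spread over the modes, bound the residual by $\M|M|\tau \le \min(\epsilon/2,\gamma_1,\gamma_2)$ via the per-mode mass bound $\sum_j\pi(i,j)\le\tau$, constrain $\lambda$ to a neighbourhood of $[0,1]$, and conclude by convexity. The per-mode mass argument (a mode is returned by \nextMode{} only with zero contribution, \updateProjection{} refills it to exactly $\tau$, and \reduceComp{} only decreases entries) is the same invariant the paper carries, just phrased more explicitly.

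The genuine gap is in your handling of the overshoot of $\lambda$ past $1$. You correctly observe that the loop guard forces $|\lambda-1|\,\norm{\vv}\ge\epsilon/2$ whenever the body is entered, which by itself only tells you $\lambda\le 1-\epsilon/(2\norm{\vv})$ \emph{or} $\lambda\ge 1+\epsilon/(2\norm{\vv})$; to exclude the second branch you assert that ``$\lambda$ cannot drift materially beyond $1$'' and that ``any residual overshoot is absorbed by the slack'', but you never bound the overshoot. What is missing --- and what the paper supplies --- is a per-iteration increment bound on $\lambda$: since $\px_j$ and $\px_{j+1}$ each lie within $\epsilon/2$ of $\lambda_j\vv$ and $\lambda_{j+1}\vv$ respectively, and $\norm{\px_{j+1}-\px_j}\le\tau\M\le\epsilon/(2|M|)$, one gets $\norm{(\lambda_{j+1}-\lambda_j)\vv}\le\epsilon$, hence $\lambda_{j+1}\le\lambda_j+\epsilon/\norm{\vv}$. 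Combined with the induction hypothesis $\lambda_j\le 1-\epsilon/(2\norm{\vv})$ on entry, this caps $\lambda$ at $1+\epsilon/(2\norm{\vv})$, which is small enough to place $\lambda\vv$ in the convex hull of $\px_0$ and $\px_t+\epsilon\vv/(2\norm{\vv})$ and close the argument. Without this step, ``absorbed by the slack'' is circular: you have slack in $\norm{\vec e}$, but no quantitative control on how far $\lambda\vv$ can leave the segment $[\px_0,\px_t]$, so the convexity argument is not yet applicable. Your suggestion to shrink $\tau$ strictly below the stated bound, and your remark about the whole continuous segment (not just sampled endpoints) staying in $S$, are both reasonable small refinements, but neither substitutes for the missing $\lambda$-increment bound.
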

\begin{proof}
  We will demonstrate that all the point visits during a run belong to the
  safety set.
  We first claim that the point reached at any step in the algorithm, $x$, can be 
  written as the sum of a non-negative component along $\vv$ and small components
  along some rate in each of the modes. 
  Formally, 
  \begin{eqnarray}
    \px & =& \lambda \vv+ \sum_{i=1}^{|M|}t_i\vr_i, \label{form}
  \end{eqnarray}
  where  $\lambda \ge 0$, $\vr_i \in \Rr(m_i)$, $0 \le t_i \le \tau$ for all modes $m_i \in M$.   
  We prove this by induction on the number of steps. The initial point $x_0 = 0$ is trivially written in the above 
  form with $\lambda=0$, $t_i=0$ and $r_i$, any rate vector in mode $m_i$ for all $i$. If, after $j$ steps, 
  $\px  = \lambda \vv+ \sum_{i=1}^{|M|}t_i\vr_i$ with $\lambda\ge 0,\ t_i\ge 0$ for all $i$, then
  Algorithm~\ref{nextMode} ensures that $\px$ can be written in an alternative way such that the contribution of some
  mode $m_k$ is $0$ in $\px$. In this process, $\lambda$ is non-decreasing and the contribution of other modes 
  is non-increasing but always $\ge 0$. This provides $\px = \lambda_1 \vv + \sum_{i\in [|M|]\setminus\set{k}}t_i'\vr_i'$ 
  with $\lambda_1\ge \lambda, 0\le t_i'\le t_i$. The mode chosen by the player in this step is $m_k$ and the 
  time chosen is $\tau$, the new point reached is $\px' = \px + \sum_{i=1}^{|M|}t_i'\vr_i'$, where $t_k'=\tau$ and 
  $\vr_k'$ is the rate chosen by the environment in mode $m_k$. So, we again
  have $\px'$ in the form specified by Equation~\ref{form}. 

  \begin{eqnarray*}
  |\px-\lambda \vv| &=& |\sum_{i={1}}^{|M|}t_ir_i| %\\
%  & \le& 
 \le \sum_{i={1}}^{|M|}|t_ir_i|\\
  &\le& \sum_{i=1}^{|M|}\tau \M %\\
%   &\le& 
   \le \sum_{i=1}^{|M|}\frac{\min (\epsilon/2, \gamma_1, \gamma_2)}{|M|} = \min (\epsilon/2, \gamma_1, \gamma_2).
  \end{eqnarray*}
  We have used the value of $\tau$ defined in Algorithm~\ref{ReachabilityAlg}. The last equation also shows that
  the current point is inside a ball of radius $\epsilon/2$ from the point $\lambda \vv$.
  
  We now prove that $\lambda \le 1+\epsilon/(2\norm{\vv})$. We prove this by
  induction on the number of steps. Initially,
  $\lambda=0\le1+\epsilon/(2\norm{\vv})$. 
  Let $\px_j$ and $\px_{j+1}$ be the point reached after $j$ and $j+1$ steps, respectively.
  After $j$ steps, if $\lambda_j\le 1+\epsilon/(2\norm{\vv})$ and we are not inside an $\epsilon$ ball around $\px_t$, 
  then by geometry, $\lambda_j \le 1-\epsilon/(2\norm{\vv})$.
  Let $\lambda_j$ and $\lambda_{j+1}$ be the respective projection along direction $\vv$.
  Suppose, some rate $\vr$ is taken for time $\tau$. Then,
  \[
  \norm{\px_{j+1}-\px_j} = \norm{\tau r} \le \tau \norm{r}  \le \tau \M  \le
  \frac{\epsilon}{2|M|}
  \]
  So, successive points differ by a distance of at most $\epsilon/2|M| \le
  \epsilon/2$ and they are centered around $\lambda_j\vv$ and $\lambda_{j+1}\vv$
  in a ball of radius $\epsilon/2$.  
  Since $\norm{\lambda_{j+1}\vv - \lambda_j\vv} \le \epsilon$ we have that 
  \[ 
  \lambda_{j+1} \le \lambda_j + \frac{\epsilon}{\norm{\vv}} \le
  1+\frac{\epsilon}{(2\norm{\vv})}
  \]
  The last step follows from $\lambda_j \le 1-\epsilon/(2\norm{\vv})$ as argued already.
  So, we have proved by induction that $\lambda \le 1+\epsilon/(2\norm{\vv})$. Therefore, at any step in the algorithm, $\lambda \vv$
  is a convex combination of $\px_0$ and $\px_t+\epsilon\vv/(2\norm{\vv})$. Therefore, a ball of radius $\min (\epsilon/2, \gamma_1, \gamma_2)$
  lies completely inside the safety set $S$. This follows from the definition of $\gamma_1$ and $\gamma_2$. So, the algorithm is safe.
  \end{proof}
\begin{proposition}[Termination]
   The Algorithm~\ref{ReachabilityAlg} always terminates.
\end{proposition}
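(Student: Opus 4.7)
The plan is to bound the number of outer iterations of Algorithm~\ref{ReachabilityAlg} by amortizing the time spent against the increase in the projection coordinate $\lambda$. From the safety analysis, at the end of each outer iteration the current state admits the bounded representation $\px = \lambda\vv + \sum_{i=1}^{|M|} t_i \vr_i$ with $\vr_i \in \Rr(m_i)$ and $t_i \in [0,\tau]$, so the \emph{residual time} $T := \sum_{i} t_i = \sum_{i,j} \pi(i,j)$ never exceeds $|M|\tau$. Let $T_{\max} := \max_{\Ff \in \sem{\Ext(\Hh)}} \sum_i \sigma(\Ff)(i)$ denote the largest total time among the precomputed \cmms{} schedules; it is finite under the hypothesis that every extreme \cmms{} instance is reachable.

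Per outer iteration $N$, the rate-update step contributes exactly $\tau$ to $T$ (since the mode returned by \nextMode{} had zero contribution just before time $\tau$ was spent in it), while the preceding \nextMode{} call may decrease $T$ by some amount $D_N \geq 0$ through a sequence of \reduceComp{} calls. Telescoping $T_N = T_0 + \sum_{j=1}^N (\tau - D_j)$ with $T_0 = 0$ and $T_N \leq |M|\tau$ yields $\sum_{j=1}^N D_j \geq (N-|M|)\tau$. A single \reduceComp{} call with \cmms{} instance $\Ff$ decreases $T$ by $\Delta\lambda \cdot \sum_i \sigma(\Ff)(i) \leq \Delta\lambda \cdot T_{\max}$ while raising $\lambda$ by the same $\Delta\lambda$, so each unit of $T$-decrease contributes at least $1/T_{\max}$ to $\lambda$. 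Consequently,
\[ \lambda_N \;\geq\; \frac{1}{T_{\max}}\sum_{j=1}^N D_j \;\geq\; \frac{(N-|M|)\tau}{T_{\max}}. \]

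By the safety argument, $\lambda_N > 1 - \epsilon/(2\norm{\vv})$ implies $\norm{\px_N - \px_t} < \epsilon$, so the outer loop exits. Combined with the lower bound above, this yields termination within at most $|M| + \lceil T_{\max}/\tau \rceil$ outer iterations. The main obstacle is the careful bookkeeping between the explicit projection $(\lambda,\pi)$ maintained by the algorithm and the $|M|$-term representation used in the safety analysis: in particular, one must verify that $t_i = \sum_j \pi(i,j)$, that \reduceComp{}'s update yields exactly $\Delta\lambda \cdot T(\Ff)$ of $T$-decrease (the modified $\pi(i,R(i))$ entries are exactly those summed to form $T$, since other vertices of mode $m_i$ are left unchanged), and that outer iterations in which \nextMode{} returns immediately without invoking \reduceComp{} (contributing $D_N = 0$) are absorbed cleanly by the amortization, since they merely deposit $\tau$ into the still-bounded pool $T$.
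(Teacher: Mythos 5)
Your proof is correct and takes a genuinely different (and in fact cleaner) route than the paper's. The paper argues progress by a \emph{pigeonhole} step: among any $|M|+1$ consecutive outer iterations some mode is pumped twice, so its contribution $\tau$ must have been fully consumed in between; by a second pigeonhole at least $\tau/|\Vtx{\Rr(m_i)}|$ of some one vertex is consumed; since each $\reduceComp$ call trades a decrease of $\pi(i,R(i))$ for a $\lambda$-increase in the ratio $1/\sigma(\Ff)(i)$, there is a uniform $\delta>0$ gained per $|M|+1$ iterations, and boundedness of $\lambda$ finishes the argument. You instead amortize over the scalar residual $T=\sum_{i,j}\pi(i,j)$: each iteration deposits exactly $\tau$, the invariant $T\le|M|\tau$ caps the pool, so $\sum_j D_j\ge(N-|M|)\tau$, and the exact identity $\Delta T=-\Delta\lambda\sum_i\sigma(\Ff)(i)$ in $\reduceComp$ converts withdrawals into $\lambda$-gain at rate at least $1/T_{\max}$. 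This avoids both pigeonhole steps, yields the explicit bound $N\le|M|+\lceil T_{\max}/\tau\rceil$, and makes the constant $T_{\max}$ transparent (the paper's $\delta$ is left implicit). Both arguments rely on the same safety invariants ($\lambda\le 1+\epsilon/(2\norm{\vv})$, $\norm{\px-\lambda\vv}\le\epsilon/2$) for the exit condition. One point you treat implicitly, as does the paper's proposition: the inner loop in $\nextMode$ must itself terminate so that $D_N$ is well-defined; the paper establishes this separately just before the proposition (each $\reduceComp$ call zeroes at least one vertex's contribution), and your invariants would support the same short argument if spelled out.
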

\begin{proof}
  We show that the algorithm terminates in finitely many steps by
  demonstrating the progress towards the target state. 
  We say that $\tau$ of mode $m$ is pumped in
  Projection $P$ at step $j$ of the algorithm, if the player chooses mode $m$ at step $j$.
  We show that there exists $\delta > 0$ such that, in every $|M|+1$ steps of
  the algorithm, the variable $\lambda$ of the current point $P$ increases by at least $\delta$. 
  Consider $|M|+1$ consecutive runs of the algorithm. 
  Since we pump a mode at every step of the
  algorithm, there is a mode $m_i$, which was chosen twice for pumping. This
  implies that at least $\tau$ of $m_i$ was consumed in $n$ steps.
  So, at least $\tau/|\Vtx{\Rr(m_i)}|$ was consumed by some vertex $\Rr(m_i)(j)$ of $\Rr(m_i)$.
  Every \cmms {} with $R(m_i)=\Rr(m_i)(j)$ guarantees a fixed increase $\delta_1$ in $\lambda$ for 
  $\tau/|vert(\Rr(m_i))|$ consumed of corner $\Rr(m_i)(j)$,
  since the solution vector $\sigma(\Ff)$ for every \cmms {} $\Ff$ is fixed. 
  Hence, $\delta$ equals the mininum of $\delta_1$ over all vertices of all
  modes is the least increase in $\lambda$. 
  This is the minimum increase in $n+1$ steps of the algorithm. 
  Hence, progress is proved.  

  Now, we show termination. Progress of $\delta$ in every $|M|+1$ iterations along with the condition 
  $\lambda \le 1+\epsilon/(2\norm{\vv})$ and increase in $\lambda$ bounded by $\epsilon$ in every step guarantee that,
  after some finite iterations, $1-\epsilon/(2\norm{\vv}) \le \lambda \le 1+\epsilon/(2\norm{\vv})$, in which case
  $\norm{\px-\lambda\vv} \le \epsilon/2$ implies $\px\in B_{\epsilon}(\px_t)$.
\end{proof}
The proof of Lemma~\ref{lem:algsAreFab} is now complete.

\subsection{Proof of Theorem~\ref{thm:key-complexity}}
With the two lemmas in place, we can proceed with proving the main complexity
results. 
We start with the positive result that states that the problem is
tractable in practice. 
\begin{theorem}
The reachability problem for \bmms{} and \mmms\ is fixed parameter tractable,
where the parameter is the number of variables. In particular, it is polynomial
for \bmms{} and \mmms\ with fixed dimension $d$.  
\end{theorem}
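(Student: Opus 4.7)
The plan is to combine Theorems~\ref{thmbms} and~\ref{thmcms} to convert robust reachability into a purely geometric condition about conical hulls, then dualize via Farkas's lemma, and finally exploit the fact that in fixed dimension $d$ a hyperplane arrangement of polynomially many hyperplanes has only polynomially many cells, each decidable in polynomial time.

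First, I would observe that by Theorems~\ref{thmbms} and~\ref{thmcms}, the target $\px_t$ is robustly reachable from $\px_0$ iff for \emph{every} selection $(v_1,\dots,v_{|M|})$ with $v_i \in \Vtx{\Rr(m_i)}$, the vector $\vv = \px_t - \px_0$ lies in the conical hull $\mathrm{cone}(v_1,\dots,v_{|M|})$. For an \mmms{} the set $\Rr(m_i)$ itself plays this role; for a \bmms{} whose rate polytopes are given by linear inequalities, the vertices can be enumerated in polynomial time for fixed $d$ by the upper-bound theorem, so the total number of extreme rate vectors $N$ is polynomial in the input size.

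Second, applying Farkas's lemma to each conical membership, non-reachability is equivalent to the existence of $\vec{h} \in \Real^d$ together with a selection $(v_1,\dots,v_{|M|})$ such that $\vec{h}\cdot \vv > 0$ and $\vec{h}\cdot v_i \leq 0$ for all $i$. Since for each fixed $\vec{h}$ we may choose $v_i$ to minimize $\vec{h}\cdot v$ over $\Vtx{\Rr(m_i)}$, this condenses to: there exists $\vec{h}$ with
\[
\vec{h}\cdot \vv > 0 \quad\text{and}\quad \min_{v \in \Vtx{\Rr(m_i)}} \vec{h}\cdot v \le 0 \text{ for every } i .
\]

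Third, both conjuncts depend only on the sign pattern of $\vec{h}\cdot v$ as $v$ ranges over the finite set $V = \vv \cup \bigcup_i \Vtx{\Rr(m_i)}$. I would therefore consider the central hyperplane arrangement in $\Real^d$ formed by $\{\vec{h} : \vec{h}\cdot v = 0\}$ for $v \in V$. Within each (open) cell the sign pattern is constant, so the above condition is either uniformly true or uniformly false, and can be checked by a single linear program on a representative point. For fixed $d$, the arrangement has only $O(|V|^d)$ cells, and they can be enumerated with representatives in polynomial time using standard computational-geometry machinery. Testing every cell decides robust reachability in polynomial time, yielding the FPT result parameterised by $d$.

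The main obstacle is the cell-enumeration step, but this is a classical result for arrangements in fixed dimension; once the Farkas reduction is in place, the rest is routine. A minor subtlety is the vertex enumeration of the rate polytopes for a \bmms{}, which is again standard in fixed dimension and contributes only a polynomial blow-up to $|V|$.
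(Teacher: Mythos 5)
Your proof is correct and reaches the same conclusion, but organizes the combinatorics differently from the paper. Both proofs start from the same reduction via Theorems~\ref{thmbms} and~\ref{thmcms}: $\px_t$ fails to be robustly reachable iff some instance in $\sem{\Ext(\Hh)}$ has a conical hull missing $\vv = \px_t - \px_0$, and separation (Farkas) converts this into the existence of a normal $\vec h$ through the origin that puts all selected extreme rates in a weak halfspace while strictly separating $\vv$. The paper then argues \emph{primally}: it shows (via a footnote argument by iterated projection) that the separating hyperplane can be taken to pass through $k-1$ of the extreme rate vectors, so it suffices to enumerate the polynomially many subsets of fewer than $d$ extreme rates and test each as a candidate hyperplane. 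You instead \emph{dualize}: you note that the existential over vertex selections can be pushed inward --- for a fixed normal $\vec h$ the environment's worst selection simply picks the minimizing vertex in each mode --- so non-reachability collapses to a single feasibility question in $\vec h$ that depends only on the sign pattern of $\vec h\cdot v$ over the polynomially many extreme vectors $v\in V$. Enumerating faces of the central hyperplane arrangement $\{\vec h : \vec h\cdot v = 0\}_{v\in V}$ makes the per-mode disjunctions deterministic, and fixed $d$ bounds the number of faces polynomially. This is arguably a cleaner route: the "polynomially many candidate hyperplanes" observation falls out of standard arrangement complexity rather than the paper's bespoke $k-1$-vertex argument. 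One subtlety you should make explicit: because the system mixes a strict inequality ($\vec h\cdot\vv>0$) with non-strict ones ($\vec h\cdot v_i\le 0$), a witness $\vec h$ may lie only on a lower-dimensional face of the arrangement and cannot always be perturbed into a full-dimensional chamber (e.g.\ when two nearly antipodal extreme rates force $\vec h$ onto a hyperplane). So the enumeration must cover faces of \emph{all} dimensions, not just the open top-dimensional cells; the $O(|V|^d)$ face count and the standard enumeration algorithms you cite do accommodate this, but the phrasing "each (open) cell" should be tightened accordingly.
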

\begin{proof}
We first observe that, for fixed dimensions, the number of extreme points per
mode is polynomial in the size of the defining matrix. Thus, $\Ext(\Hh)$ is
polynomial in $\Hh$ for \bmms{} $\Hh$. 

For the sake of simplicity assume that the starting point is the origin and we
wish to reach position $\overline{p}$. 

From Lemmas \ref{lem:CMS} and \ref{lem:algsAreFab}, we can infer that the
existence of a $\cmms \in \sem{\Ext(\Hh)}$ for which $\overline{p}$ is not
reachable from the origin is a necessary and sufficient criterion for refuting
reachability.  
By Theorem \ref{thmcms}, for a \cmms\ $\mathcal C$ the state $\overline{p}$ is
not reachable from the origin iff it is not in the conical hull of its
vertices. 
This is the case, iff there is a hyperplane through the origin that does not
contain $\overline{p}$, such that the half-space without $\overline{p}$ it
defines contains all vertices of $\mathcal C$. 

Let $k \leq d$ be the dimension of the hull of $\Vtx{\mathcal C}$.

We now distinguish two cases.
First, assume that $\overline{p}$ is not in the hull of $\Vtx{\mathcal C}$.
In order to validate this, we can simply take $k < d$ vectors of $\mathcal C$
and validate that they are a basis of $\Vtx{\mathcal C}$.  

Now we assume that $\overline{p}$ is in the hull of
$\Vtx{\mathcal C}$. We now work in $\Vtx{\mathcal C}$. 
There are $k-1$ vectors in $\Vtx{\mathcal C}$ that define a hyperplane in this
hull, s.t. the half-space without $\overline{p}$ it defines contains all
vertices of $\mathcal C$.
\footnote{We can start with projecting the the hyperplane we started with into
  the hull of $\Vtx{\mathcal C}$, and then stepwise move the hyperplane and
  reduce its dimension. 
If the space we are left with has more than one dimension, it is clear that we
can at least change it to include one vertex. 
We can change the hyperplane to include one vector, project everything to the
subspace orthogonal to this vector, and continue. 
The modes selected can be use to define a suitable hyperplane.}

The next observation we make is that
\begin{itemize}
\item the $k<d$ spanning vectors used from extreme points of different modes are
  sufficient to establish the first case in polynomial time for the \mmms,
  because one can cheaply check that all other modes contain a vector in the
  space they span, while  $\overline{p}$ is not a linear combination of them,
  and 

\item the $k-1$ spanning vectors used from extreme points of different modes are
  sufficient to establish the second case in polynomial time for the \mmms,
  because one can cheaply check that $\overline{p}$ is not a linear combination
  of them, and all other modes contain a vector in the $k$ dimensional space
  spanned by them and $\overline{p}$, and $\overline{p}$ does not enter
  positively in the linear combination. 
\end{itemize}

Thus, it suffices to perform cheap (polynomial) tests for sets of less than $d$
vectors. 
The number of these sets is polynomial for fixed $d$. 
\qed
\end{proof}

\begin{theorem}
The reachability problem for \bmms\ and \mmms\ are co-NP complete. 
\end{theorem}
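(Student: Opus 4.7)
For the upper bound, I would invoke Theorem~\ref{thmbms} to get a clean characterization of non-reachability: the target $\px_t$ is \emph{not} robustly reachable iff there exists a \cmms{} $\Cc \in \sem{\Ext(\Hh)}$ in which $\px_t$ is not reachable from $\px_0$. A non-deterministic algorithm for the complement therefore guesses, for every mode $m_i$ of $\Hh$, one vertex of the rate polytope $\Rr(m_i)$ (a polynomial-size certificate), and then verifies using Theorem~\ref{thmcms} that the linear system
\[
\px_t - \px_0 = \sum_{i=1}^{|M|} R(i) \cdot \vec{t}(i), \quad \vec{t} \in \Rplus^{|M|},
\]
is infeasible. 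Feasibility of a linear system is decidable in polynomial time, so its complement is verifiable in polynomial time; hence the complement of robust reachability lies in \textsc{NP}, placing the robust reachability problem in co-\textsc{NP}. For \mmms{} the argument is identical, with the guess simply picking one rate per mode from the (finite) rate set.

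For the lower bound, I would reduce from DNF-tautology, which is a canonical co-\textsc{NP}-complete problem. Given a DNF formula $\phi = \bigvee_{j=1}^{k} C_j$ over variables $x_1,\ldots,x_n$, I construct a \bmms{} $\Hh_\phi$ with one mode $m_i$ per variable $x_i$. The rate polytope $\Rr(m_i)$ is a segment with two vertices, $\vr_i^{\top}$ and $\vr_i^{\bot}$, encoding the truth value of $x_i$. By Theorem~\ref{thmbms}, $\px_t$ is robustly reachable in $\Hh_\phi$ iff in every \cmms{} of $\sem{\Ext(\Hh_\phi)}$, i.e., for every truth assignment $\alpha : \{x_1,\ldots,x_n\} \to \{\top, \bot\}$, the target $\px_t$ is reachable from $\px_0$; by Theorem~\ref{thmcms}, this amounts to $\px_t - \px_0$ lying in the conical hull of the chosen vertex-rates. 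The design goal is therefore to arrange the geometry so that this conical-hull condition holds for $\alpha$ exactly when $\alpha$ satisfies at least one clause $C_j$ of $\phi$.

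The main obstacle is engineering this geometric gadget. My plan is to work in $\Real^{n+k+1}$, using coordinates $(y_1,\ldots,y_n, z_1,\ldots,z_k, w)$. The last coordinate $w$ is a bookkeeping direction that must be driven to a positive value in $\px_t$. For each clause $C_j$, set $\px_t$'s $z_j$-coordinate to $1$ and $\px_0$'s $z_j$-coordinate to $0$; the $z_j$ component of $\vr_i^{\top}$ (resp.\ $\vr_i^{\bot}$) is a large positive number when the literal $x_i$ (resp.\ $\neg x_i$) appears in $C_j$, and a small negative ``penalty'' otherwise, calibrated so that $z_j$ can be pushed to $1$ using a non-negative combination of the selected vertex-rates exactly when $\alpha$ satisfies $C_j$. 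The $y_i$ coordinates enforce that the total ``amount'' of mode $m_i$ is tightly bounded, preventing the scheduler from compensating unsatisfied clauses by over-using other modes, and $w$ couples these constraints to ensure the LP of Theorem~\ref{thmcms} is feasible iff at least one clause is satisfied. Once this construction is carried out, $\px_t$ is robustly reachable in $\Hh_\phi$ iff every assignment satisfies $\phi$, i.e., iff $\phi$ is a tautology, completing the co-\textsc{NP}-hardness reduction. The same \bmms{} can be realized as an \mmms{} (since the rate polytopes are just their two-vertex sets), so hardness transfers to \mmms{}.
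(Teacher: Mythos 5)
The co-NP upper bound is exactly the paper's: guess a \cmms{} $\Cc \in \sem{\Ext(\Hh)}$ to refute reachability, verify non-reachability via Theorem~\ref{thmcms}'s LP in polynomial time; your use of Theorem~\ref{thmbms} (or equivalently Lemma~\ref{lem:CMS}) is correct.

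The lower bound, however, has a genuine gap, and your reduction is not the paper's even though it starts from the same source problem (DNF tautology). You build $n$ modes, one per variable, each a two-vertex segment, working in $\Real^{n+k+1}$ with a private bookkeeping coordinate $y_i$ per mode. But this makes the LP of Theorem~\ref{thmcms} over-constrained: there are only $n$ unknowns $t_1,\dots,t_n$ and $n+k+1$ equality constraints (this is point reachability, not region reachability, so every coordinate of $\px_t-\px_0$ must be hit exactly). If each $y_i$ pins $t_i$---which is what ``tightly bounded'' amounts to when $y_i$ is the only coordinate that mode $m_i$ drives---then the entire sum $\sum_i t_i \vr_i^{\alpha(i)}$ is a fixed function of $\alpha$, and the feasibility question degenerates to ``does this single point equal $\px_t-\px_0$.'' For a tautology you would need this to hold for \emph{all} $2^n$ assignments, which forces $\vr_i^\top = \vr_i^\bot$ for every $i$ and collapses the \bmms{} into a \cmms{}---a contradiction with the rate vectors carrying assignment information. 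There is no mechanism in your sketch for the scheduler to ``select which clause witnesses satisfaction,'' which is precisely what a disjunction requires.

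The paper resolves this differently. It uses a \emph{single} nondeterministic mode $m_e$ whose rate polytope $[-1,1]^n\times\{1\}\times\{0\}^2$ has the $2^n$ assignments as vertices, and then gives the scheduler a rich supply of \emph{deterministic} modes (one clause mode per clause and per nonempty subclause, roughly $7m$ of them, plus $2n+1$ correction modes). The subclauses are needed because the environment may choose an interior rate in $[-1,1]^n$, not just a vertex; the correction modes undo the $n-3$ variables outside the chosen clause; three extra variables $y_1,y_2,y_3$ enforce that exactly one time unit is spent in $m_e$, one in clause modes, and $n-3$ in correction modes. The crucial asymmetry---one adversarial mode, many scheduler modes---is what gives the LP of Theorem~\ref{thmcms} the slack needed to encode ``some clause is satisfied.'' Your construction inverts this ratio and cannot express the disjunction. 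To repair your approach you would at minimum need to add scheduler-controlled ``clause selector'' and ``correction'' modes so the number of unknowns exceeds the number of constraints, at which point you would essentially be rediscovering the paper's gadget.

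A final minor point: the paper also explains how to adapt the reduction to \mmms{} by splitting $m_e$ into $n$ two-point modes and splitting $y_1$ accordingly, since $\Ext$ of the $n$-cube has exponentially many vertices. Your one-mode-per-variable idea handles that succinctness issue naturally, which is a nice feature, but it does not rescue the missing disjunction mechanism.
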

\begin{proof}
The inclusion in co-NP is implied by Lemma \ref{lem:CMS}: to refute
reachability, it is enough to guess a \cmms\ $\mathcal C$ from $\sem{\Hh}$ of an
\mmms\ or $\sem{\Ext(\Hh)}$ from a \bmms\ $\Hh$ and to verify that the target is
not reachable from $\mathcal C$. 
The verification can be performed in polynomial time from Theorem \ref{thmcms}. 

We show co-NP hardness by reducing the validity checking of propositional logic  
formulas in DNF, where each clause is a conjunction of three literals, which
refer to different propositions. % (the co-problem of 3SAT). 
We give a full proof for \bmms.

Given such a formula $\varphi$ with $m$ clauses 
$D_1, \dots, D_m$ and $n \geq 3$ variables $x_1, \dots, x_n$, we construct a 
\bmms\ with less than $7m+2n+3$ modes and $n+3$ variables.
We name $n$ of these variables the propositions, $x_1, \dots, x_n$, and
there are three further variables, $y_1,y_2,y_3$, which are intuitively
manipulated in three different stages of a game. 
Initially, all variables are $0$, and the goal is to reach a state, where
$y_1=y_2=1$, $y_3=n-3$, and $x_1 = x_2 = \dots = x_n = 0$. The safety set for
all variables is the interval [-1,1]. 

Given $\varphi=D_1 \vee D_2 \vee \dots \vee D_m$,
where each $D_i$ has 3 literals, we consider subclauses of $D_1, \dots, D_m$. 
Each $D_i$ has 6 non-empty subclauses. Considering the empty clause as well, we
obtain $l \leq 7m+1$ clauses $D_1, \dots, D_m, D_{m+1}, \dots, D_l$. 
Note that we do not change $\varphi$, we only need the new clauses for technical reasons. 
Let $N(D_i)=\{j \mid x_j$ or $\neg x_j$ occurs in $D_i\}$ for all $i = 1, \ldots, l$. 

The \bmms\ has only one nondeterministic mode, $m_e$, which is also the initial mode. 
Intuitively, the environment chooses the valuation of the variables in this mode.
Our \bmms\ allows all rate vectors with $\mathcal{R}(m_e)(x_i)\in [-1,1]$ for $1\leq i \leq n$,
$\mathcal{R}(m_e)(y_1) = 1$, and $\mathcal{R}(m_e)(y_2) = \mathcal{R}(m_e)(y_3) =0$.
Intuitively, the environment tries to select a valuation of the variables $x_1,
\dots, x_n$ that does not satisfy $\varphi$ in this mode, where the value $1$
refers to `true' and $-1$ refers to `false'. 
$m_e$ is the only mode with $y_1 \neq 0$. 
Given the goal, the scheduler must be in the mode $m_e$ for exactly one time unit.

For each clause in the extended set of clauses (i.e., for $i = 1,\ldots,l$), our
\bmms\ has a clause mode, $m_i$. We have: 
\begin{itemize}
 \item $\mathcal{R}(m_i)(x_j) = 1$ if $\neg x_j$ occurs in $D_i$,
 \item $\mathcal{R}(m_i)(x_j) = - 1$ if $x_j$ occurs in $D_i$,
 \item $\mathcal{R}(m_i)(x_j) = 0$ for all $j\notin N(D_i)$, and
\item  $\mathcal{R}(m_i)(y_1) = \mathcal{R}(m_i)(y_3) = 0$, and
       $\mathcal{R}(m_i)(y_2) = 1$.
\end{itemize}
Intuitively, the scheduler selects a clause from $D_1,\ldots,D_m$, and resets
the values of the three variable occurring in the clause to $0$. The role of the
additional $l-m$ clauses is to account for the capability of the environment to
select values different from $-1$ and $1$. The clause modes are the only modes
with $y_2 \neq 0$. 
Given the goal, the scheduler must be in clause modes for exactly $1$ time unit.

For each of variable $x_i$, our \bmms\ has two correction modes, $m_i^+$ and
$m_i^-$, and one empty correction node $m_0$. 
We have:
\begin{itemize}
 \item $\mathcal{R}(m_i^+)(x_i) = 1$, 
       $\mathcal{R}(m_i^+)(x_j) = 0$ for all $j\neq i$,
       
       $\mathcal{R}(m_i^+)(y_1) = \mathcal{R}(m_i^+)(y_2) = 0$, and
       $\mathcal{R}(m_i^+)(y_3) = 1$, 
       
 \item $\mathcal{R}(m_0)(x_i) = 0$, for all $i=1,\ldots,n$,
       $\mathcal{R}(m_0)(y_1) = \mathcal{R}(m_0)(y_2) = 0$, and
       $\mathcal{R}(m_0)(y_3) = 1$, and

 \item $\mathcal{R}(m_i^-)(x_i) = - 1$, 
       $\mathcal{R}(m_i^-)(x_j) = 0$ for all $j\neq i$,
       
       $\mathcal{R}(m_i^-)(y_1) = \mathcal{R}(m_i^+)(y_2) = 0$, and
       $\mathcal{R}(m_i^-)(y_3) = 1$. 
\end{itemize}
Intuitively, the scheduler resets the values of the remaining $n-3$ variables,
not covered by the clause, to $0$ using these correction modes. 
The correction modes are the only modes with $y_3 \neq 0$.
Given the goal, the scheduler must be in correction modes for exactly $n-3$ time units.

We first observe that the reachability problem is polynomial in $\varphi$.
Next, we convince ourselves that the goal is reachable if $\varphi$ is valid.

In this case, the scheduler first stays in mode $m_e$ for one time unit.
It then identifies an $i \in \{1,\ldots,m\}$ such that, for all $j \in N(D_i)$,
if $x_j > 0$ then $x_j$ is a literal of $D_i$ and if $x_j < 0$ then $\neg x_j$
is a literal of $D_i$. 
The scheduler can then apply the clause modes for $D_i$ and/or its subclauses
for together one time unit such that, after this time unit, $x_j = 0$ holds for
all $j \in N(D_i)$. 

Next, the scheduler can apply, for all $j \notin N(D_i)$ the correction mode
$m_j^+$ for $x_j$ time units if $x_j > 0$ or $m_j^-$ for $-x_j$ time units if
$x_j <0$. Given a clause $D_i$,  
$\big|\big\{j\in \{1,\ldots,n\} \mid j \notin N(D_i)\big\}\big|=n-3$, this
brings us to a point with $x_1 = \ldots = x_n = 0$, $y_1=y_2=1$, and $y_3 \in
[0,n-3]$. From there, we can apply $m_0$ for $y_3+3-n$ time units to reach the
goal. 

Finally, we have to check that, if $\varphi$ is not valid, then the goal is not reachable.
To see this, note that the $m_e$ must be scheduled for exactly one time unit.
The environment can therefore select a configuration that does not satisfy
$\varphi$ and choose rates $-1$ for `false' and $1$ for `true' for this
configuration each time $m_0$ is scheduled. 

Now let us assume that the environment follows this policy, but the goal is reached.
First we observe that the system must be for $1$ time unit in $m_e$, for $1$
time unit in clause modes, and for $n-3$ time units in correction modes. 
Clearly, some clause mode $m_i$ is used for $t$ time units, with $t \in [0,1]$.
Note that, if $D_i$ refers to a clause that is satisfied by the configuration,
then $D_i$ has at most two literals. Now we observe that 
\begin{itemize}
\item when considering the effect of the $1$ time unit in $m_e$, we have
  $\sum_{j=1}^n |x_j| = n$, 

\item when considering the $1+t$ time units the system is in $m_e$ or a clause
  mode $m_i$, we have  $\sum_{j=1}^n |x_j| \geq n - 2t$,

\item when considering the $2$ time units the system is in $m_e$ or a clause
  mode $m_i$, we have  $\sum_{j=1}^n |x_j| \geq n + t - 3$ (no $D_i$ exists
  satisfying the chosen assignment, thus $t>0$)   

\item after the complete $n-1$ time units of the run, we have  $\sum_{j=1}^n
  |x_j| \geq t > 0$. 
\end{itemize}
This provides a contradiction to having reached the goal.

The proof can easily be extended to \mmms, however we have to overcome the
exponential size of the extreme-rates for $m_e$. 
In order to achieve this, we split $y_1$ into $n$ variables $y_1^1,\ldots,
y_1^n$ and replace $m_e$ by $n$ modes $m_e^1, \ldots, m_e^n$. $\mathcal
R(m_e^i)$ has two points, where $y_1^i = 1$, $x_i \in \{-1,1\}$ and all other
$y_2=y_3=x_j = 0$ for all $j\neq i$. For the goal, we require $y_1^1=\ldots =
y_1^n = 1$ instead of $y_1 = 1$. 
The only change is that the environment now selects the values for the atomic
propositions successively instead of concurrently. 
\end{proof}

\section{Applications}
\label{apps}
In this section, we show how to apply our results for
\begin{itemize}
 \item robust schedulability---to decide if, for all $\varepsilon >0$, there is a non-Zeno control strategy, which guarantees that the system stays in an $\varepsilon$ ball around the starting point;
 \item robust stability---to decide if, for all $\varepsilon > 0$, there is a
   non-Zeno control strategy, which guarantees that the system reaches an
   $\varepsilon$ ball around the target point and then never leaves it again (possibly while staying in a convex safety set where the starting vertex and the target $\px_t$ are inner points); and
 \item robust path following---to decide if, for all $\varepsilon >0$, a given path can be followed with $\varepsilon$ precision. 
\end{itemize}

\subsection{Robust Schedulability}
For ease of notation, we assume w.l.o.g.\ that this point is the origin $\zzero$, and we assume w.l.o.g.\ that $\varepsilon<1$.
The problem has been studied before in \cite{AFMT13}, but the proof we provide here is much simpler.

Robust schedulability can be derived from robust reachability by first tweaking the reachability problem slightly, such that one execution guarantees to stay within a $\varepsilon$-ball while consuming at least one time unit.

The central idea for adjusting a system with $d$ variables $x_1,\ldots,x_d$ is to add one variable, $c$, that serves as a clock. In all rates of all modes, the rate in which this new variable progresses is $1$.
Next, we define the safety set as $S = \{(x_1,\ldots,x_d,c) \mid \forall i \leq d.\ 2d|x_i|\leq \varepsilon\}$, or any other convex set that does not constrain the values of $c$ and that constraints the values of the remaining variables to be in the $\varepsilon$ ball around $\zzero$.
We now consider the problem of reaching the point $\px_t$ with $x_1 = \ldots = x_d = 0$ and $c = 1$ with $\varepsilon$ precision.
First, when projecting away the clock $c$, the safety set alone guarantees to be in an $\varepsilon$ ball around $\zzero$, and
second, the value of $c$ must be greater than $1-\varepsilon$, which implies with the constant rate $1$ that at least $1-\varepsilon$ time units have past.

If $\px_t$ is not robust reachable from $\zzero$, then there is an $\varepsilon$, for which $\ball{\varepsilon}{\px_t}$ cannot be reached.
Thus, no strategy exists to keep the system in an $\varepsilon/d$ ball around $\zzero$ for one time unit, as this control strategy could be applied to reach the $\varepsilon$ ball around $\px_t$.
If, however, $\px_t$ is robust reachable from $\zzero$, then we can repeatedly apply such a strategy, first for $\varepsilon_1$, then for $\varepsilon_2$, and so forth, where $\varepsilon_i = 2^{-i}\varepsilon$. It is easy to see that the resulting composed strategy is non-Zeno, as all components are finite and at least one time unit passes in each component.
It is also easy to see that the error can at most add up, such that one always stays in an $\varepsilon$ ball around the starting point.

\subsection{Robust Stability} 
Obviously, reachability to $\px_t$ and robust schedulability are prerequisites for robust stability.
To see that they are also sufficient, we assuming w.l.o.g.\ that the ball $\ball{\varepsilon}{\px_t}$ is contained in the safety set $S$.
It then suffices to reach an $\px_t$ with precision $\varepsilon/2$, and then to follow a robust reachability strategy to stay in an $\varepsilon/2$ ball around the point reached.

\subsection{Robust Path Following}
To robustly follow a piecewise linear path with precision $\varepsilon$, we can simply follow the first piece with precision $\varepsilon_1$, the second with $\varepsilon_2$, and so forth, where $\varepsilon_i = 2^{-i}\varepsilon$.
% Following each piece is an ordinary reachability problem.
Following a piecewise linear path is therefore possible with arbitrary precision if each segment can be followed individually with arbitrary precision. Conversely, if one of these segments cannot be followed with arbitrary precision, then, obviously, the complete path cannot be followed with arbitrary precision. Note that the necessary and sufficient criterion extend to infinite paths composed of an infinite sequence of segments.

Following a segment with arbitrary precision is essentially a robust reachability problem.
If the endpoint of the segment is robustly reachable from its starting point, then we can, for a given $\varepsilon$, define an convex set, where each point has distance at most $\varepsilon$ to the segment, and that contains the $\varepsilon/2$~ball around the goal. We then run Algorithm~\ref{ReachabilityAlg}.

This can be extended to piecewise smooth (continuously differentiable) paths that can be approximated arbitrarily closely by a (possibly infinite) sequence of segments, where the endpoint of each segment is reachable from its starting point. This is the case iff the derivation satisfies everywhere (where defined) the condition for robust reachability.

\section{Minimum Dwell-Time Condition}
\label{sec:dwell}
In this section, we consider an extension of robust reachability to robust
reachability with or without dwell-time or discrete sampling. We assume
w.l.o.g.\ that the minimal dwell-time or the sampling rate, respectively, is
$1$. 

\begin{theorem}
  The robust reachability problem with dwell-time requirement is decidable for
  $\bmms{}$ where all rate vectors are positive. 
\end{theorem}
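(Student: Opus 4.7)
The plan is to exploit the positivity of the rates, together with the dwell-time requirement, to bound the length of any potentially successful play uniformly in the tolerance $\varepsilon$, and then to translate robust reachability into a sentence in the first-order theory of the reals, which is decidable by Tarski--Seidenberg.

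First I would set $\alpha := \min_{m \in M}\min_{\vec r \in \Vtx{\Rr(m)}}\min_i \vec r(i)$; since every $\Rr(m)$ is a convex polytope of strictly positive vectors and $M$ is finite, $\alpha > 0$. Along any run each coordinate of the state is monotonically increasing at speed at least $\alpha$. If $\px_t(i) \leq \px_0(i)$ for some $i$ while $\px_t \ne \px_0$, then $\px_t$ is trivially not robustly reachable, so I may assume $\px_t(i) > \px_0(i)$ for all $i$. Since the dwell-time is at least $1$, a run that enters $\ball{\varepsilon}{\px_t}$ after $k$ mode switches satisfies $\px_0(i) + \alpha k \leq \px_k(i) < \px_t(i) + \varepsilon$ for every coordinate $i$. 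Restricting w.l.o.g.\ to $\varepsilon \in (0,1)$ then yields the uniform bound $N := \lceil \min_i(\px_t(i) + 1 - \px_0(i))/\alpha \rceil$ on the number of mode switches along any winning run, independent of $\varepsilon$.

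Next I would express depth-$N$ robust reachability directly as a first-order sentence over $(\Real;+,\cdot,<)$. The intermediate states $\px_k = \px_0 + \sum_{j \leq k} t_j \vec r_j$ are polynomial in the real variables $t_j$ and $\vec r_j$; the constraints $t_j \geq 1$ and $\vec r_j \in \Rr(m_j)$ are conjunctions of linear inequalities; and $\|\px_k - \px_t\|^2 < \varepsilon^2$ is a polynomial inequality. Treating the discrete mode at step $i$ as chosen by a finite disjunction placed immediately inside the corresponding $\exists t_i \forall \vec r_i$ alternation (so that the scheduler's mode choice may depend on all previously revealed rates $\vec r_j$, $j < i$), robust reachability becomes the sentence
\[
\forall \varepsilon > 0\; \exists (m_1,t_1)\; \forall \vec r_1 \cdots \exists (m_N,t_N)\; \forall \vec r_N\; \bigvee_{k \leq N}\|\px_k - \px_t\|^2 < \varepsilon^2,
\]
where the inner disjunction models early termination of the scheduler's strategy. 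This formula has boundedly many alternations and polynomial atomic predicates, and is therefore decidable by Tarski--Seidenberg.

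The main obstacle is securing the uniform depth bound $N$: the scheduler may pick a different strategy for each $\varepsilon$, so the argument must show that no such strategy profits from exceeding $N$ rounds, however small $\varepsilon$ is. Positivity together with dwell-time seals this, since total elapsed time is already bounded by $N$ under the loosest tolerance $\varepsilon < 1$, and any additional dwell strictly overshoots in some coordinate. This is exactly where strict positivity is needed---a single zero rate component would permit arbitrarily long runs that keep that coordinate fixed, leaving no bound on the game's depth and blocking the Tarski--Seidenberg reduction, consistent with the undecidability of the fully general dwell-time problem noted earlier.
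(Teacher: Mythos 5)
Your proposal is correct and follows essentially the same route as the paper: positivity of all rates together with the unit dwell-time yields a uniform bound on the number of steps of any winning play, after which robust reachability is encoded as a sentence in the first-order theory of the reals and decided via Tarski--Seidenberg. You in fact supply the details the paper leaves implicit, namely the explicit computation of the step bound via the minimum rate component $\alpha$ and the quantifier-alternation structure of the formula.
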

\begin{proof}
W.l.o.g assume that the starting state in $\zzero$ and the target state is
$\px_t$. 
Notice that since all the rate vectors are positive, and every mode should be
taken for at least $1$ time-unit, there is a bound $K$ such that the
target state is not reachable if it is not reachable in $K$ steps.
($K$ is easy to compute.)

For robust reachability under bounded steps one can write a formula in
first-order theory of reals. 
Now the decidability of the robust reachability with dwell-time requirement for
\bmms{} with positive rate vectors follows from the decidability of the
first-order theory of reals.  
\end{proof}

\begin{theorem}
The reachability problem is $\mathsf{EXPTIME}$-hard for \mmms\ with dwell time
requirements or discrete sampling.   
\label{thm:exphard}
\end{theorem}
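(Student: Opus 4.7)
My plan is to establish EXPTIME-hardness by a polynomial-time reduction from \emph{countdown games} on weighted two-player graphs, a classical EXPTIME-complete problem. In such a game, a finite directed graph with nodes partitioned between a protagonist and an antagonist and with edges labelled by positive integer weights (written in binary) is given, together with an initial integer counter value $K$. The players alternately move along an outgoing edge of the current node, subtracting the edge's weight from the counter, and the protagonist wins iff the counter is driven to exactly $0$ at a designated target node. With the discrete-sampling period fixed to $1$, a single round of the \mmms{} naturally corresponds to a single round of the countdown game.

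The encoding I would use has one MMMS variable $c$ for the counter, a one-hot family $\{\ell_v\}_{v \in V}$ of variables for the current node, and a pair of phase variables $y_1, y_2$ tracking whose turn it is (an additional variable $y_3$ counts rounds, reminiscent of the gadget from the co-NP-hardness proof of Theorem~\ref{thm:key-complexity}, so that the target configuration is declared only at a prescribed depth). For each protagonist-owned node $v$ and each outgoing edge $v \to u$ of weight $k$, I introduce a deterministic scheduler mode whose unit-time rate vector simultaneously updates $\ell_v \gets \ell_v - 1$, $\ell_u \gets \ell_u + 1$, $c \gets c - k$, flips the phase, and decrements $y_3$. For each antagonist-owned node, I introduce a single mode whose finite rate set contains one extreme rate per outgoing edge, so that the environment's choice of rate exactly mirrors the antagonist's choice of move. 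The target point is the configuration with $c = 0$, $\ell$ one-hot on the designated goal node, the appropriate phase, and $y_3 = 0$; the safety set pins each $\ell_v$ to $[0,1]$, each phase variable to its legal range, and requires $y_3 \geq 0$.

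The principal obstacle will be preventing the scheduler from cheating, that is, ensuring that scheduler modes can only be played at their genuine source node and only while it is the scheduler's turn, and that an antagonist-node mode is forced precisely on the antagonist's turn. Since the sampling is discrete, a mode applied while its source indicator $\ell_v = 0$ immediately drives that variable to $-1$ and violates the safety set, and a mode applied in the wrong phase drives a phase variable out of range; symmetric constraints force the scheduler to switch to the correct antagonist-node mode when the phase so dictates. Correctness then follows by a direct bijection between plays of the countdown game and $S$-safe scheduler--environment runs of the constructed system, and since the number of modes is linear in the graph size and the number of variables is linear in $|V|$, the reduction is polynomial. The same construction works for the minimal dwell-time variant with dwell-time $1$, since every gadget already consumes exactly one time unit.
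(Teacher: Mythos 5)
Your high-level strategy---reduce from countdown games, encode the current node one-hot, keep the counter/budget as a variable, and use the safety polytope to forbid illegal mode applications under discrete sampling---is indeed the same as the paper's. But two issues make your write-up, as stated, defective.

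First, the game you reduce from is not the countdown game of Jurdzi\'nski, Laroussinie and Sproston, which is the version the paper uses and for which EXPTIME-completeness is proved in~\cite{JLS08}. In that game the state is a pair $(n,C)$; Player~1 picks a \emph{weight} $l\le C$ such that some transition $(n,l,n')$ exists, and then Player~2 picks the successor \emph{among all transitions labelled $l$}; Player~1 wins as soon as a configuration $(n,0)$ is reached, for \emph{any} $n$. Your version is a turn-based game on a graph whose nodes are partitioned between the two players and that has a single designated target node. This is a genuinely different game; it may well be EXPTIME-hard, but you do not prove it, give a reduction from the standard formulation, or cite a source, so your hardness claim currently rests on an unsupported premise. (The discrepancy also affects the mode structure: in the paper modes are the weights $l$ and the environment picks the outgoing transition of weight $l$, whereas your modes are per-edge.)

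Second, the auxiliary variable $y_3$ that ``counts rounds'' and is required to equal $0$ at the target is not only unnecessary but breaks the reduction. In a countdown game the number of rounds until the counter hits $0$ depends on the weights chosen; different winning plays have different lengths. Requiring $y_3=0$ at the target forces every accepting run to have exactly the same depth, which does not correspond to the game's semantics, so your construction does not faithfully capture ``Player~1 wins''. The paper needs no such device: the budget variable $B$ is strictly decreasing (all weights are positive), so termination is automatic, and the ``target region $B=0$'' is turned into the single target point $\zzero$ by adding, for each node $n$, a \emph{node mode} with rate $n=-1$ and everything else $0$---applicable only from a state encoding $(n,0)$ and only for duration $1$, after which no further mode can be taken safely. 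You will need some device of this kind (or a justification that a designated-target-node version of countdown games is still EXPTIME-hard) in place of $y_3$. The remainder of your sketch---one-hot safety bounds on $\ell_v$, per-edge rates subtracting the weight from $c$, and tightening the safety box to $[-\varepsilon,1+\varepsilon]$ to pass from discrete sampling to dwell time---matches the paper's argument and is fine.
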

\begin{proof}

We prove the result by a reduction from countdown games \cite{JLS08}. 
A countdown game is a tuple $\mathcal{G}=(N, T,$ $(n_0,B_0))$, where $N$ is a
finite set  of nodes, $T \subseteq N \times \mathbb{N}_{>0} \times N$ a set of
transitions, and $(n_0,B_0) \in N \times \mathbb{N}_{>0}$ is the initial
configuration. 
The states of a countdown game, also called its configurations, are
$N \times \{0,1, \ldots, B_0\}$. 

From any configuration $(n,B)$, Player 1 chooses a number
$l \in \mathbb{N}_{>0}$ such that there exists a transition $(n, l, n') \in T$  
with $l \leq C$.
Among all the available transitions of the form $(n,l,n')$, Player 2 
selects an appropriate transition $(n,l,n'') \in T$. 
The new configuration is then $(n'', C-l)$.

Player 1 wins when a configuration $(n,0)$ is reached, and otherwise loses when a configuration $(n,C)$ is reached where Player 1 cannot move.
This is the case when, for all outgoing transitions $(n,l,n')\in T$, we have $l > C$.
W.l.o.g., we assume that there are no transitions $(n,l,n) \in T$ for any $l \in \mathbb{N}_{>0}$.

We now translate this game into a sampled robust reachability problem, where the scheduler takes the role of Player~1, while the environment takes the role of Player 2.

The translation uses $|N|+1$ variables, a variable $B$ reflecting the remaining time budget and a variable $n$ for each element $n\in N$.
Being in state $(n,C)$ in the countdown game is intuitively represented by $B=C$, $n=1$, and $n' = 0$ for all states $n' \neq n$.
The initial state is given by $B=B_0$, $n_0=1$, and  $n = 0$ for all states $n \neq n_0$, i.e., by the state representing the initial configuration $(n_0,B_0)$.
The target is~$\zzero$.
The safety set is described by $n \in [-0.5,1.5]$ for all $n \in N$ and $B \in [-0.5, B_0+1]$.

The rates Player 1 selects become the modes of our $\mmms$.
Thus, we have a mode $l$ for each $l \in \mathbb N_{>0}$, for which a transition $(n,l,n') \in T$ exists.
The selection of the concrete transition by Player 2 becomes the choice of the mode by the environment.
We therefore have, for a given mode $l$, one rate vector for each transition $(n,l,n') \in T$, where the rates are $n=-1$, $n' = 1$,  $B = -l$, and $n'' = 0$ for all $n'' \in N \smallsetminus \{n,n'\}$.

Before we describe how to translate (winning) strategies, we first note that, from each translation of a configuration, the scheduler cannot make a move of length $\geq 2$. 
We first replace the target vertex by a the target region $B=0$.
For this target region, there is a simple 1:1 translation between the moves and states for the countdown game and the reachability game, where each move $l$ of Player 1 in the countdown game corresponds to the move $(l,1)$ of the scheduler, while every move $(n,l,n')$ of Player 2 corresponds to the environment selecting the corresponding rate.

To return to normal reachability, we add, for each node $n\in N$, a mode $n$.
This mode has only one rate, with  $B=0$, $n = - 1$, and $n' =0$ for all $n' \neq n$. 
Note that such a mode $n$ can only be applied from states that encode $(n,C)$, and it can only be applied with duration $1$.
Once such a mode is applied, no further mode (of either type) can be applied in the future, as one variable $n'' \in N$ would afterwards have the value $-1$.

Now, a winning for Player 1 corresponds to winning strategy of the scheduler that ends by applying such a mode.
This closes the proof for discrete sampling.

To expand this to dwell time, we sharpen the bounds for the safety set to $n \in [-\varepsilon,1+\varepsilon]$ for all $n \in N$ and $B \in [-\varepsilon, B_0+1]$ for some $\varepsilon < \frac{1}{8B_0}$.
Now, if Player 1 wins, then the scheduler wins with the same strategy as above.
If Player~2 wins, Player $1$ is stuck in $<B_0$ move pairs.
When the environment mimics such a strategy (until Player $1$ is stuck) then the game reaches a position, where each variable value is in a $2(B_0-1) \varepsilon < \frac{1}{4}$ range around the value it would have, had the scheduler played a duration of $1$ for each move.
Thus, the scheduler can, at most, play a ``node mode'' $n \in N$ once, but it cannot reduce the value of $B$ without leaving the safety region.
\end{proof}

For discrete sampled schedulers, we can easily show inclusion in $\mathsf{EXPTIME}$ by exploring the complete state-space.
To do this, we can proceed in two steps. In a first step, we expand all values in the problem setting to integers by multiplying every value with the least common multiple of all denominators. (Note that this is a polynomial time reduction.)
Then we can be sure that all values are at integer points, and we can simply explore the complete state-space, which is exponential in the setting.
As the lower bound is inherited from the previous proof, we get:

\begin{corollary}
The robust reachability problem with discrete sampling is $\mathsf{EXPTIME}$-complete.
\end{corollary}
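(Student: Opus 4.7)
The lower bound is immediate from Theorem~\ref{thm:exphard}, so the plan is to establish only the matching $\mathsf{EXPTIME}$ upper bound, following the two-step strategy sketched in the paragraph preceding the corollary.

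First, I would normalise the instance so that all numerical data are integers. Since the sampling rate is $1$, a scheduler can only pick mode durations that are positive integers, so after multiplying every rate-vector coordinate, every coordinate of $\px_0$ and $\px_t$, and every bound of the safety polytope $S$ by the least common multiple $L$ of all denominators appearing in the input, we obtain an equivalent instance in which all the data are integers. This scaling is a polynomial-time reduction in the bit-length, and because $S$ is bounded, there is a value $V$, bounded exponentially in the input size, such that every coordinate of every reachable state lies in the integer interval $[-V,V]$.

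Second, I would argue that discrete sampling forces the reachable configurations of the game to live in a finite, exponentially large grid. Concretely, after the scaling each one-step update $\px \mapsto \px + 1\cdot \vr$ with $\vr \in \Rr(m)$ keeps the state on the integer lattice $\Int^n$ (for the $\mmms$ case) or on a finite set of rational points with bounded denominator (for the $\bmms$ case, where the environment chooses from the convex polytope $\Rr(m)$, but only the extreme points need be considered by the standard convexity argument used in Theorem~\ref{thmbms}). Combined with the coordinate bound $V$, the set of configurations reachable without leaving $S$ has size at most $(2V+1)^n$, which is exponential in the input size.

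Finally, on this exponentially large finite arena, the robust reachability problem becomes a two-player reachability game between scheduler (choosing mode) and environment (choosing extreme rate vector), which can be solved by the standard attractor computation in time polynomial in the size of the arena, hence in $\mathsf{EXPTIME}$ overall. The main obstacle I anticipate is justifying cleanly that one may restrict the environment to extreme rates in the $\bmms$ setting even in presence of the sampling constraint; this should follow by reusing Theorem~\ref{thmbms} (or a direct convexity argument) to show that if the scheduler loses against some environment strategy, it loses against one that always plays a vertex of $\Rr(m)$, keeping the state on the finite lattice and thus inside the finite arena explored above.
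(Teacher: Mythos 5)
Your approach matches the paper's own argument: both normalise the instance to integer data via the LCM of denominators (a polynomial-time reduction), observe that the reachable configurations under unit-length steps then live in an exponentially large finite arena, solve the resulting two-player reachability game by exhaustive exploration, and inherit the lower bound from Theorem~\ref{thm:exphard}. The concern you flag about restricting the environment to extreme rates in the \bmms{} case is a real one, but the paper's sketch glosses over it in exactly the same way; the state-space argument is cleanest for \mmms{} (where the environment's choices are finitely many), which is also the model the hardness proof uses, so you and the paper are in the same boat here rather than you missing a step the paper supplies.
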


\section{Generalized Models}
\label{sec:generalized}
In this section we consider generalization of the \bmms{} by adding structure to
the model using Alur-Dill style~\cite{alurDill94}  clock variables,
i.e. variables with rate $1$ in every mode.
In the resulting model only clock variables can occur on the transitions where
they can be compared against natural numbers or can be reset. 
All other non-clock variables will behave like \bmms{}. 
We show that for \bmms{} with clock the robust reachability problem is
undecidable for \bmms{} with 2 variables and 1 clock, and \bmms{} with 1
variable and 2 clocks.

We prove the undecidability of this problem by giving a reduction from the
halting problem for two-counter machines. 
Formally, a two-counter machine (Minsky machine) $\Aa$ is a tuple $(L, C)$ where:
${L = \set{\ell_0, \ell_1, \ldots, \ell_n}}$ is the set of
instructions. There is a distinguished terminal instruction  $\ell_n$ called
HALT. 
${C = \set{c_1, c_2}}$ is the set of two \emph{counters};
the instructions $L$ are one of the following types:
\begin{enumerate}
\item 
  (increment $c$) $\ell_i : c := c+1$;  goto  $\ell_k$,
\item (decrement $c$) $\ell_i : c := c-1$;  goto  $\ell_k$,
\item (zero-check $c$) $\ell_i$ : if $(c >0)$ then goto $\ell_k$
  else goto $\ell_m$,
\item (Halt) $\ell_n:$ HALT.
\end{enumerate}
where $c \in C$, $\ell_i, \ell_k, \ell_m \in L$.

A configuration of a two-counter machine is a tuple $(\ell, c, d)$ where
$\ell \in L$ is an instruction, and $c, d$ are natural numbers that specify the value
of counters $c_1$ and $c_2$, respectively.
The initial configuration is $(\ell_0, 0, 0)$.
A run of a two-counter machine is a (finite or infinite) sequence of
configurations $\seq{k_0, k_1, \ldots}$ where $k_0$ is the initial
configuration, and the relation between subsequent configurations is
governed by transitions between respective instructions.
The run is a finite sequence if and only if the last configuration is
the terminal instruction $\ell_n$.
Note that a two-counter  machine has exactly one run starting from the initial
configuration. 
The \emph{halting problem} for a two-counter machine asks whether 
its unique run ends at the terminal instruction $\ell_n$.
It is well known that the halting problem for two-counter machines is
undecidable. 

\begin{theorem}
  The robust reachability problem is undecidable for \bmms{} with $2$ variables
  and $1$ clock.
\end{theorem}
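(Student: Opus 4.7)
The plan is to reduce the halting problem for two-counter Minsky machines to robust reachability for \bmms{} with $2$ variables and $1$ clock. Given a machine $\Aa = (L,C)$, I will construct a \bmms{} $\Hh_\Aa$ with variables $x_1, x_2$ (for the two counters) and clock $z$, together with a convex safety polytope $S$, a start state, and a target state, such that the target is robustly reachable iff $\Aa$ halts.

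The encoding stores the value $c_i$ of counter $i$ directly in $x_i$, and uses $z$ to force every simulated instruction to take exactly one time unit: each instruction mode has clock rate $1$, the outgoing transition is guarded by $z = 1$, and $z$ is reset on that transition. For an increment of $c_1$ I use a mode whose rate on $x_1$ is $1$ (and $0$ on $x_2$); decrement uses rate $-1$; the analogous modes are introduced for $c_2$. The safety polytope $S$ includes $x_1 \geq 0$ and $x_2 \geq 0$, so a decrement applied to a zero counter immediately exits $S$ and is forbidden.

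The critical step is the zero-test $\ell_i : \texttt{if } c_1 > 0 \texttt{ then } \ell_k \texttt{ else } \ell_m$, because clock guards cannot mention $x_1$. I will split this into a scheduler-controlled binary choice between two gadgets, one claiming ``$c_1 = 0$'' and one claiming ``$c_1 > 0$'', and use the bounded-rate nondeterminism to punish wrong guesses. The ``$c_1 > 0$'' gadget will include a sub-mode whose rate polytope on $x_1$ is a segment containing negative values, so that the environment can, within the single-unit window enforced by $z$, drive $x_1$ strictly below $0$ unless $x_1$ was already bounded away from $0$; this uses $S$ to eliminate the cheating branch. The ``$c_1 = 0$'' gadget symmetrically uses an adversarial rate segment that leaves $x_1$ at $0$ only when it started at $0$ and otherwise exits $S$. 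The existing clock $z$ and rate uncertainty suffice to close both gadgets within one instruction tick, which is why $2$ variables plus $1$ clock is enough; this is the point where I expect to spend the most care, since the two counter variables and the single clock must be reused to implement both the timing and the nondeterministic probe.

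The target state represents the HALT instruction $\ell_n$ (with any values of $x_1, x_2$ within $S$). If $\Aa$ halts, the scheduler mimics its unique run, guessing each zero-test correctly; the corresponding strategy reaches the target exactly, hence $\varepsilon$-reaches it for every $\varepsilon > 0$. If $\Aa$ does not halt, then any would-be winning strategy must at some point either (i) deviate from the faithful simulation, in which case the zero-test gadget gives the environment a response that forces a safety violation within a fixed gap, or (ii) simulate faithfully forever, in which case $\ell_n$ is never entered and the target is not $\varepsilon$-reachable for any $\varepsilon$ smaller than the fixed distance from simulation-reachable states to $\px_t$. The main obstacle I anticipate is verifying formally that the zero-test gadget has the described robustness property within the tight resource budget; once that lemma is in place, the overall reduction and the iff is routine.
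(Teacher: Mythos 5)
Your overall scaffolding matches the paper's (reduce from two-counter Minsky machines, encode counters directly as variables, use the clock to force one time unit per instruction, let the scheduler guess the outcome of each zero-test and punish wrong guesses), but the punishment mechanism you chose for the zero-test does not close, and this is where the argument genuinely breaks. Consider your ``$c_1=0$'' gadget. You want the environment to be able to drive the run out of $S$ when $x_1\geq 1$ but not when $x_1=0$. With the lower-bound facet $x_1\geq 0$ this is impossible: over a single unit of time, any rate in the gadget's polytope that can push a positive $x_1$ below $0$ pushes $x_1=0$ below $0$ even more easily, so the environment could falsify the \emph{correct} guess. If instead you try to use an upper-bound facet $x_1\leq U$, then $U$ must exceed every counter value arising in the (halting) simulation, a bound you cannot compute from the instance, and $S$ is required by the model to be a bounded polytope in the first place. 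So the safety set lets you certify ``$c_1>0$'' (your gadget with rates in $[-1,0]$ works for that direction) but not ``$c_1=0$'', and an unverified ``$c_1=0$'' claim lets the scheduler jump to the wrong successor location and potentially reach HALT even for a non-halting machine.

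The paper's construction routes around exactly this obstacle by using \emph{reachability of the target}, not safety, as the punishment, and by making the challenge widget \emph{terminal}. The scheduler moves to a location $Z$ or $NZ$; the environment either lets the simulation continue (rate $(0,0)$) or challenges (rate $(0,-100)$), and after a challenge the scheduler must enter a check widget that tries to hit the target $y=z=0$ directly. That widget restores the $-100$, and then (for the nonzero claim) decrements $z$ by whole units via the clock's self-loop $x=1?,x:=0$; it can hit $z=0$ exactly iff $z$ was a positive integer, and the zero-claim widget hits $z=0$ iff $z$ was already $0$. No bound on counter values is needed for this, and because the widget is terminal there is no need to ``undo'' the adversary's rate and rejoin the simulation. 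By contrast, your inline check forces you to restore the corrupted $x_1$ and re-synchronize the clock before the next instruction tick, which you acknowledge as the hard part; even if that bookkeeping is worked out, it only patches the ``$c_1>0$'' side, while the ``$c_1=0$'' side needs the target-based idea you currently lack.
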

\begin{proof}
  For the sake of simplicity of presentation we prove the undecidability of the
  exact reachability problem. 
  The proof can be adapted to robust reachability case.
  Given a Minsky machine we construct a structured BMS ${\mathcal H}$ with 2
  variables and a single clock that is reset on every transition.  
  The clock is used in a simple way just to ensue that at each mode exactly $1$
  unit of time is spent by the scheduler. 
  We use two variables $y$ and $z$ to encode the values of the two counters
  $c_1$ and $c_2$, and one mode corresponding to each location of the Minsky
  machine. 
  For each zero check instruction we further use five extra modes and a special
  target mode $\Tt$ depicted by a double circle. 
  Our goal is to reach mode ${\mathcal T}$ with $y=z=0$.

  The simulation of the increment and decrement instruction is straightforward. 
  In an increment $c_1$ location the rate is given by $(1,0)$, while in
  decrement location the rate is give by $(-1, 0)$. 
  Clock variables are used to ensure that exactly one time unit is spent in each
  such mode. 
%%     of time in $\ell_k$. This amounts to incrementing variable $y$ by 1, keeping 
%%     $z$ the same, simulating incrementing $c_1$. 

%%  \begin{enumerate}
%%   \item Simulation of an increment instruction $\ell_i : c_1 := c_1+1$;  goto
%%     $\ell_k$.  
%%     The rate at location $\ell_i$ is given by $(1,0)$, and we wait for exactly one unit 
%%     of time in $\ell_k$. This amounts to incrementing variable $y$ by 1, keeping 
%%     $z$ the same, simulating incrementing $c_1$. 
%%     \input{inc_cnt_thm1}

%% \item Simulation of a decrement instruction $\ell_i : c_1 := c_1-1$;  goto  $\ell_k$. 
%% The rate at location $\ell_i$ is given by $(-1,0)$, and we wait for exactly one unit 
%% of time in $\ell_k$. This amounts to decrementing variable $y$ by 1, keeping 
%% $z$ the same, simulating decrementing $c_1$. 
 
%% \input{dec_cnt_thm1}  
% \item   
\begin{figure}[t]
\begin{center}
\scalebox{0.7}{
\begin{tikzpicture}[->,>=stealth',shorten >=1pt,auto,node distance=1.8cm,
  semithick]
  \tikzstyle{every state}=[minimum size=3em,rounded rectangle]
  \node[initial,initial text={},state] at (-3, 0) (A) {$\begin{array}{c}\ell_i \\ (0,0) \end{array}$} ;
  
  \node[state] at (0,1.5) (B) {$\begin{array}{c}NZ \\ \{(0,0), (0,-100)\} \end{array}$} ;
  \node[state] at (0,-1.5) (B1) {$\begin{array}{c}Z \\ \{(0,0), (0,-100)\} \end{array}$} ;
  \node[state, player2] at (6,1.5) (C1) {$\ell_2$};
  \node[state, player2] at (6,-1.5) (D1) {$\ell_1$};
  
% Chk_1
  \node[state]at (0, 3.5) (TA) {$\begin{array}{c}O \\ (0,100) \end{array}$} ;
  \node[state] at (3,3.5) (TB) {$\begin{array}{c}N \\ (0,-1) \end{array}$} ;
  \node[state,accepting] at (6,3.5) (TC) {$\begin{array}{c}T \\ (-1,0) \end{array}$} ;
  
  \draw[dashed,draw=gray,rounded corners=10pt] (-1.2,2.5) rectangle (7.5,5.3);
  \node[rotate=90,color=gray] at (-1.6, 3.5) (N) {$Chk_1$};

  \path (TA) edge node[above] {$ x = 1?$} node[below]{$x:=0$} (TB);
  \path (TB) edge node[above] {$ x = 1?$} node[below]{$x:=0$} (TC);
 \path (TB) edge[loop above] node[above] {$x = 1?, x:=0$} (TB);
 \path (TC) edge[loop above] node[above] {$x = 1?, x:=0$} (TC);

%Chk_2
 \node[state] (PA) at (0, -3.5) {$\begin{array}{c}O \\ (0,100) \end{array}$} ;
 \node[state,accepting] at (6,-3.5) (PC) {$\begin{array}{c}T \\ (-1,0) \end{array}$} ;
 \path (PA) edge node[above] {$ x = 1?, x:=0$} (PC);
 \path (PC) edge[loop above] node[above] {$ x = 1?, x:=0$} (C);

 \node[rotate=90,color=gray] at (-1.4, -3.5) (N2) {$Chk_2$};
 \draw[dashed,draw=gray,rounded corners=10pt] (-1.2,-4.5) rectangle (7.5,-2.5);

 \draw[rounded corners] (A) -- +(0, 1.5) --  node[above] {$x=1$}
 node[below] {$x:=0$}(B);  
 \draw[rounded corners] (A) -- +(0, -1.5) --  node[above] {$x=1$}
 node[below] {$x:=0$}(B1);  
 
 \path (B) edge node[above] {$x = 1?, x:=0$} (C1);
 \path (B1) edge node[above] {$x=1?, x:=0$} (D1);    
 
 \path (B) edge node[left] {$x = 1?$} node[right] {$x:=0$} (TA);
 \path (B1) edge node[left] {$x = 1?$} node[right] {$x:=0$} (PA);

\end{tikzpicture}}
\caption{Simulation of Zero Check instruction }
\label{zero-check-c2}
\end{center}
\end{figure}
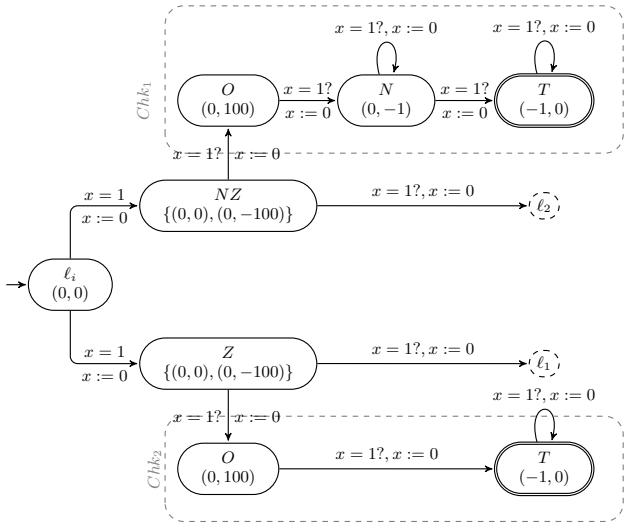  
The Zero Check Instruction is simulated using the widget shown in Figure~\ref{zero-check-c2}.
The scheduler non-deterministically guesses if $c_2$ is zero or not, by going  
to one of the locations $Z,NZ$. The values of variables $y,z$ remain unchanged.
Assume that scheduler chose $NZ$. 
The environment can now allow the scheduler to continue his simulation by either
giving  the rate $(0,0)$,  or check his guess by giving the  rate $(0, -100)$.    
If the rate $(0,0)$ is obtained, the scheduler's best strategy is to  goto
$\ell_2$, otherwise, the scheduler must go to  $Chk1$. 
The first thing that happens in the gadget $Chk1$ is the variable $z$ regaining
its previous value by adding 100. If the scheduler's choice of $c_2$ being
non-zero was incorrect, then when  the location  $T$ is reached, we have
$z=-1$. 
There is then no way to reach  the target mode $T$ with valuation $y=0,z=0$. 

In a similar way, the environment can check if the scheduler guessed that the
counter $c_2$  is zero, by giving the rate $(0, -100)$ at the location $Z$. 
In this case, the best strategy for scheduler is to goto the gadget $Chk2$. 
The first thing that happens in $Chk2$ is for variable $z$ to regain 
its previous value by adding $100$. 
If the guess of $c_2$ being $0$ was correct, then the scheduler can reach $T$
with $y=z=0$. 
However, if the guess was wrong, scheduler can obtain $z=0$, and will lose. 
%\end{enumerate}    

If the two counter machine halts, and the scheduler simulates all the
instructions correctly, then it is possible to reach a mode $T \in {\mathcal T}$
with $y=z=0$, or the mode $Halt$ is reached. 
It is straightforward to see that the location $Halt$ is reached iff the two
counter machine halts and scheduler simulates all instructions correctly.  
From the $Halt$ mode we add an outgoing transition from where it is always
possible for the scheduler  to reach a mode $T \in {\mathcal T}$ with $y=z=0$. 
The proof is now complete.
\end{proof}
The proof of the following theorem is also via a reduction from the Minsky
machines and is slightly more involved than the previous theorem. 
However, due to space limitation, we have moved the proof to the appendix.  
\begin{theorem}
  \label{reach-dec-thm}
  The robust reachability problem is undecidable for \bmms{} with $1$ variable
  and $2$ clocks. 
\end{theorem}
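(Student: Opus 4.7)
The plan is to reduce from the halting problem for two-counter Minsky machines, following the template of the preceding theorem. I would construct a structured \bmms{} with one variable $y$ and two clocks $x_1, x_2$, both reset on every discrete transition. A guard $x_1 = 1$ on every outgoing edge forces a uniform per-mode dwell-time of exactly one time unit, so that the scheduler cannot manipulate durations; this leaves the second clock $x_2$ free for use in more delicate timing gadgets. As before, the goal is to reach a distinguished HALT mode with $y = 0$.

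The central new difficulty compared to the $2$-variable proof is that a single continuous variable now has to carry the values of \emph{both} counters. My approach is to time-multiplex $y$: the simulation of each Minsky instruction is broken into sub-phases separated by integer guards on $x_2$ (e.g. $x_2 = 1$, $x_2 = 2$, $x_2 = 3$), during which $y$ plays the role of $c_1$, then of $c_2$, and is finally restored to its representation of $c_1$ before the next instruction begins. The ``save'' and ``restore'' transitions between sub-phases are the critical pieces: each is implemented by a mode whose rate polytope on $y$ is nontrivial, so that the environment can probe the scheduler by applying a large-magnitude perturbation, exactly as in the previous theorem. The scheduler can return $y$ to the intended value only by committing honestly to the counter value it is currently storing.

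The zero-check gadgets are then direct adaptations of the widget in Figure~\ref{zero-check-c2}. For each of the two counters, the scheduler first commits to a branch labelled ``zero'' or ``nonzero''; the environment may challenge by applying a large positive or negative rate on $y$; and a recovery gadget, whose duration is pinned down by guards on $x_2$, adds or subtracts exactly the compensating amount. If the commitment was correct, $y$ returns to the encoded counter value; otherwise $y$ is offset by a constant that the scheduler has no later opportunity to cancel, and the HALT mode cannot be reached with $y = 0$. Increments and decrements are implemented as one-time-unit modes with rate $+1$ or $-1$ on $y$.

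The main obstacle will be checking that the save/restore and zero-check gadgets do not interact adversely, in particular that the scheduler cannot smuggle information across sub-phases to bypass a zero-check. This is precisely where the second clock carries the argument: the guards on $x_2$ fix the calendar of sub-phases so that when $y$ is accounting for one counter it is not silently being tested against the other, and the nontrivial rate polytopes at each handover force all transitions between counter encodings to be numerically consistent. Once this layered encoding is nailed down, both directions of correctness and the polynomial bound on the size of the reduction are routine, giving undecidability of robust reachability for \bmms{} with $1$ variable and $2$ clocks.
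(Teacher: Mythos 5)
There is a genuine gap, and it is the central one: your time-multiplexing scheme has nowhere to store the inactive counter. In this model the two clocks have rate $1$ in every mode, are compared only against naturals, and in your construction are reset on every transition and used purely for timing; they therefore cannot hold an unbounded counter value. So during the sub-phase in which $y$ ``plays the role of $c_2$,'' the current value of $c_1$ resides in no component of the state, and no save/restore gadget can recover it. The paper's proof avoids this by encoding \emph{both} counters simultaneously in the single variable, as $z = 5 - \frac{1}{2^{c_1}3^{c_2}}$, so that increment/decrement of either counter becomes multiplication of the fractional offset by $\frac{1}{2}$, $2$, $\frac{1}{3}$, or $3$. This is the key idea your proposal is missing, and without it (or an equivalent packing of two unbounded integers into one real) the reduction cannot go through with one variable.

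A second, related problem: by guarding every edge with $x_1 = 1$ you force all dwell times to be exactly one unit, which removes the very mechanism the simulation needs. With unit dwell times and mode-dependent rates, $y$ can only move by fixed increments, so the fractional-encoding arithmetic is impossible. In the paper's construction the scheduler deliberately spends a \emph{nondeterministic real-valued} amount of time (ideally $\mathit{old}/2$ or $2\,\mathit{old}$) in certain locations to effect the counter update, and the environment audits that choice by offering rates $\{100,-100,0\}$ and forcing the scheduler into check gadgets ($C_{>}$, $C_{<}$, etc.) whose outcome reaches the target ball only if the delay was exact; the second clock is what pins the durations of the check gadgets to one unit each. Your challenge/recovery intuition is in the right spirit, but it needs to be attached to scheduler-chosen real delays on a single Gödel-style encoding, not to a multiplexed integer encoding with unit steps.
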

\section*{Acknowledgments} 
We thank Rajeev Alur, Salar  Moarref and Vojtech Forejt for the discussions
related to some aspects of this work.

\bibliographystyle{plain}
\bibliography{papers}

\appendix
%\subsection*{An Example Schedule of our Algorithm}
%\input{algo-example}

\subsection*{Proof of Theorem~\ref{reach-dec-thm}}
%%\begin{proof}

We prove the undecidability by constructing a structured BMS ${\mathcal H}$ with 2 clocks and one variable 
that simulates the 2 counter machine. We prove that the scheduler has a winning strategy 
to reach $x \in B_{\Delta}(p)$  iff the two counter machine halts. 
Our construction of ${\mathcal H}$ is such that we have a gadget corresponding to each instruction 
in the two counter machine.  We consider $p=7$, and 
$0 < \Delta < 1$ as given.  Modes in the target set ${\mathcal T}$ 
are denoted by a double circle. 

Let the single variable be denoted $z$, and let $x,y$ be the clocks. 
On entry into any gadget, the value of the variable $z$ is $5-\frac{1}{2^{c_1}3^{c_2}}$ where $c_1, c_2$ are the 
current values of the two counters, and the clocks $x,y$ are zero. 
\begin{enumerate}
\item Simulation of an increment instruction $\ell_i : c_1 := c_1+1$;  goto  $\ell_k$. 

The gadget simulating the increment $c_1$ instruction can be seen in Figure \ref{inc-c1}.
The gadget is entered with $z=5-\frac{1}{2^{c_1}3^{c_2}}, x=y=0$. The 
locations in the gadget contain the name of the location as well as 
the rate (possibly a set of rates, or an interval of rates)  of the variable $z$, as the case may be. 
Let us denote by $old$ the value $\frac{1}{2^{c_1}3^{c_2}}$. 
A non-deterministic amount of time is spent at location $\ell_i$. 
The ideal time to be spent here is $\frac{old}{2}$, so that 
 $z$ is updated from $5-old$ to $5-\frac{old}{2}$, reflecting 
 the correct new counter values. $y$ is reset on 
 going to location $A$. A time of one unit is spent at location $A$.
 The value of $x$ is unchanged during this process due to the self loop 
 on $A$. There are three possible rates that the environment can 
 give to the scheduler, namely 100, -100 or 0 at location $A$. 
 The scheduler can go to any of the gadgets $C_{>}, C_{<}$ or to the location  $\ell_k$.

\begin{figure}[h]
\begin{center}
\scalebox{0.7}{
\begin{tikzpicture}[->,>=stealth',shorten >=1pt,auto,node distance=1.8cm,
  semithick]
  \tikzstyle{every state}=[minimum size=3em,rounded rectangle]
  
  \node[initial,initial text={},state] (A) {$\begin{array}{c}\ell_i \\ 1 \end{array}$} ;
   \node[state] at (3.5,0) (B) {$\begin{array}{c}A \\ \{100, -100, 0\} \end{array}$} ;
\node[state, player2] at (1,3) (chk1) {$C_{>}$};
\node[state, player2] at (1,-3) (chk2) {$C_{<}$};
\node[state] at (5,2) (cont) {$\ell_k$};
      \path (A) edge node[above] {$0< x< 1?$} (B);
      \path (A) edge node[below] {$y:=0$} (B);
        \path (B) edge[loop below]  node[above] {$x=1?$}node[below]{$x:=0$} (B);
                
  \path (B) edge  node[left]{$y=1?$} node[right]{$y:=0$} (chk1);
  \path (B) edge  node[left]{$y=1?$} node[right]{$y:=0$} (chk2);
  \path (B) edge  node[left]{$y=1?$} node[right]{$x,y:=0$} (cont);
  
  \end{tikzpicture}}
\caption{Simulation of increment $c_1$ instruction }
\label{inc-c1}
\end{center}
\end{figure}
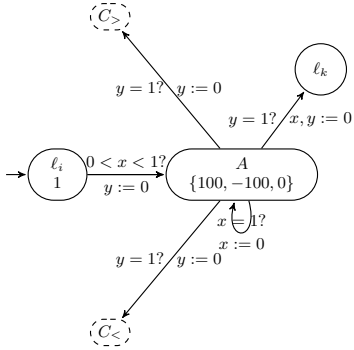

Assume that the time spent at $\ell_i$ is $\frac{old}{2}+\epsilon$ for some $\epsilon \geq 0$.
In this case, $x=\frac{old}{2}+\epsilon,y=0$ and $z=5-\frac{old}{2}+\epsilon$. The environment
can force a check of the scheduler and catch his mistake, by choosing a rate of 100 at location $A$.
 This would make $z=105-\frac{old}{2}+\epsilon$. If the scheduler wants to win, 
 he must reach a mode in $\mathcal{T}$, with the value of $z$ 
 in $B_{\Delta}(7)$. The best thing for the scheduler to do at this point
 is to choose  $C_{>}$ as his next location, since 
 it allows the value of $z$ to come back to $5-\frac{old}{2}+\epsilon$. 
 If the scheduler chooses to go to $C_{<}$, he will be worse off, making $z$ even bigger, 
 and if he chooses $\ell_k$, the environment can make sure that 
 the scheduler never wins by choosing the rate 100 in all future gadgets.  

Lets thus assume that the scheduler chooses to goto the gadget $C_{>}$ in Figure \ref{inc-c1-more}. 
On entry, we have $x=\frac{old}{2}+\epsilon$, $y=0$ and $z=105-\frac{old}{2}+\epsilon$. 
At location $O$, the value of $x$ remains unchanged, $y$ grows to 1 and is reset, 
and $z$ becomes $z=5-\frac{old}{2}+\epsilon$. At location $B$, a 
time $1-\frac{old}{2}-\epsilon$ is spent, obtaining $x=0, y=1-\frac{old}{2}-\epsilon$ 
and $z= 5-\frac{old}{2}+\epsilon-1(1-\frac{old}{2}-\epsilon)=4+2\epsilon$.
If $\epsilon >0$ and $2\epsilon > \Delta$, then the scheduler has already lost, since 
adding 3 more to $z$ at location $C$ does not help.  
Consider now the case that $\epsilon >0$ and $\Delta-2\epsilon = \kappa >0$. 
At location $D$, a time  of one unit is spent, and the environment can choose a 
rate as close to $\Delta$ as he wants : in particular, he can choose a rate 
that is larger than $\kappa$, making the value of $z=4+2\epsilon+\kappa+\zeta$, for some 
$\zeta >0$. This means the scheduler 
can never reach a point in the ball $B_{\Delta}(7)$, 
even after adding 3 to $z$ at location $C$. 

If $\epsilon=0$, then irrespective of the rate $\kappa \in (0, \Delta)$ chosen by the environment, 
the value of $z$ is $7+\kappa \in B_{\Delta}(7)$, after adding 3 to $z$ at location $C$. 
Thus, if the scheduler made no mistake, 
he reaches a point inside the chosen ball.

\begin{figure}[h]
\begin{center}
\scalebox{0.7}{
\begin{tikzpicture}[->,>=stealth',shorten >=1pt,auto,node distance=1.8cm,
  semithick]
  \tikzstyle{every state}=[minimum size=3em,rounded rectangle]
  
  \node[initial,initial text={},state] (A) {$\begin{array}{c} O\\ -100 \end{array}$} ;
        \path (A) edge[loop above]  node[above] {$x=1?$}node[below]{$x:=0$} (A);
   \node[state] at (3.5,0) (B) {$\begin{array}{c}B \\ -1 \end{array}$} ;
\path (A) edge  node[above] {$y=1?$}node[below]{$y:=0$} (B);
\node[state] at (1,-3) (C) {$\begin{array}{c}C \\ 3 \end{array}$} ;
\node[state] at (3.5,-3) (D) {$\begin{array}{c}D \\ (0, \Delta) \end{array}$} ;
\node[accepting,state] at (-2,-3) (T) {$\begin{array}{c}T \\ 0 \end{array}$} ;

\path (B) edge  node[above] {$x=1?$}node[below]{$x:=0$} (D);
\path (D) edge  node[above] {$x=1?$}node[below]{$x:=0$} (C);
\path (C) edge  node[above] {$x=1?$}node[below]{$x:=0$} (T);
  \end{tikzpicture}}
\caption{The gadget $C_{>}$}
\label{inc-c1-more}
\end{center}
\end{figure}
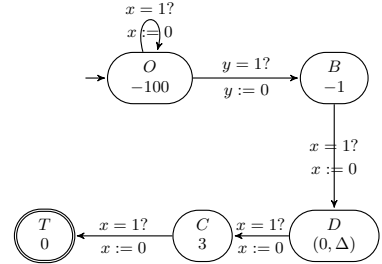

Now consider the case when the scheduler spends an amount of time $\frac{old}{2}-\epsilon$, 
for some $\epsilon \geq 0$ at location $l_i$ in Figure \ref{inc-c1}. Then we have 
$x=\frac{old}{2}-\epsilon,y=0$ and $z=5-\frac{old}{2}-\epsilon$. At location $A$ in Figure \ref{inc-c1}, as 
seen above, the environment can assign any of the rates 100, -100 or 0 to the scheduler.
If the environment wishes to catch the scheduler's mistake, a rate of -100 will be assigned. The scheduler, 
if he chooses to goto $C_{>}$ or $Go$, will surely lose, since 
the value of $z$ will decrease further, and will never reach a value in $B_{\Delta}(7)$; 
likewise, if the scheduler chooses $Go$, the environment can forever give a rate of -100. 
The best choice for scheduler is therefore, to pick $C_{<}$. The gadget $C_{<}$ is 
given in Figure \ref{inc-c1-less}.  
 
 On entry to $C_{<}$, we have $x=\frac{old}{2}-\epsilon,y=0$ and $z=-95-\frac{old}{2}-\epsilon$. 
 At location $O$, the value of $x$ remains unchanged, $y$ grows to 1 and is reset, 
and $z$ becomes $z=4-\frac{old}{2}-\epsilon$. At location $B$, a 
time $1-\frac{old}{2}+\epsilon$ is spent, obtaining $x=0, y=1-\frac{old}{2}+\epsilon$ 
and $z= 5-old$. A time $\frac{old}{2}-\epsilon$
is spent at location $C$, obtaining $z=5-old+2(\frac{old}{2}-\epsilon)=5-2\epsilon$. 
 %A time of one unit is spent at $D$, obtaining $y=0, z=7-2\epsilon$. 

At location $E$, a time of one unit is spent, and the environment can choose a rate in $(-\Delta,0)$.
Consider the case when $\epsilon >0$ and $5-2\epsilon   < 5-\Delta$. In this case, scheduler 
has already lost the game, since spending one unit at location $D$
will only give $z=7-2\epsilon < 7-\Delta$. 
  However, if $7-2\epsilon > 7-\Delta$, 
let $\kappa=\Delta - 2\epsilon >0$. Environment can then choose a rate $-\kappa-\zeta \in (-\Delta,0)$, for $\zeta > 0$.
Then $z=7-2\epsilon-\kappa+\zeta=7-\Delta-\zeta < 7-\Delta$. This would 
result in scheduler losing. However, if $\epsilon=0$, then 
for any $-\kappa \in (-\Delta,0)$, the value of $z$ is $7-\kappa \in B_{\Delta}(7)$.

\begin{figure}[h]
\begin{center}
\scalebox{0.7}{
\begin{tikzpicture}[->,>=stealth',shorten >=1pt,auto,node distance=1.8cm,
  semithick]
  \tikzstyle{every state}=[minimum size=3em,rounded rectangle]
  
  \node[initial,initial text={},state] at (-2,0) (A) {$\begin{array}{c} O\\ 99 \end{array}$} ;
        \path (A) edge[loop above]  node[above] {$x=1?$}node[below]{$x:=0$} (A);
   \node[state] at (0.3,0) (B) {$\begin{array}{c}B \\ 1 \end{array}$} ;
\path (A) edge  node[above] {$y=1?$}node[below]{$y:=0$} (B);
\node[state] at (2.5,0) (C) {$\begin{array}{c}C \\ 2 \end{array}$} ;
\node[state] at (0,-3) (D) {$\begin{array}{c}D \\ 2 \end{array}$} ;
\node[state] at (2.5,-3) (E) {$\begin{array}{c}E \\ (-\Delta,0) \end{array}$} ;
\node[accepting,state] at (-2.5,-3) (T) {$\begin{array}{c}T \\ 0 \end{array}$} ;

\path (B) edge  node[above] {$x=1?$}node[below]{$x:=0$} (C);
\path (C) edge  node[left] {$x=1?$}node[right]{$x:=0$} (E);
\path (E) edge  node[above] {$x=1?$}node[below]{$x:=0$} (D);
\path (D) edge  node[above] {$x=1?$}node[below]{$x:=0$} (T);
  \end{tikzpicture}}
\caption{The gadget $C_{<}$}
\label{inc-c1-less}
\end{center}
\end{figure}
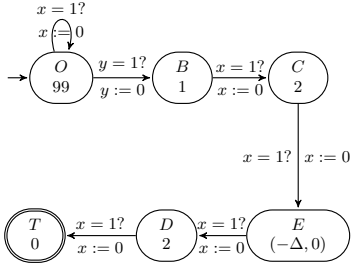

The only remaining case is when the scheduler indeed picks the correct delay of $\frac{old}{2}$ 
at location $\ell_i$ in Figure \ref{inc-c1}. In this case, as seen above, 
the rates 100, -100 chosen by the environment does not affect the scheduler. 
In both these cases, scheduler has a winning strategy of choosing to go to 
one of $C_{<}, C_{>}$ and reach a value of $z$ in the chosen ball.  
If $\epsilon=0$, and the environment picks the rate 0 at location $A$, then 
the best strategy for the scheduler is to select  $\ell_k$, which 
marks the continuation of the simulation of the two counter machine. 
As expected, we will indeed have on entry into $l_k$, $x=y=0$ and $z=5-\frac{old}{2}$,
marking the correct simulation of the increment $c_1$ instruction.
  
\item Simulation of a decrement instruction $\ell_i : c_1 := c_1-1$;  goto  $\ell_k$. 

  The construction of gadgets for the decrement instruction is similar to that of the increment 
  instruction. 
         
\begin{figure}[h]
\begin{center}
\scalebox{0.7}{
\begin{tikzpicture}[->,>=stealth',shorten >=1pt,auto,node distance=1.8cm,
  semithick]
  \tikzstyle{every state}=[minimum size=3em,rounded rectangle]
  
  \node[initial,initial text={},state] (A) {$\begin{array}{c}l_i \\ -\frac{1}{2} \end{array}$} ;
   \node[state] at (3.6,0) (B) {$\begin{array}{c}A \\ \{100, -100, 0\} \end{array}$} ;
\node[state, player2] at (1,3) (chk1) {$D_{>}$};
\node[state, player2] at (1,-3) (chk2) {$D_{<}$};
\node[state] at (5,2) (cont) {$l_k$};
      \path (A) edge node[above] {$0< x \leq 1?$} (B);
      \path (A) edge node[below] {$y:=0$} (B);
        \path (B) edge[loop below]  node[above] {$x=1?$}node[below]{$x:=0$} (B);
                
  \path (B) edge  node[left]{$y=1?$} node[right]{$y:=0$} (chk1);
  \path (B) edge  node[left]{$y=1?$} node[right]{$y:=0$} (chk2);
  \path (B) edge  node[left]{$y=1?$} node[right]{$x,y:=0$} (cont);
  
  \end{tikzpicture}}
\caption{Simulation of decrement $c_1$ instruction }
\label{dec-c1}
\end{center}
\end{figure}
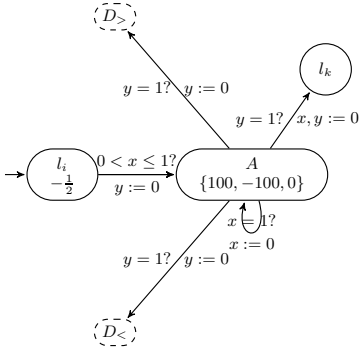

The ideal amount of time to be spent by scheduler at $\ell_i$ is $2old$.  In this case, 
$z=5-2old,x=2old,y=0$. Assume that the time spent at $\ell_i$ is 
$2old+\epsilon$, for some $\epsilon >0$. Then we have $z=5-2old-\frac{\epsilon}{2}$. 
At location $A$, the environment picks one of the 3 rates 100, -100, 0. 
If he wants to force a check on the environment, he picks the rate 100, 
making $z=105-2old-\frac{\epsilon}{2}$, $x=2old+\epsilon,y=0$. 
As seen in the case of the increment gadget, the best strategy for the scheduler is to pick the gadget $D_{>}$. 

\begin{figure}[h]
\begin{center}
\scalebox{0.7}{
\begin{tikzpicture}[->,>=stealth',shorten >=1pt,auto,node distance=1.8cm,
  semithick]
  \tikzstyle{every state}=[minimum size=3em,rounded rectangle]
  
  \node[initial,initial text={},state] (A) {$\begin{array}{c} O\\ -100 \end{array}$} ;
        \path (A) edge[loop above]  node[above] {$x=1?$}node[below]{$x:=0$} (A);
   \node[state] at (3.5,0) (B) {$\begin{array}{c}B \\ -1 \end{array}$} ;
\path (A) edge  node[above] {$y=1?$}node[below]{$y:=0$} (B);
\node[state] at (3.5,-3) (C) {$\begin{array}{c}C \\ 3 \end{array}$} ;
\node[state] at (1,-3) (D) {$\begin{array}{c}D \\ (0, \Delta) \end{array}$} ;
\node[accepting,state] at (-2,-3) (T) {$\begin{array}{c}T \\ 0 \end{array}$} ;

\path (B) edge  node[above] {$x=1?$}node[below]{$x:=0$} (C);
\path (C) edge  node[above] {$x=1?$}node[below]{$x:=0$} (D);
\path (D) edge  node[above] {$x=1?$}node[below]{$x:=0$} (T);
  \end{tikzpicture}}
\caption{The gadget $D_{>}$}
\label{dec-c1-more}
\end{center}
\end{figure}

Entry into $D_{>}$ is made with $z=105-2old-\frac{\epsilon}{2}$, $x=2old+\epsilon,y=0$. One unit 
of time is spent at $O$, obtaining $z=5-2old-\frac{\epsilon}{2}$, $x=2old+\epsilon,y=0$. 
At $B$ a time $1-2old-\epsilon$ is spent, obtaining $z=5-2old-\frac{\epsilon}{2}-1(1-2old-\epsilon)=4+\frac{\epsilon}{2}$.
A time of one unit is spent at $C$ obtaining $z=7+\frac{\epsilon}{2}$. 
Likewise, a time of one unit is spent at $D$, obtaining $z=7+\frac{\epsilon}{2}+\kappa$, where 
$\kappa \in (0, \Delta)$. If $\frac{\epsilon}{2} > \Delta$, then clearly, scheduler has already lost the game.
If $\Delta-\frac{\epsilon}{2} >0$, then $\kappa \in (0, \Delta)$ can be chosen such that $\kappa > \Delta-\frac{\epsilon}{2}$ 
such that the value of $z \notin B_{\Delta}(7)$. Note that if $\epsilon=0$, this is not possible, and 
scheduler can indeed reach  $z \in B_{\Delta}(7)$.

Now consider the case when scheduler spends a time  $2old-\epsilon$, for some $\epsilon >0$ at $\ell_i$ in Figure \ref{dec-c1}. 
 Then we have $z=5-2old+\frac{\epsilon}{2}$. Again, the environment can choose the rate -100 at location $A$, 
 and the scheduler's best strategy is to enter gadget $D_{<}$. Entry into $D_{<}$ happens with 
          $z=-95-2old+\frac{\epsilon}{2},x=2old-\epsilon,y=0$. 
     One unit of time is spent at $O$ obtaining $z=5-2old+\frac{\epsilon}{2},x=2old-\epsilon,y=0$. 
     A time $1-2old+\epsilon$  is spent at location $B$ obtaining 
     $z=5-2old+\frac{\epsilon}{2}-1(1-2old+\epsilon)=4-\frac{\epsilon}{2}$. 
     One unit of time is spent at $C$ obtaining $z=7-\frac{\epsilon}{2}$. 
    Spending one unit at location $D$  with a rate $-\kappa \in (-\Delta,0)$ 
  gives $z= 7-\frac{\epsilon}{2}-\kappa$. 
 If $\frac{\epsilon}{2} > \Delta$, then the scheduler has already lost. If $\Delta -\frac{\epsilon}{2} >0$,
 then the environment can always choose  $-\kappa \in (-\Delta,0)$ such that 
 $z=7-\frac{\epsilon}{2}-\kappa < 7-\Delta$. Clearly, if $\epsilon=0$, this 
 is not possible. and scheduler wins.

\begin{figure}[h]
\begin{center}
\scalebox{0.7}{
\begin{tikzpicture}[->,>=stealth',shorten >=1pt,auto,node distance=1.8cm,
  semithick]
  \tikzstyle{every state}=[minimum size=3em,rounded rectangle]
  
  \node[initial,initial text={},state] at (-2,0) (A) {$\begin{array}{c} O\\ 100 \end{array}$} ;
        \path (A) edge[loop above]  node[above] {$x=1?$}node[below]{$x:=0$} (A);
   \node[state] at (0.3,0) (B) {$\begin{array}{c}B \\ -1 \end{array}$} ;
\path (A) edge  node[above] {$y=1?$}node[below]{$y:=0$} (B);
\node[state] at (2.5,0) (C) {$\begin{array}{c}C \\ 3 \end{array}$} ;
%\node[state] at (2.5,-3) (D) {$\begin{array}{c}D \\ 2 \end{array}$} ;
\node[state] at (2.5,-3) (E) {$\begin{array}{c}D \\ (-\Delta,0) \end{array}$} ;
\node[accepting,state] at (-0.5,-3) (T) {$\begin{array}{c}T \\ 0 \end{array}$} ;

\path (B) edge  node[above] {$x=1?$}node[below]{$x:=0$} (C);
\path (C) edge  node[left] {$x=1?$}node[right]{$x:=0$} (E);
%\path (D) edge  node[above] {$y=1?$}node[below]{$y:=0$} (E);
\path (E) edge  node[above] {$x=1?$}node[below]{$x:=0$} (T);
  \end{tikzpicture}}
\caption{The gadget $D_{<}$}
\label{dec-c1-less}
\end{center}
\end{figure}

\item Zero Check Instruction.  $\ell_i$: if $c_2=0$ goto $\ell_1$ else goto $\ell_2$.

Figure \ref{zero-check} describes the gadget for zero check 
of counter $c_2$. No time is spent 
at location $\ell_i$, and the scheduler makes  a guess about the 
value of $c_2$. If he guesses that $c_2$ is zero, then he will choose 
the location $Z$. The environment can either allow the scheduler to go ahead 
with the simulation by choosing a rate 0, or could verify the correctness of scheduler's guess
by choosing a rate 100.  One unit of time has to be spent at the location $Z$. 
Thus, if the scheduler decides to verify 
and chooses the rate 100, the value of $z$ will be $105-old$. 
The environment will check if $old=\frac{1}{2^{c_1}}$, for some $c_1 \geq 0$. 
If the environment chooses 100, the best strategy for the scheduler 
is to choose the gadget $Z?$. Going to $\ell_1$ does not help the scheduler to win, since 
the environment can pick the rate 100 in all future choice locations, ensuring that 
the scheduler cannot win. 

\begin{figure}[h]
\begin{center}
\scalebox{0.7}{
\begin{tikzpicture}[->,>=stealth',shorten >=1pt,auto,node distance=1.8cm,
  semithick]
  \tikzstyle{every state}=[minimum size=3em,rounded rectangle]
    \node[initial,initial text={},state] at (-2,0)(A) {$\begin{array}{c}\ell_i \\ 0 \end{array}$} ;
   \node[state] at (0,2) (B) {$\begin{array}{c}NZ \\ \{100,0\} \end{array}$} ;
\node[state] at (0,-2) (C) {$\begin{array}{c}Z \\ \{100,0\} \end{array}$} ;
   \node[state] at (3,3) (B1) {$\ell_2$} ;
   \node[state,diamond] at (3,1) (B2) {$NZ?$} ;

\node[state] at (3,-1) (C1) {$l_1$} ;
   \node[state,diamond] at (3,-3) (C2) {$Z?$} ;

%\node[state, player2] at (1,-3) (chk2) {$D_{<}$};
%\node[state] at (5,2) (cont) {$l_k$};
      \path (A) edge node[left] {$x=0?$} (B);
      \path (A) edge node[left] {$x=0?$} (C);
      
      \path (B) edge node[above] {$x=1?$} node[below]{$x,y:=0$}(B1);
      \path (B) edge node[above] {$x=1?$} node[below]{$x:=0$}(B2);

      \path (C) edge node[above] {$x=1?$} node[below]{$x,y:=0$}(C1);
      \path (C) edge node[above] {$x=1?$} node[below]{$x:=0$}(C2);

       % \path (B) edge[loop below]  node[above] {$x=1?$}node[below]{$x:=0$} (B);
                
  %\path (B) edge  node[left]{$y=1?$} node[right]{$y:=0$} (chk1);
  %\path (B) edge  node[left]{$y=1?$} node[right]{$y:=0$} (chk2);
  %\path (B) edge  node[left]{$y=1?$} node[right]{$x,y:=0$} (cont);
  
  \end{tikzpicture}}
\caption{Zero Check }
\label{zero-check}
\end{center}
\end{figure}
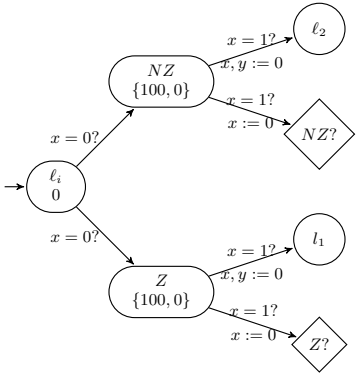

 The gadget $Z?$ given in figure \ref{z} is a check gadget which checks if $old= \frac{1}{2^{c_1}}$, for some $c_1 \geq 0$.  The gadget $Z?$ is entered with $z=105-old$, $x=y=0$. A time of unit is spent at location $L$, obtaining $z=5-old$
 and $x,y=0$. If indeed $c_2=0$, and if in addition, $c_1=0$, then $old=1$ and $z=4$. In this case, 
 the scheduler can go to the location $F$, spend a unit of time at $F$ obtaining $z=7$. 
 This leads to the location $G$, where the environment can pick any rate in $(0,\Delta)$. One unit of time 
 is spent in $G$, and in this case, we reach the mode $T$ with $z=7+\kappa \in B_{\Delta}(7)$, for 
 $\kappa \in (0,\Delta)$. Clearly, the scheduler wins here since his guess about $c_2$ being zero was correct.

In case $old=\frac{1}{2^{c_1}}$ for $c_1 >0$, then from location $M$, 
the scheduler cannot win by choosing location $F$ as the next location, 
since the value of $z$ on entry into $G$ will be 
$8-old$, where $old=\frac{1}{2^{c_1}}$ for $c_1 >0$. 
If $8-old > 7+\Delta$, then the scheduler has already lost. If 
$8-old \leq 7+\Delta$, let $8-old=(7+\Delta)-p$, for some $p \geq 0$.
Then $p=\Delta+old-1<\Delta$, since $old \leq \frac{1}{2}$.
Thus, the environment can pick a rate $p+\zeta \in (0, \Delta)$ 
such that $z=8-old+p+\zeta > 7+\Delta$.

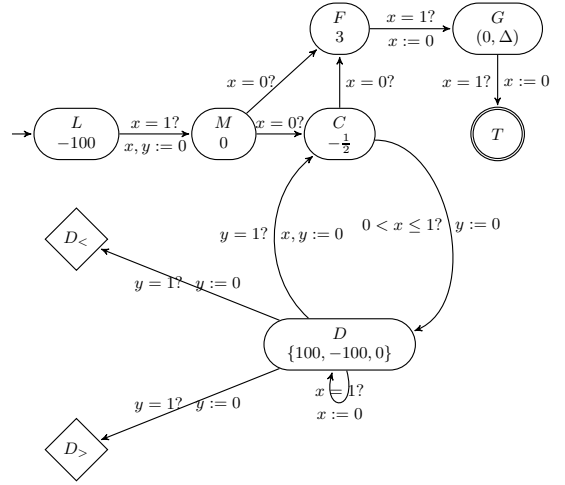
\begin{figure}[h]
\begin{center}
\scalebox{0.7}{
\begin{tikzpicture}[->,>=stealth',shorten >=1pt,auto,node distance=1.8cm,
  semithick]
  \tikzstyle{every state}=[minimum size=3em,rounded rectangle]
    \node[initial,initial text={},state] at (-2,0)(A) {$\begin{array}{c}L \\ -100 \end{array}$} ;
                 \node[state] at (0.8,0) (B) {$\begin{array}{c}M\\ 0 \end{array}$} ;
\node[state] at (3,0) (C) {$\begin{array}{c}C \\ -\frac{1}{2}\end{array}$} ;
\node[state] at (3,2) (F) {$\begin{array}{c}F \\ 3\end{array}$} ;
\node[state] at (6,2) (G) {$\begin{array}{c}G \\ (0,\Delta)\end{array}$} ;
\node[state,accepting] at (6,0) (T) {$T$} ;

      \node[state] at (3,-4) (D) {$\begin{array}{c}D \\ \{100, -100, 0\}\end{array}$} ;
     \node[state,diamond] at (-2,-2) (E1) {$D_{<}$} ;
      \node[state,diamond] at (-2,-6) (E2) {$D_{>}$} ;

%\node[state, player2] at (1,-3) (chk2) {$D_{<}$};
%\node[state] at (5,2) (cont) {$l_k$};
      \path (A) edge node[above] {$x=1?$} node[below]{$x,y:=0$} (B);
      %\path (A) edge node[left] {$x=0?$} (C);
            \path (B) edge node[above] {$x=0?$} (C);
            \path (B) edge node[left] {$x=0?$} (F);
            \path (C) edge node[right] {$x=0?$} (F);
      \path (F) edge node[above] {$x=1?$} node[below]{$x:=0$}(G);
      \path (G) edge node[left] {$x=1?$} node[right]{$x:=0$}(T);

      \path (C) edge[bend left=80] node[left] {$0< x \leq 1?$} node[right]{$y:=0$}(D);
      \path (D) edge node[left] {$y=1?$} node[right]{$y:=0$}(E1);
      \path (D) edge node[left] {$y=1?$} node[right]{$y:=0$}(E2);
      \path (D) edge [loop below] node[above]{$x=1?$} node[below]{$x:=0$}(D);
      \path (D) edge[bend left=50]  node[left]{$y=1?$} node[right]{$x,y:=0$}(C);
  \end{tikzpicture}}
\caption{The gadget $Z?$}
\label{z}
\end{center}
\end{figure}

Thus, if $c_1 > 0$, the best strategy for scheduler is to goto
location $C$. The subgraph consisting of locations $C,D$ and gadgets $D_{<}$ and 
$D_{>}$ simulates the decrement $c_1$ instruction. The ideal time to be spent 
at $C$ is $2old$ so that the value of $c_1$ is decremented by one. At location $D$,
the environment can choose a rate 0 (in which case, scheduler will go back to location $C$)
or a rate 100 (in which case scheduler will go to $D_{>}$) or a rate -100 (in which case, 
scheduler will go to $D_{<}$). In the case scheduler goes back to $C$, 
the new value of $z$ is $5-2old,x=y=0$. The ideal time to be spent at $C$ now is 
$4old$, and so on. At some point of time when $c_1=0$, we will obtain $z=4$. 
At this point, the scheduler can take the transition to $F$ from $C$, and as
seen above can reach $T$ with $z \in B_{\Delta}(7)$. If the scheduler 
goes to $F$ from $C$ when $z=5-old$ for some $0<old <1$, then 
as seen above, on entry into $G$, $z=8-old$, and the environment has a 
choice of rate in $(0, \Delta)$ such that scheduler loses.

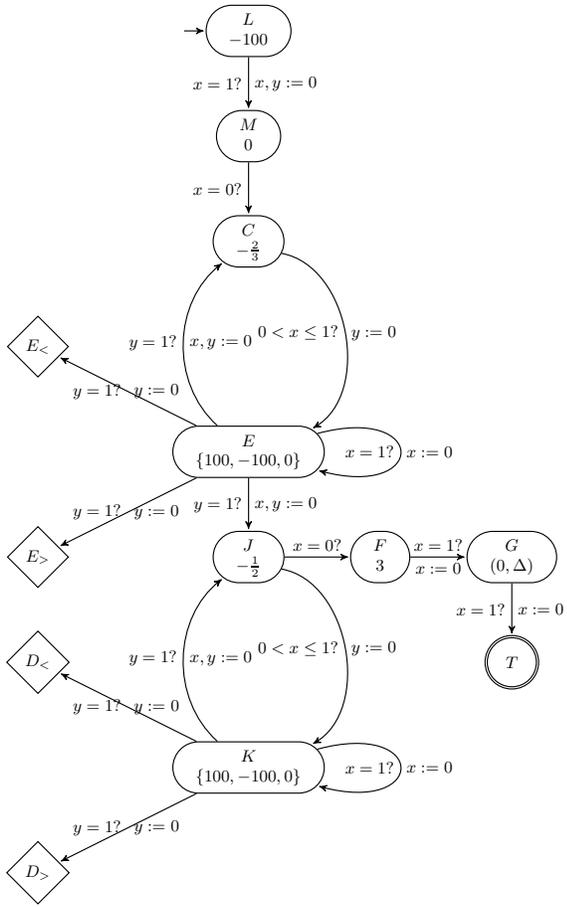
\begin{figure}[h]
\begin{center}
\scalebox{0.7}{
\begin{tikzpicture}[->,>=stealth',shorten >=1pt,auto,node distance=1.8cm,
  semithick]
  \tikzstyle{every state}=[minimum size=3em,rounded rectangle]
    \node[initial,initial text={},state] at (-4,2)(A) {$\begin{array}{c}L \\ -100 \end{array}$} ;
                 \node[state] at (-4,0) (B) {$\begin{array}{c}M\\ 0 \end{array}$} ;
\node[state] at (-4,-2) (C) {$\begin{array}{c}C \\ -\frac{2}{3}\end{array}$} ;
\node[state] at (-1.5,-8) (F) {$\begin{array}{c}F \\ 3\end{array}$} ;
\node[state] at (1,-8) (G) {$\begin{array}{c}G \\ (0,\Delta)\end{array}$} ;
\node[state,accepting] at (1,-10) (T) {$T$} ;
      \node[state] at (-4,-6) (D) {$\begin{array}{c}E \\ \{100, -100, 0\}\end{array}$} ;
      \node[state] at (-4,-8) (J) {$\begin{array}{c}J\\ -\frac{1}{2}\end{array}$} ;
      \node[state] at (-4,-12) (K) {$\begin{array}{c}K \\ \{100, -100, 0\}\end{array}$}; 
\node[state,diamond] at (-8,-10) (D1) {$D_{<}$} ;
      \node[state,diamond] at (-8,-14) (D2) {$D_{>}$} ;

\node[state,diamond] at (-8,-4) (E1) {$E_{<}$} ;
      \node[state,diamond] at (-8,-8) (E2) {$E_{>}$} ;
      \path (A) edge node[left] {$x=1?$} node[right]{$x,y:=0$} (B);
            \path (B) edge node[left] {$x=0?$} (C);
%            \path (B) edge node[left] {$x=0?$} (F);
           \path (J) edge node[above] {$x=0?$} (F);
      \path (F) edge node[above] {$x=1?$} node[below]{$x:=0$}(G);
      \path (G) edge node[left] {$x=1?$} node[right]{$x:=0$}(T);

      \path (C) edge[bend left=70] node[left] {$0< x \leq 1?$} node[right]{$y:=0$}(D);
      \path (D) edge node[left] {$y=1?$} node[right]{$y:=0$}(E1);
      \path (D) edge node[left] {$y=1?$} node[right]{$y:=0$}(E2);
      \path (D) edge [loop right] node[left]{$x=1?$} node[right]{$x:=0$}(D);
      \path (K) edge [loop right] node[left]{$x=1?$} node[right]{$x:=0$}(K);
       \path (D) edge[bend left=50]  node[left]{$y=1?$} node[right]{$x,y:=0$}(C);
      \path (D) edge  node[left]{$y=1?$} node[right]{$x,y:=0$}(J);
      \path (J) edge[bend left=70] node[left] {$0< x \leq 1?$} node[right]{$y:=0$}(K);
      \path (K) edge[bend left=50]  node[left]{$y=1?$} node[right]{$x,y:=0$}(J);
 \path (K) edge node[left] {$y=1?$} node[right]{$y:=0$}(D1);
      \path (K) edge node[left] {$y=1?$} node[right]{$y:=0$}(D2);
    
   \end{tikzpicture}}
\caption{The gadget $NZ?$}
\label{nz}
\end{center}
\end{figure}

The gadget $NZ?$ is given in Figure \ref{nz}. This is entered into when the environment 
chooses a rate of  100 at location $NZ$ in Figure \ref{zero-check}. The idea is to verify that 
indeed $c_2$ is non-zero. Scheduler has to go through the locations 
$C,E$ atleast once so that $c_2$ is decremented atleast once (hence, $c_2 \neq 0$). 
The time elapse at $C$ must be $3old$, so that $z=5-old-\frac{2}{3}(3old)=5-3old$, decrementing $c_2$.
The gadgets $E_{<}$ and $E_{>}$ can be designed similar to 
the gadgets $D_{<}$ and $D_{>}$ to catch the errors of the scheduler when the time elapse 
is $3old+\epsilon$ and $3old-\epsilon$, $\epsilon >0$.  The scheduler must visit the $C,E$ loop
$c_2$ times (provided the environment gives rate 0 at location $E$ everytime). 
When the rate 0 is given at location $E$, 
scheduler can move to location $J$ when $c_2$ becomes 0. 
If $c_1$ is zero, then we get $z=4$ at the end of the $C,E$ loop. Then from $J$, scheduler can go to location $F$ 
spending no time at $J$, and reach $T$ with $z \in B_{\Delta}(7)$. However, if $c_1 >0$, 
then scheduler visits the $J,K$ loop until $c_1=0$ (provided the environment gives a rate 0 at location $K$).
When $c_1=0$, the scheduler can move from $J$ to $F$, and reach the target with $z \in B_{\Delta}(7)$. 
 \item The Halt location : The location labeled $Halt$ has rate 1. The scheduler will reach here 
 iff the two counter machine halts, and when the scheduler has simulated all the instructions correctly. The value of $z$ will be $5-old$, where $old=\frac{1}{2^{c_1}3^{c_2}}$, for $c_1, c_2 \geq 0$. A non-deterministic amount of time 
 can be spent by the scheduler here so that $z$ will lie in $B_{\Delta}(7)$. 
 %The location $Halt$ 
 %is connected to a target mode, with guard true.    
  \end{enumerate}    
It can be proved that the scheduler has a winning strategy to reach  $z \in B_{\Delta}(7))$ iff 
the two counter machine halts. 

\end{document}

%\input{2clocks-undec}
% 
% \section{Discrete Robust Reachability}
% \label{sec:discrete}
% \input{discrete}